\def\orcidID#1{\smash{\href{http://orcid.org/#1}{\protect\raisebox{-1.25pt}{\protect\includegraphics{}}}}}
\newcommand{\cF}{{\bf\scriptstyle F}}
\newcommand{\cT}{{\bf\scriptstyle T}}
\newcommand{\semic}{,}
\newcommand{\MX}{\ensuremath{\mathcal{X}}}
\newcommand{\MY}{\ensuremath{\mathcal{Y}}}
\newcommand{\MV}{\ensuremath{\mathcal{V}}}
\newcommand{\tupleof}[1]{\langle#1\rangle}
\newcommand{\KWD}[1]{\ensuremath{\textit{#1}}\xspace}
\newcommand{\trace}{\KWD{trace}}
\newcommand{\move}{\KWD{move}}
\newcommand{\nat}{\mathbb{N}}
\newcommand{\Pos}{P\xspace}
\newcommand{\Into}{\rightarrow}
\newcommand{\G}{\mathcal{G}\xspace}
\newcommand{\A}{\mathcal{A}\xspace}
\newcommand{\Not}{\mathclose{\neg}}
\renewcommand{\And}{\mathrel{\wedge}}
\newcommand{\Or}{\mathrel{\vee}}
\newcommand{\Impl}{\mathrel{\rightarrow}}
\newcommand{\Iff}{\mathrel{\leftrightarrow}}
\newcommand{\Always}{\LTLsquare}
\newcommand{\Next}{\LTLcircle}
\newcommand{\Until}{\,\mathcal{U}\,}
\newcommand{\Eventually}{\LTLdiamonduc}
\newcommand{\Val}{\KWD{\small\sf Val}}
\newcommand{\Tb}{\mathcal{T}}
\newcommand{\Incons}{\KWD{\small\sf Incnst}}
\newcommand{\Cons}{\KWD{\small\sf Cnst}}
\newcommand{\Stt}{\KWD{\small\sf Stt}}
\newcommand{\Clo}{\KWD{\small\sf Clo}}
\newcommand{\Preclo}{\KWD{\small\sf Preclo}}
\newcommand{\SubForm}{\KWD{\small\sf SubFm}}
\newcommand{\Variants}{\KWD{\small\sf Varnt}}
\newcommand{\Ordnf}{\KWD{\small\sf Ordnf}}
\newcommand{\Positions}{\KWD{\small\sf Pos}}
\newcommand{\False}{\KWD{False}}
\newcommand{\True}{\KWD{True}}
\newcommand{\IsOpen}{\KWD{is\_open}}
\newcommand{\Tableau}{\KWD{\textup{\small\sf Tab}}}
\newcommand{\SelectApplicableRuleTo}{\KWD{select\_saturation\_rule}}
\newcommand{\wdash}{\models^{\textit{fin}}}
\newcommand{\depth}{\KWD{\sf depth}}
\newcommand{\ltl}{\KWD{\sf LTL}}
\newcommand{\Mod}{\KWD{\sf Mod}}
\newcommand{\NNF}{\KWD{\sf NNF}}
\newcommand{\DNF}{\KWD{\sf DNF}}
\newcommand{\TNF}{\KWD{\sf TNF}}
\newcommand{\Lit}{\mathcal{L}}
\newcommand{\Fut}{\mathcal{F}}
\title{A Tableau Method for the Realizability and Synthesis of Reactive Safety Specifications
}
\newcommand{\montseOrcid}{\orcidID{0000-0001-5627-501X}}
\newcommand{\paquiOrcid} {\orcidID{0000-0001-7872-2685}}
\newcommand{\cesarOrcid} {\orcidID{0000-0003-3927-4773}}
\author{Montserrat Hermo\inst{1}\montseOrcid \and
  Paqui Lucio\inst{1}\paquiOrcid \and
   C\'esar S\'anchez\inst{2}\cesarOrcid}
\institute{University of the Basque Country, San Sebasti\'an, Spain
  \and
  IMDEA Software Institute, Madrid, Spain}
\begin{document}

\maketitle


\begin{abstract}
   We introduce a tableau decision method for deciding realizability of
  specifications expressed in a safety fragment of LTL that includes
  bounded future temporal operators.
  Tableau decision procedures for temporal and modal logics have been
  thoroughly studied for satisfiability and for translating temporal
  formulae into equivalent B\"uchi automata, and also for model
  checking, where a specification and system are provided.
  However, to the best of our knowledge no tableau method has been
  studied for the reactive synthesis problem.

  Reactive synthesis starts from a specification where propositional
  variables are split into those controlled by the environment and
  those controlled by the system, and consists on automatically
  producing a system that guarantees the specification for all
  environments.
  Realizability is the decision problem of whether there is one such
  system.


  In this paper we present a method to decide realizability of safety
  specifications, from which we can also extract (i.e. synthesize) a
  correct system (in case the specification is realizable).
  Our method can easily be extended to handle richer domains
  (integers,  etc) and bounds in the temporal operators in
  ways that automata approaches for synthesis cannot.
\end{abstract}

\section{Introduction}
\label{sec:intro}

Linear Temporal Logic (LTL)~\cite{pnueli77temporal} is modal logic for
expressing correctness properties of reactive systems.
Verification is the problem of deciding, given a system $S$ and an LTL
specification $\varphi$, whether $S$ models $\varphi$.
Reactive
synthesis~\cite{pnueli89onthesythesis,pnueli89onthesythesis:b}, first
studied by Pnueli and Rosner in 1989, is the problem of automatically
producing a system $S$ from a given LTL specification $\varphi$ which
guarantees that $S$ models $\varphi$.
In the reactive synthesis problem, the atomic variables are split into
those variables controlled by the environment and the rest, controlled
by the system.

In the last two decades the reactive synthesis problem has been
studied
extensively (see e.g.~\cite{bohy12acacia,ehlers11unbeast,FBS13,finkbeiner14detecting,jobstmann07anzu}).
The approaches can be classified into three categories: (1)
approaches based on games~\cite{buchi69solving}, (2) approaches
that cover a strict fragment of LTL, like GR(1) specifications ~\cite{piterman06synthesis,bloem12synthesis}; (3) bounded
synthesis~\cite{schewe07bounded} explore the problem fixed bound on the size of the system.
In all these cases, the state space of the game arena is either
captured by an automaton or explored explicitly or symbolically.
In this paper we study a deductive alternative: a tableau method for
realizability and synthesis for the class of safety specifications.

Tableaux methods were originally created \cite{Smullyan1963} as very
intuitive deduction procedures for classical propositional and
first-order logic.
A tableau is a tree that performs a symbolic handling of formulas
according to very simple and intuitive rules based on semantics,
model-theory and proof-theory.
Indeed, classical tableaux corresponds to deductive proofs in
Gentzen's sequent calculus.
Tableaux has been evolving for years to decide the satisfiability
problem of many other non-classical logics (modal, multi-valued,
temporal, etc.), in some cases combining with other formal structures,
such as e.g. different kinds of automata.
 
Traditional tableau techniques for satisfiability do not directly work
for realizability.
As far as we know, tableau techniques has not been yet applied for
solving the realizability problem of temporal formulas, beyond its
auxiliary use in automata-based methods \cite{brenguier14abssynthe}.
We present in this paper a tableau based methods for the realizability
of reactive safety specifications.
To illustrate the problem, consider the following three safety
formulas where $e$ is an environment variable $e$ and $s$ is a system
variable:
\begin{eqnarray*}
\psi_1 & = & s \Iff  e\\
\psi_2 & = & s \Iff  \Next e\\
\psi_3 & = & \Next s \Iff  \Next e
\end{eqnarray*}
The safety specifications $\Always\psi_1$ and $\Always\psi_3$ are
realizable.
To see this, consider the strategy where the system simply mimics in
$s$ the value observed in $e$.
On the other hand $\Always\psi_2$ is not realizable, as the system is
required to guess the next value of $e$ which requires clairvoyance.
A temporal tableau for $\Always\psi_1$, first decomposes the formula
into $s\Iff e,\Next\Always(s\Iff e)$ and, second, splits two branches
for the two cases $s,e,\Next\Always(s\Iff e)$ and
$\Not s, \Not e,\Next\Always(s\Iff e)$.
In both nodes there is not more to do than `jumping' to the next
state.
Hence in both branches we get a loop to the root $\Always(s \Iff e)$.
Therefore, each branch represents a model of the initial formula.
Tableau structure with two leaves pointing to the root can be
interpreted as the winning strategy of the system that witnesses the
realizability of the safety specification.
However, following the same steps $\Always\psi_2$ fails to capture the
unrealizability of this specification.
Indeed, the formulas that successively appear in just one branch,
where new-line represents `jump' to next state, are
\[
\begin{array}{l}
\Always(s\Iff \Next e), s\Iff \Next e,\Next\Always(s\Iff \Next e), s,\Next e\\
e, \Always(s\Iff \Next e), s\Iff \Next e, \Next\Always(s\Iff \Next e), s, \Next e\\
e, \Always(s\Iff \Next e)\\
\end{array}
\]
The other branches are similar.
The branch above should be seen as a strategy that, no matter what the
environment does at the start, the system chooses to satisfy $s$,
which in turn forces the environment to satisfy $e$ in the next-step.
Obviously, since the environment could freely choose $\Not e$ this
tableau branch should be closed at node $e, \Always(s\Iff \Next e)$.
A branch closing condition would successfully solve this example:
whenever it is the environment turn, if there is a literal on an
environment variable, the branch is closed (the environment is the
winner by choosing a contradiction).
This closing condition however fails to capture the realizability of
safety specifications such as $\Always\psi_3$.
The tableau for $\Always(\Next s\Iff \Next e)$ would be formed by the
following two closed branches:
\[
\begin{array}{l}
\Always(\Next s\Iff \Next e), \Next s\Iff \Next e,\Next\Always(\Next s\Iff \Next e), \Next s,\Next e\\
s, e, \Always(\Next s\Iff \Next e)\\[3mm]
\Always(\Next s\Iff \Next e), \Next s\Iff \Next e,\Next\Always(\Next s\Iff \Next e), \Next \Not s,\Next \Not e\\
\Not s, \Not e, \Always(\Next s\Iff \Next e)
\end{array}
\]
According to our previous closing condition, the first branch is
closed by $e$ and the second by $\Not e$, whereas the safety
specification $\Always(\Next s\Iff \Next e)$ is realizable.
The problem here is in the splitting of the two cases
$\Next s,\Next e$ and $\Next \Not s,\Next \Not e$.
Intuitively, the environment should be free to choose at the second
state (really, at any state) either $e$ or $\Not e$, but in the second
state of the two branches above the choice is already taken, because
the splitting of the `strict' future formula $\Next s\Iff \Next e$ has
been already made at the first state.
To overcome this problem, we introduce in this paper the {\it terse
  normal form} of formulas which prevents these incorrect splittings
on formulas that reveal future choices too early.
Intuitively, at the second state, our tableau will have just one node:
\begin{equation}
\label{eq-node}
(s \And e) \Or (\Not s \And \Not e), \Always(\Next s\Iff \Next e)    
\end{equation}
This node has two children (one for each choice of the environment):
$$
\begin{array}{l}
e, s,\Next(( s \And  e) \Or (\Not s \And  \Not e)), \Next\Always(\Next s\Iff \Next e)\\[3mm]
\Not e, \Not s, \Next(( s \And  e) \Or ( \Not s \And  \Not e)), \Next\Always(\Next s\Iff \Next e)
\end{array}
$$
Then, `jumping' to the next-state from both nodes produces again node
($\ref{eq-node}$).
This tableau witnesses (see Ex. \ref{ex:intro-example}) that $\Always\psi_3$ is
realizable.
In summary, classical tableau rules for the logical
operators does not provide a correct decision procedure for
realizability.

In this paper we present a tableau method that decides the realizability
problem for a fragment of LTL which includes temporal operators of the form $\Always_{[n,m]}$ and $\Eventually_{[n,m]}$ (for $n,m\in\nat$ such that $n \leq m$).
Although, from the semantic point of view, these operators can be seen as a short-hand for a boolean combinations of formulas using only the "next time" 
operator ($\Next$), the compact notation is effectively exploited in tableau deductions in a more efficient way that prevents their unfoldings.
As an example of such benefits consider the safety formula $\psi_4 = e \Impl \Always_{[0,2^{100}]}s$. A tableau for $\Always\psi_4$ according to the rules we propose, splits two branches for the
two cases $(\neg e \And \Next \Always\psi_4)$ and $(e \And  s \And \Next \Always_{[0,2^{100}-1]}s \And \Next \Always\psi_4)$. The first branch ends up jumping to the next state, which loops to the root $\Always\psi_4$. The second branch jumps to the next state with the formula   $( \Always_{[0,2^{100}-1]}s \And \Always\psi_4)$ and again  two new branches start from here. Each of these branches ends up with a loop to its previous state and the tableau finishes certifying that $\Always\psi_4$ is realizable.

 This example illustrates a crucial difference between automata and tableaux: the tableau ``deduction'', after checking two successive states, is able to decide the realizability of $\Always\psi_4$, whereas automata techniques require an
explicit blasting. 
Throughout the paper we provide more examples remarking this point.
As far as we know, there is no previous temporal tableaux for solving the realizability problem of this logic, so no technique for reactive realizability/synthesis enjoys the benefits of deductive temporal tableaux. Though this paper is mainly dedicated to the realizability problem, our tableaux provides an easy way for synthesis of both kind of certificates: the realizability strategy and the counterexample in the case of unrealizability. Indeed,
the constructed tableau is really a deductive proof of the un/realizability of the input formula.

In summary, the contributions of this paper are:
\begin{compactitem}
\item The introduction of a new normal form, called terse normal form
  that captures precisely, in a logical form, the timely choices of
  the environment and the responses by the system.
\item A tableau method including all the deductive rules to build the
  tableau graph and rules to close the branches, with success and with
  failure. Some of these rules are non-deterministic, which enable
  heuristics for exploring the graph in different forms (depth-first,
  breadth first, etc).
\item Sound and completeness proofs for the tableau method.
\end{compactitem}

\subsubsection*{Related Work.}

Current approaches to reactive
synthesis~\cite{bohy12acacia,ehlers11unbeast,FBS13,finkbeiner14detecting,jobstmann07anzu}
can be classified into three categories: (1) approaches based on
games~\cite{buchi69solving} that translate $\varphi$ into a
deterministic automaton and determine the winner of the game played on
the state graph of the automaton, (2) approaches that cover a strict
fragment of LTL, like GR(1) specifications (the simpler fragment of
general reactivity of rank
1)~\cite{piterman06synthesis,bloem12synthesis}; (3) bounded
synthesis~\cite{schewe07bounded}, which process a set of constraints
that characterizes all correct systems up to fixed bound on the size.
Modern implementations of the the game approach use a symbolic
representation~\cite{jobstmann07anzu} of the arena of the game, or use
decision procedures based on SAT or QBF~\cite{bloem14sat}.
Existing tools for full LTL synthesis, including
Unbeast~\cite{ehlers11unbeast} and Acacia+~\cite{bohy12acacia} are
based on the bounded synthesis idea, where different encoding of the
(decidable) constraint for a given bound have been
proposed~\cite{finkbeiner07smt,schewe07bounded,FBS13} (first-order
logic module finite integer arithmetic), using lazy constraint
generation~\cite{finkbeiner12lazy}, rewriting and modular
solving~\cite{khalimov13towards}, SAT
encodings~\cite{shimakawa15reducing,finkbeiner16bounded}, QBF
encoding~\cite{finkbeiner15bounded}, etc.
Since 2014, the reactive synthesis competition
(SYNTCOMP)~\cite{syntcomp,jacobs17reactive} tries to compare the
performance of synthesis tools against different benchmark problems.

Reactive synthesis for full LTL specifications is
2EXPTIME-complete~\cite{pnueli89onthesythesis}, so synthesis for more
restricted fragments of LTL that exhibit better complexity has been
identified.
For example, synthesis of GR(1) specifications---the simpler fragment
of general reactivity of rank 1---has an efficient polynomial time
symbolic synthesis algorithm~\cite{piterman06synthesis,bloem12synthesis}.
GR(1) synthesis has been used in various application domains and
contexts, including robotics~\cite{kressgazit09temporal}, event-based
behavior models~\cite{dippolito13synhesizing}, etc.
It is easy to see that even though GR(1) covers the safety fragment of
LTL, translating specifications into the language that we consider in this paper
involves at least an exponential blow-up in the worst case.
All the methods listed above correspond to an algorithmic exhaustive
exploration of the game arena.
In contrast, we introduce in this paper a deductive tableau method
for realizability.

  %
  %
  The first tableau method \cite{wolper1985tableau} for the
  satisfiability problem of LTL is not purely tree-shape, instead they
  require the constructions of a graph that should be explored in a
  second pass.
  This inspired a connection with Büchi automata
  \cite{wolper1985tableau,vardi1994reasoning} that is in the origin of
  many automata-based decision procedures \cite{DDMR08c} for LTL
  satisfiability and model-checking problems.
  The use of an auxiliary structure raised two difficulties: One is
  the construction of a big structure and the other is the lost of the
  original correspondence with sequent proofs that could certify the
  result.
  Some alternative ideas
  (e.g. \cite{schwendimann1998new,gore2009optimal}) has been developed
  to explore on-the-fly the graph (or automaton) thorough its
  construction instead in a second pass, and also for constructing
  one-pass tableaux that preserves the correspondence with sequent
  proofs (cf. \cite{GHL09}).

\subsubsection{Outline of the paper.}
The rest of the paper is structured as follows.
Section~\ref{sec:safety} presents the fragment of LTL for safety
specifications.
Section~\ref{sec:tnf} introduces the terse normal form.  which is
allows the realizability tableaux, presented in
Section~\ref{sec:tableaux}.
Several examples are shown in Section~\ref{sec:examples}.
Section~\ref{sec:correctness} includes and proofs of correctness and
finally, Section~\ref{sec:conclusions} concludes.


\section{Safety Specifications and Games}
\label{sec:safety}

In this section, we first define a sublanguage of \ltl to express
safety properties.
Our language is aimed at industrial specifications of reactive
systems.
We present a method to decide realizability of these specifications in
Section~\ref{sec:tableaux}.
We will introduce safety games and games for safety specifications,
which will serve as method to prove correctness of our approach.
We start by revisiting \ltl.

\subsection{Linear Temporal Logic}
\label{subsect:LTL}

Linear Temporal Logic (\ltl)~\cite{pnueli77temporal}, extends
propositional logic by introducing temporal operators $\Next$ (next)
and $\Until$ (until).
\ltl formulas are interpreted in traces on their set of (occurring) variables $\MV$.
A {\it trace} $\sigma$ is a denumerable sequence of states $\sigma_0,\sigma_1,\sigma_2,\dots$ where each state $\sigma_i$ is a valuation from $\MV$ to $\{0,1\}$. 
A valuation $v$ on a set of variables $\MV$ is a function that assigns to each $p\in\MV$ a boolean value $\{0,1\}$. We denote by $\Val(\MV)$ the set of all valuations on $\MV$.
For any $i\geq 0$, $\sigma^i$ denotes the trace $\sigma_i,\sigma_{i+1},\dots$ and $\sigma^{<i}$ denotes the finite sequence
$\sigma_0,\dots,\sigma_{i-1}$. Therefore, $\sigma = \sigma^{<i}\cdot\sigma^i$. We also use $\sigma^{i..j}$ for the finite sequence between $i$ and $j$ (including $\sigma_i$ but not $\sigma_j$), which can be defined as $(\sigma^i)^{<j}$. 
Given a trace $\sigma$ and an \ltl formula $\phi$ the meaning of $\sigma\models \phi$ is  inductively defined as follows:
\[
\begin{array}{lcl}
\sigma \models p  &\mbox{ iff } & \sigma_0(p) = 1\\
\sigma\models\neg\phi &\text{iff} & \sigma\not\models\phi\\
\sigma\models \phi\And\psi &\text{iff}& \sigma\models\phi \mbox{ and } \sigma\models\psi \\
\sigma \models\Next \phi    &\mbox{ iff }& 
\sigma^{1} \models \phi
\end{array}
\]
\[
\begin{array}{lcl}
 \sigma\models \phi\Until\psi  &\mbox{ iff } &
  \sigma^j \models \psi  \textrm{ for some } j  \textrm{ such that }  0 \leq j \textrm{ and } \\ 
  & & \sigma^i \models \phi  \textrm{ for all } i  \textrm{ such that }  0 \leq i < j.
\end{array}
\]

There are standard abbreviations of constants propositions ($\cT$ for truth and $\cF$ for falsehood), classical connectives (such as $\vee,\Impl$ and $\Iff$) and also of temporal operators such as `eventually': $\Eventually \phi$ for $\cT\Until\phi$ and `always': $\Always\phi$ for
$\Not(\cT\Until\Not\phi)$.

A set of formulas is (syntactically) consistent if and only if it does not contain a formula and its negation.
If $\sigma\models \phi$ then  we say that $\sigma$ models (or is a model of) $\phi$.
We denote $\Mod(\phi)$ to the set of all traces that are models of $\phi$.
A finite set of formulas is intentionally confused with the conjunction of all it members.
Therefore, for a set of formulas $\Phi$, $\sigma\models\Phi$ if and only if $\sigma\models\phi$ for all $\phi\in\Phi$ and $\Mod(\Phi)$ denotes the set of traces that are models of all $\phi\in\Phi$.
A set of formulas $\Phi$ is satisfiable if and only if there exists at least one $\sigma$ such that $\sigma\models\Phi$, otherwise $\Phi$ is unsatisfiable. Two formulas $\phi$ and $\psi$ are logically equivalent, denoted $\phi\equiv\psi$, if and only if $\Mod(\phi) = \Mod(\psi)$.

%
Any sublanguage $\mathcal{G}$ of LTL is {\it safety} whenever for any formula $\phi\in\mathcal{G}$  and for any trace $\sigma$, if $\sigma\not\models\phi$ then there exists some $i > 0$ such that $\sigma^{<i}\cdot\sigma'\not\models\phi$ for any trace $\sigma'$. In words,  any trace with prefix $\sigma^{<i}$ is not a model of $\phi$. In such case, $\sigma^{<i}$ is called a {\it witness of the violation} of $\phi$ (a.k.a. a {\it bad prefix} for $\phi$).

\subsection{Safety Specifications}
\label{subsec:safety}
Our safety specifications are constructed over some set of variables $\MX\cup\MY$ that is partitioned into two subsets: $\MX$ is the set of variables that are controlled by the environment and $\MY$ is the set of variables controlled by the system.
In what follows, each variable in $\MX$ is marked with $e$ as a subscript (e.g. $sensor_e$ or $p_e$). Variables in $\MY$ do not have any subscript.

Boolean formulas are constructed from atoms that could be either propositional variables or equations $x = c$ where $x$ is a variable of an enumerated type $T$ and $c$ is a constant value of type $T$. Compound formulas are constructed
using the classical boolean connectives ($\Not, \And, \Or, \Impl, \Iff$). 
We also use the boolean constants $\cT$ (for truthness) and $\cF$ (for falsehood) as atomic formulas.
More precisely, the grammar for any boolean formula $\eta$ is:
\[
\begin{array}{lcl}
 a & ::= & p \mid x = c \mid \cT \mid \cF\\
\beta & :: = &  {\it a} \mid \Not \beta \mid \beta \And \beta \mid \beta \Or  \beta \mid \beta \Impl \beta \mid \beta \Iff \beta
\end{array}
\]
where  $p$ is a boolean variable, $x$ has enumerated type  $T$ and $c\in T$.
The formulas of the form $a, \Not a$, $x = c$ and $\Not(x = c)$ are called literals. 
%
We consider that the valuations in traces assign to each variable $x$ of enumerated type $T$ a value in $T$, which gives the atom $x = c$ a truth value.

We consider a fragment of \ltl specifications of the form
$\alpha\And\Always\psi$ where $\alpha$ is an initial formula and
$\psi$ is a safety formula.
The formula $\alpha$, called the {\it initial formula}, is a boolean
constraint that captures the initial states
The formula $\Always\psi$ {\it safety constraint} that restricts the
transition relation by means of temporal operators. Here, $\psi$ is a
temporal formula over $\MX\cup\MY$, called the {\it safety formula},
whose grammar is defined as follows:
\[
\begin{array}{lcl}
\eta & ::= & \beta \mid \Not\eta \mid \Next\eta \mid \Always_{I}\eta\mid \Eventually_{I}\eta\mid\eta\Or\eta\mid\eta\And\eta\mid\eta\Impl\eta\mid\eta\Iff\eta  
\end{array}
\]
where $I = [n,m]$ for 
Therefore, safety formulas are constructed adding to the boolean connectives the temporal operators $\Eventually_{I}$ and $\Always_{I}$. The interpretation of this operators in traces are
\[
\begin{array}{lcl}
 \sigma\models \Always_{[n,m]}\eta  &\mbox{ iff } &
\sigma^j \models \eta  \textrm{ for all } j  \textrm{ such that }  n \leq j \leq m\\ 
 \sigma \models \Eventually_{[n,m]}\eta  &\mbox{ iff } &
 \textrm{there exists } j  \textrm{ such that }  n \leq j \leq m \textrm{ such that }  \sigma^j \models \eta 
\end{array}
\]
Note that $\Eventually_{I}$ (resp. $\Always_{I}$) can be expressed as a disjunction (resp. conjunction) of formulas that only use $\Next$ as temporal operator.
A trace $\sigma$ models $\alpha\And\Always\psi$ whenever
$ \sigma_0(\alpha) = 1$ and $\sigma^k\models \psi$ for all $k\geq0$.

It is easy to see that any safety formula is logically equivalent to a formula in {\it Negation Normal Form} (\NNF), i. e. a formula in the following grammar:
\[
\begin{array}{lcl}
 \ell & ::= & p \mid \Not p \mid x = c \mid \Not (x = c) \mid \cT \mid \cF\\
 \eta & ::= & \ell \mid \Next \eta \mid \Always_{I} \eta\mid \Eventually_{I} \eta\mid \eta\Or \eta\mid \eta\And\eta  
\end{array}
\]
It suffices to use, besides
classical logical equivalences on boolean connectives, the following equivalences on temporal connectives:
$$
\begin{array}{lll}
\Not \Next \eta  \equiv  \Next \Not \eta&
\Not \Eventually_I \eta  \equiv  \Always_I \Not \eta&
\Not \Always_I \eta  \equiv  \Eventually_I \Not \eta
\end{array}
$$
In what follows, we assume that safety formulas are translated to their equivalent formulas in \NNF, $\ell$ stands for a literal, and, for $i\in\nat$, $\Next^i$ abbreviates a sequence of  operators $\Next$ of length $i$.

The {\it (temporal) depth} of a safety formula is the maximum number
of nested $\Next$ operators in the formula, where $\Always_I$ and
$\Eventually_I$ are interpreted as being expressed in terms of $\Next$. Formally,
\begin{equation*}
\depth(\eta) = \left\{
        \begin{array}{ll}
            0 & \quad \eta \text{ is a literal}\\[1mm]
            max\{\depth(\eta_1), \depth(\eta_2) \} & \quad 
            \eta = \eta_1 \Or  \eta_2  \mbox{ or }   \eta = \eta_1 \And \eta_2\\[1mm]
            1 + \depth(\eta_1) & \quad \eta = \Next \eta_1 \\[1mm]
            m + \depth(\eta_1) & \quad 
            \eta = \Always_{[n,m]} \eta_1  \mbox{ or }   \eta = \Eventually_{[n,m]} \eta_1
        \end{array}
    \right.
\end{equation*}
%

%
\noindent The following result states that the language of our safety formulas is a safety sublanguage of \ltl.
\begin{lemma}
\label{prop-witness1}
For any safety formula $\eta$ and any trace $\sigma$, $\sigma\not\models\eta$ if and only if exists $d \leq\depth(\eta)$ such that $\sigma^{<d+1}$ is a witness of the violation of $\eta$.
\end{lemma}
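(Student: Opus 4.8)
The plan is to reduce the lemma to a single \emph{bounded-horizon} property: the truth value of a safety formula $\eta$ on a trace $\sigma$ depends only on the finite prefix $\sigma_0,\dots,\sigma_{\depth(\eta)}$. Concretely, I would first establish the auxiliary claim that for any two traces $\sigma,\tau$ with $\sigma_i=\tau_i$ for all $0\le i\le\depth(\eta)$, one has $\sigma\models\eta$ if and only if $\tau\models\eta$. Once this claim is in hand, both directions of the stated biconditional follow almost immediately with the choice $d=\depth(\eta)$.

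For the ($\Leftarrow$) direction, suppose $\sigma^{<d+1}$ is a witness of the violation of $\eta$ for some $d\le\depth(\eta)$. By definition every trace with prefix $\sigma^{<d+1}$ fails $\eta$; instantiating the continuation with $\sigma^{d+1}$ and using the decomposition $\sigma=\sigma^{<d+1}\cdot\sigma^{d+1}$ yields $\sigma\not\models\eta$. This direction needs no induction. For the ($\Rightarrow$) direction, assume $\sigma\not\models\eta$ and set $d=\depth(\eta)$. For an arbitrary trace $\sigma'$, the trace $\sigma^{<d+1}\cdot\sigma'$ agrees with $\sigma$ on states $0,\dots,d=\depth(\eta)$, so by the bounded-horizon claim it also fails $\eta$. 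Hence $\sigma^{<d+1}$ is a witness of violation with $d\le\depth(\eta)$, as required.

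It remains to prove the bounded-horizon claim, which I would do by structural induction on the \NNF grammar of $\eta$. The literal case is immediate since a literal reads only $\sigma_0$, and the boolean cases $\eta_1\Or\eta_2$ and $\eta_1\And\eta_2$ follow because agreement up to $\depth(\eta)=\max\{\depth(\eta_1),\depth(\eta_2)\}$ entails agreement up to each $\depth(\eta_i)$, to which the induction hypothesis applies. The temporal cases carry the real work: for $\Next\eta_1$ I would shift to $\sigma^1,\tau^1$ and verify that agreement up to $\depth(\eta)=1+\depth(\eta_1)$ becomes agreement of the shifted traces up to $\depth(\eta_1)$; for $\Always_{[n,m]}\eta_1$ (and symmetrically $\Eventually_{[n,m]}\eta_1$) I would, for each $j$ with $n\le j\le m$, apply the induction hypothesis to $\sigma^j,\tau^j$, using that every index $j+k$ with $k\le\depth(\eta_1)$ is bounded by $m+\depth(\eta_1)=\depth(\eta)$, so the required states coincide, and then conclude for the conjunction (resp.\ disjunction) over the range $[n,m]$.

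The main obstacle is precisely this index bookkeeping. One must check that the definition of $\depth$ for the bounded operators, namely $m+\depth(\eta_1)$ using the \emph{upper} endpoint $m$, is exactly what guarantees the shifted subtraces $\sigma^j,\tau^j$ agree on enough states for the induction hypothesis to fire, and that nothing is lost as $j$ ranges over all of $[n,m]$. Everything else in the argument is routine once this alignment between the semantics of the bounded operators and the recursive definition of $\depth$ is pinned down.
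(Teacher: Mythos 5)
Your proof is correct. The paper's own proof is only a sketch: it dispatches the backward direction by definition (exactly as you do) and asserts that the forward direction follows ``by structural induction on $\eta$'', i.e.\ an induction that directly constructs, for each $\eta$ with $\sigma\not\models\eta$, a violating prefix of some length $d+1$ with $d\le\depth(\eta)$, combining witnesses of subformulas (maxima over the two disjuncts of an $\Or$, over the window $[n,m]$ of an $\Eventually$, shifts for $\Next$, and so on). You instead isolate a prefix-determinacy lemma---two traces agreeing on positions $0,\dots,\depth(\eta)$ give $\eta$ the same truth value---prove that by structural induction, and read off both directions of the biconditional with the single choice $d=\depth(\eta)$. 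The inductive content is the same index arithmetic in both routes (in particular the role of the upper endpoint in $\depth(\Always_{[n,m]}\eta_1)=\depth(\Eventually_{[n,m]}\eta_1)=m+\depth(\eta_1)$, which you correctly identify as the crux), but your induction hypothesis is a symmetric equivalence between traces rather than an existential over witness lengths, which makes the step cases cleaner: there is no bookkeeping of several witness depths $d_j$ and their maxima, and the claim is a reusable invariance fact. What the paper's direct induction buys in exchange is potentially shorter witnesses, since it can track a tight $d$ per subformula, whereas your argument always outputs the maximal prefix of length $\depth(\eta)+1$; this difference is immaterial for the lemma as stated, which only requires the existence of some $d\le\depth(\eta)$.
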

\begin{proof}
The backward direction holds by the definition of witness of the violation of $\eta$.
The proof of the forward direction is done by structural induction on $\eta$.\qed
\end{proof}

\begin{corollary}
  \label{cor:witness}
For any safety specification $\varphi = \alpha\And\Always\psi$ and any trace $\sigma$, $\sigma\not\models\varphi$ if and only if one of the following holds:
\begin{enumerate}[(i)]
    \item $\sigma_0$ is a witness of the violation of $\alpha\And\psi$, or
    \item there exists some $i$ and some $d \leq\depth(\psi)$ such that $\sigma^{i..d+1}$ is a witness of the violation of $\psi$.
\end{enumerate}
\end{corollary}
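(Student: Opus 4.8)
The plan is to derive the statement directly from the semantics of $\varphi=\alpha\And\Always\psi$ together with Lemma~\ref{prop-witness1}, applied separately to $\alpha\And\psi$ and to $\psi$. First I would unfold the definition of satisfaction: $\sigma\models\varphi$ holds exactly when $\sigma_0\models\alpha$ and $\sigma^k\models\psi$ for every $k\geq 0$. Contrapositively, $\sigma\not\models\varphi$ iff $\sigma_0\not\models\alpha$ or $\sigma^k\not\models\psi$ for some $k\geq 0$. Since $\alpha$ is a boolean formula, its truth depends only on $\sigma_0$, so I would regroup the $k=0$ instance of $\Always\psi$ with $\alpha$ and work with the equivalent form: $\sigma\not\models\varphi$ iff either $\sigma\not\models\alpha\And\psi$, or $\sigma^k\not\models\psi$ for some $k\geq 1$.

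For the forward direction I would do a case split. If $\sigma\not\models\alpha\And\psi$, I split once more: when $\sigma_0\not\models\alpha$, the fact that $\alpha$ is boolean (hence fixed by $\sigma_0$ alone), together with the observation that any trace violating $\alpha$ also violates $\alpha\And\psi$, makes the single state $\sigma_0$ a witness of violation of $\alpha\And\psi$, which is case~(i); when instead $\sigma_0\models\alpha$ but $\sigma=\sigma^0\not\models\psi$, Lemma~\ref{prop-witness1} applied to $\sigma$ and $\psi$ yields some $d\leq\depth(\psi)$ with $\sigma^{0..d+1}$ a witness of violation of $\psi$, which is case~(ii) with $i=0$. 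In the remaining situation there is some $k\geq 1$ with $\sigma^k\not\models\psi$; here Lemma~\ref{prop-witness1} applied to the suffix $\sigma^k$ gives $d\leq\depth(\psi)$ with $\sigma^{k..d+1}$ a witness of violation of $\psi$, i.e.\ case~(ii) with $i=k$.

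The backward direction reverses each clause. If~(i) holds then, since $\sigma$ itself extends the one-state prefix $\sigma_0$, we get $\sigma\not\models\alpha\And\psi$; and because every model of $\varphi$ also models $\alpha\And\psi$ (the $k=0$ instance of $\Always\psi$ forces $\sigma\models\psi$), this gives $\sigma\not\models\varphi$. If~(ii) holds for some $i$, then $\sigma^i$ extends the prefix $\sigma^{i..d+1}$, so $\sigma^i\not\models\psi$, whence $\sigma\not\models\Always\psi$ and therefore $\sigma\not\models\varphi$. The only delicate point, and the step I would treat most carefully, is the initial instant: the specification constrains the trace there through both $\alpha$ and the $k=0$ copy of $\psi$, so I must make sure that case~(i) --- a one-state witness --- is exactly what the $\alpha$-failure produces (this is where the booleanness of $\alpha$ is essential), while a genuinely temporal failure of $\psi$ at time $0$ is absorbed into case~(ii) with $i=0$ rather than forced into~(i). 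Everything else is a routine reversal of Lemma~\ref{prop-witness1}, so I expect no substantial difficulty.
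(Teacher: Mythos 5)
Your proof is correct and follows exactly the route the paper intends: the corollary is presented there as an immediate consequence of Lemma~\ref{prop-witness1}, obtained by unfolding the semantics of $\alpha\And\Always\psi$ and applying the lemma to $\alpha\And\psi$ at the initial state and to $\psi$ on each suffix $\sigma^i$, which is precisely your case analysis. Your extra care at the initial instant (booleanness of $\alpha$ yielding the one-state witness of case~(i), with temporal failures of $\psi$ at time $0$ routed into case~(ii) with $i=0$) fills in the details the paper leaves implicit, and nothing in it is superfluous or flawed.
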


We next introduce the concept of {\it witness of a safety specification}, which allows us to work with a finite semantics for safety specifications.

\begin{definition}
\label{def-witness}

We define the relation $\wdash$ on finite sequences $\lambda = \lambda_0 \cdots\lambda_{d-1}$ where $d\geq 1$ and formulas as follows:
\[
\begin{array}{lcl}
\lambda\wdash \ell &\text{iff}& \lambda_0(\ell)=1 \\
\lambda\wdash \eta_1\And\eta_2 &\text{iff}& \lambda\wdash\eta_1 \mbox{ and } \lambda\wdash\eta_2 \\
\lambda\wdash \eta_1\Or\eta_2 &\text{iff}& \lambda\wdash\eta_1 \mbox{ or } \lambda\wdash\eta_2\\
\lambda\wdash \Next\eta &\text{iff}& \text{if } d>1 \text{ then } \lambda^{1..d}\wdash\eta \\
\lambda\wdash \Always_{[n,m]}\eta & \text{iff} & 
\lambda^j\wdash\eta \text{ for all } n\leq j \leq \min(m,d)\\
\lambda\wdash \Eventually_{[n,m]}\eta &\text{iff}& \text{if }n \leq m < d \text{ then } \lambda^j\wdash\eta \text{ for some $n\leq j \leq m$} \\
\end{array}
\]

A {\it pre-witness of a safety specification} $\alpha\And\Always\psi$ is a finite sequence
$\lambda = \lambda_0 \cdots\lambda_{d-1}$ 
such that $\lambda_0\wdash\alpha\And\psi$ and $\lambda^{<i}\wdash\psi$ for every $1\leq i\leq d-1$.
A {\it witness of a safety specification} $\alpha\And\Always\psi$ is a pre-witness $\lambda_0, \cdots, \lambda_{d-1}$ such that for some $0\leq j < d$ it holds that $\lambda_0\cdots(\lambda_j\cdots\lambda_{d-1})^\omega\models\alpha\And\Always\psi$.
\end{definition}

To construct traces, we usually deal with valuations that satisfies a set of literals.
\begin{definition}
Let $\Delta$ be any set of formulas, we denote by $\Val_\Delta(\MV)$ the set of all valuations $v\in\Val(\MV)$ such that $v(x) = 1$ for every boolean variable $x\in\Delta$,  $v(x) = 0$ for every boolean variable $\Not x\in\Delta$, $v(x) = c$ for every $x$ of enumerated type such that $x = c\in\Delta$, and
$v(x) \not = c$  for every $x$ of enumerated type such that $\Not(x = c)\in\Delta$.
\end{definition}
Note that for each variable $x$ that does not occur in $\Delta$, there are many $v\in\Val_\Delta(\MV)$ with different values for $v(x)$ and also that 
if $\Delta$ is a set of literals then $\lambda_0\wdash\Delta$ if and only if $\lambda_0 \in \Val_\Delta(\MV)$.

In safety specifications, variables come from two disjoint sets of variables $\MX$ and $\MY$. Given $v\in\Val(\MX)$ and $w\in\Val(\MY)$, we denote by $v+w$
the valuation in $z\in\Val(\MX\cup\MY)$ such that $z(p) =v(p)$ if $z\in\MX$ and $z(p) =w(p)$ if $z\in\MY$. This notation is extended to pairs of finite traces $\lambda$ on $\MX$ and $\lambda'$ on $\MY$ of the same length $d$, i.e. $\lambda + \lambda'$ denotes the trace $(\lambda_0+\lambda'_0)\cdots(\lambda_{d-1}+\lambda'_{d-1})$.

\subsection{Safety Games, Games for Safety Specifications}
\label{subsec:games}

We now give an introduction to safety games and games for safety
specifications.

\subsubsection{Safety Games.}
%
A \emph{safety game} is a game played by two players: Eve (the
environment) denoted by $E$, and Sally (the system), denoted by $S$.
The set of \emph{positions} of the game is partitioned into those
position where Eve plays and those where Sally plays.
A \emph{move} consists on the player that owns the current position
choosing a successor position.
A \emph{play} is a infinite sequence of moves starting from some
positions within a predetermined set of initial positions.
The outcome of a play of the game is determined as follows.
Eve wins if some move during the play reaches some bad positions for a
predetermined subset of bad positions.
Otherwise, Sally wins.
In other words, Sally wins a play if she avoids bad positions at all
times during the play.

Formally, an {\em arena} is a tuple $\A:\tupleof{I,\Pos,\Pos_E,\Pos_S,T}$
\begin{itemize}
\item $\Pos$ is the set of positions, partitioned into
  $\Pos=\Pos_E\cup \Pos_S$ and $\Pos_E\cap \Pos_S=\emptyset$
\item $I\subseteq P$ is the set of initial positions;
\item $T\subseteq (\Pos\times \Pos)$ is the set of moves; and
\end{itemize}
A play $\pi:v_0v_1v_2\ldots$ is an infinite sequence of positions
$v_i\in P$.
A game equips an arena with a winning condition $W\subseteq P^w$. A
play $\pi$ is winning for Sally if $\pi\in{}W$.
Safety games are defined by a set of bad states $B\subseteq{}P$ that
encodes the following winning condition
$W_B=\{ \pi \;|\; \text{for all $i$ },\pi(i)\notin B\}$
Therefore, Sally wins a safety game $\mathcal{G}:(\A,B)$ in those
infinite plays where bad positions $B$ are avoided.

We assume that every state has a successor so we do not have to deal
with finite plays.
This does not limit the class of games because we can extend every
arena with a sink state that is either good (resp. bad) if the halting
state is good (resp. bad), generating an equivalent arena that admits
only infinite plays.
Given a play and a natural number $i$, we use $\pi(i)$ to denote the
$i$-th state of $\pi$, given two natural numbers $i<j$, $\pi[i,j]$
denotes the finite sequence $\pi(i),\ldots,\pi(j)$, and $\pi(i..)$
denotes the infinite sequence $\pi(i)\pi(i+1)\ldots$
A play is initial if $\pi(0)\in I$.

A \emph{strategy} $\rho_S$ for Sally is a map
$\rho_S:\Pos^*\times \Pos_S\Into \Pos$, such that for every sequence
of positions $s\in \Pos^*$ followed by a position of Sally
$v\in \Pos_S$, $T(v,\rho_S(s,v))$ is a legal move in the game.
A strategy is memory-less if for all $s_1,s_2\in \Pos^*$ and
$v\in \Pos_S$, $\rho_S(s_1,v)=\rho_S(s_2,v)$, that is, if the move
that the strategy represents is determined solely by the last
position.
In this case we represent $\rho_S$ as a function
$\rho_S:\Pos_S\Into \Pos$ and the move $\rho_S(v)$ is determined by
$v$.
Similarly, a strategy for Eve is a map
$\rho_E:\Pos^*\times \Pos_E\Into \Pos$, such that for every
$s\in \Pos^*$ and $v\in \Pos_E$, $T(v,\rho_E(s,v))$ is a legal move
for Eve in the game.
A play $\pi$ is played according to a strategy $\rho_S$ if for every
$i$, if $\pi(i)\in P_S$ then $\pi(i+1)=\rho_S(\pi[0,i],\pi(i))$ (if
$\rho_S$ is memory-less then $\pi(i+1)=\rho_S(\pi(i))$).
The definition is analogous for $\rho_E$.
A strategy $\rho_S$ of Sally is winning if every initial play $\pi$
played according to $\rho_S$ is winning for Sally.
It is well-known that safety games are determined (either a game has a
winning strategy for Sally or it has a winning strategy for Eve), and
that safety games are memory-less determined (either a game has a
memory-less winning strategy for Sally or it has a memory-less winning
strategy for Eve).

\subsubsection{Games for Safety Specifications.}
%
Consider a safety specification $\varphi$ over $\MX$ and $\MY$.
Given a set $S$, $S^*$ denotes the set of finite strings over $S$ and
$S^k$ the set of strings over $S$ of length $k$.
The build an arena $\A(\MX,\MY)$ of a safety game as follows:
\begin{itemize}
\item $P_E=\{\Val(\MX)^k \times \Val(\MY)^k \mid k\in \nat\}$.
We use
$P_E^k=\{ (\overline{x},\overline{y}) |\;|x|=|y|=k\}\subseteq P_E$.
\item $P_S=\{\Val(\MX)^{k+1} \times \Val(\MY)^k  \mid k\in\nat\}$.
We use $P_S^{k+1}$ for the set
$\{ (\overline{x},\overline{y}) |\;|x|=k+1\text{ and } |y|=k\}\subseteq P_S$.
\item $T$ contains two types of edges $T=T_E\cup T_S$ defined as follows for each $k\in\nat$:
\begin{itemize}
\item $T_E\subseteq(P_E^k,P_S^{k+1})$ such that $((\overline{x},\overline{y}),(\overline{x}\cdot v,\overline{y}))\in T_E$ if $v\in\Val(\MX)$.
\item $T_S\subseteq(P_S^{k+1},P_E^{k+1})$ such that $((\overline{x}\cdot v,\overline{y}),(\overline{x}\cdot v,\overline{y}\cdot w))\in T_S$ if $w\in\Val(\MY)$.
\end{itemize}
\item $I = \{(\epsilon,\epsilon)\}.$ 
\end{itemize}
%
%

The arena described above considers all sets of possibles moves by
Eve and Sally.
Note that Eve and Sally alternate playing.
Given a position $p\in P_E\setminus I$ of the form
$(\overline{x}\cdot v,\overline{y}\cdot w)$ we use $\move(p)=(v+w)$
for the valuation of the variables of $\MX\cup\MY$ according to $v$
and $w$.
Given a play $\pi$ we use $\trace(\pi)$ for the trace $\sigma$ such
that $\sigma(i)=\move(\pi(2i+1))$.
Note that $\trace(\pi)$ corresponds to the sequence of valuations that
Eve and Sally pick.
The arena is essentially an infinite tree that records the valuations
chosen in turns by Eve and Sally.
Given a specification $\varphi$ (for example, a safety specification),
the set of winning plays is $W_\varphi=\{ \pi\;|\;\trace(\pi)\models\varphi\}$.

\begin{definition}[Realizability]
  Given a temporal formula $\varphi$ over propositions $\MX$ and $\MY$,
  the formula is called realizable if Sally wins the game
  $(\A(\MX,\MY),W_\varphi)$.
\end{definition}

We now create a safety game defining a set of bad positions $B$ of the
arena, and show in Lemma~\ref{lem-safetygames} that the resulting
safety game solves the realizability problem.
\begin{align*}
B = \{ (\overline{x},\overline{y}) \;|\; 
& 
\text{there exists } v\in\Val(\MX) \text{ such that for all } v'\in\Val(\MY):\\
& 
\overline{x}\cdot v+\overline{y}\cdot v' \not\wdash\alpha\And\Always\psi\}.
\end{align*}
Even though we do not show it here, it is easy to see that in order to
determine whether a state $(\overline{x},\overline{y})$ of $P_E$  is bad, it is
sufficient to remember (1) whether some prefix was bad (for which only
one bit is necessary) and (2) the last $m$ elements of $(\overline{x},\overline{y})$ where
$m$ is the maximum number of nested $\Next$ operators in the
specification.
Hence, the arena of a safety specification can be turned into a finite
arena.
Therefore the realizability of safety specifications is decidable (by
reduction into whether Sally has a winning strategy in a finite state
safety game).
Given a specification $\varphi$ we call $\G(\varphi)$ for the safety
specification game for $\varphi$.

   \begin{lemma}\label{lem-safetygames}
     Given a safety specification $\varphi$, $\varphi$ is realizable
     if and only if $\G(\varphi)$ is winning for Sally.
   \end{lemma}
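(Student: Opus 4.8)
The plan is to exploit that the realizability game $(\A(\MX,\MY),W_\varphi)$ and the safety game $\G(\varphi)=(\A(\MX,\MY),B)$ are played on the \emph{same} arena, so a Sally strategy $\rho_S$ is literally a single object usable in both; only the winning conditions $W_\varphi$ and $W_B$ differ. I would therefore prove the sharper statement that \emph{a Sally strategy is winning for $(\A(\MX,\MY),W_\varphi)$ if and only if it avoids $B$}, from which the lemma follows immediately (Sally wins one game iff she wins the other, with no appeal to determinacy needed). The whole argument rests on one connecting fact linking the finite semantics of Definition~\ref{def-witness} to the infinite semantics: for any finite trace $\lambda$ of length $d\geq 1$, \emph{$\lambda$ is a witness of the violation (bad prefix) of $\varphi$ if and only if $\lambda\not\wdash\varphi$} (i.e.\ $\lambda$ fails to be a pre-witness). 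I would first isolate and prove this equivalence, since both directions are used below.

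For the direction ``$\rho_S$ avoids $B$ $\Rightarrow$ $\rho_S$ is winning for $W_\varphi$'', I take any initial play $\pi$ consistent with $\rho_S$ and let $\sigma=\trace(\pi)$. Suppose $\sigma\not\models\varphi$. By Corollary~\ref{cor:witness} some finite prefix $\sigma^{<N}$ (with $N\geq 1$) is a witness of the violation of $\varphi$; by the connecting fact $\sigma^{<N}\not\wdash\varphi$. Since any extension of $\sigma^{<N}$ still violates $\varphi$, every one-step extension $\sigma^{<N}\cdot u$ is again a bad prefix, hence $\sigma^{<N}\cdot u\not\wdash\varphi$ for every $u$ of the form $v+v'$ with $v\in\Val(\MX)$, $v'\in\Val(\MY)$. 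This is exactly the defining condition for the Eve-position carrying the history $\sigma^{<N}$ to lie in $B$. But that position occurs on $\pi$, contradicting that $\rho_S$ avoids $B$. Hence $\sigma\models\varphi$, so $\rho_S$ wins $W_\varphi$ and $\varphi$ is realizable.

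For the converse ``$\rho_S$ is winning for $W_\varphi$ $\Rightarrow$ $\rho_S$ avoids $B$'', I argue by contradiction: if $\rho_S$ did not avoid $B$, some $\rho_S$-consistent initial play would reach an Eve-position $(\overline{x},\overline{y})\in B$. By the definition of $B$ there is an environment move $v\in\Val(\MX)$ such that for every system response $v'$ the extended history $\overline{x}\cdot v+\overline{y}\cdot v'\not\wdash\varphi$. Let Eve deviate and play this $v$; Sally answers with some $w$ dictated by $\rho_S$, so $\overline{x}\cdot v+\overline{y}\cdot w\not\wdash\varphi$, which by the connecting fact is a bad prefix. Then every infinite continuation violates $\varphi$ by the definition of a witness of violation, so extending the play arbitrarily yields an initial play $\pi'$ that is still consistent with $\rho_S$ yet has $\trace(\pi')\not\models\varphi$, contradicting that $\rho_S$ is winning for $W_\varphi$. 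Thus no bad position is reachable under $\rho_S$, i.e.\ $\rho_S$ wins $\G(\varphi)$.

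The main obstacle is the connecting fact itself, i.e.\ showing that the optimistic finite relation $\wdash$ declares $\lambda\not\wdash\varphi$ precisely when $\lambda$ is a bad prefix. The forward use (bad prefix $\Rightarrow\not\wdash$) and the backward use ($\not\wdash\Rightarrow$ bad prefix) both need the bounded-lookahead bookkeeping of Lemma~\ref{prop-witness1} and Corollary~\ref{cor:witness}: a violation of a safety formula is always detectable within a window of length at most $\depth(\psi)+1$, and the clauses for $\Next$, $\Always_{[n,m]}$ and $\Eventually_{[n,m]}$ in Definition~\ref{def-witness} are exactly the truncated conditions that fire only once the prefix is long enough to certify that violation. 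I expect the careful handling of these truncation/depth cases (so that an undetermined formula on a too-short prefix does not spuriously make $\lambda$ a bad prefix) to be the delicate part; once that equivalence is in place, the two game-theoretic directions above are routine.
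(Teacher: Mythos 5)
Your high-level architecture---one arena, one strategy, reduce everything to a single fact linking the finite relation $\wdash$ to the infinite semantics---is exactly the skeleton of the paper's own (very terse) proof, which runs both directions through Corollary~\ref{cor:witness}. The gap is that the connecting fact you build everything on is false as an equivalence. The direction ``$\lambda$ is a bad prefix $\Rightarrow \lambda\not\wdash\varphi$'' fails, because $\wdash$ is deliberately optimistic about subformulas whose evaluation window runs past the end of $\lambda$: it can never report a violation that is forced only by future, not-yet-seen states. Concretely, take $\psi = \Not p \Or (\Next q \And \Next\Not q)$ and a one-state trace $\lambda_0$ with $\lambda_0(p)=1$. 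Every infinite extension of $\lambda_0$ violates $\Always\psi$ (it would need $q\And\Not q$ at the second state), so $\lambda_0$ is a bad prefix; yet $\lambda_0\wdash\psi$, since both $\Next$-conjuncts hold vacuously on a length-one trace, so $\lambda_0$ is a pre-witness. No careful handling of the truncation cases can repair this: the biconditional simply does not hold length-by-length. Your first direction (``avoids $B$ $\Rightarrow$ wins $W_\varphi$'') invokes the false half twice---once for $\sigma^{<N}$ and once for its one-step extensions---so as written it does not establish that the visited Eve position lies in $B$. (Your closing paragraph also worries about the wrong side: the ``spurious declaration on short prefixes'' issue you flag concerns the direction $\not\wdash\Rightarrow$ bad prefix, which is in fact the sound one.)

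The repair is to replace the equivalence by the two asymmetric statements that are actually true. (i) If $\lambda\not\wdash\varphi$ then $\lambda$ is a bad prefix (equivalently, every finite prefix of a model of $\varphi$ is a pre-witness); this is the only half your second direction uses, so that direction is fine. (ii) If $\sigma\not\models\varphi$ then \emph{some} finite prefix of $\sigma$ fails $\wdash$---not the shortest bad prefix, but one extending at least $\depth(\psi)$ states past the violating position, because on windows of length greater than $\depth(\psi)$ the clauses of Definition~\ref{def-witness} no longer truncate and $\wdash$ coincides with $\models$ for the bounded formula $\psi$. With (ii) your first direction goes through after one extra observation: failure of $\wdash$ is inherited by all extensions (the weak semantics only gains constraints as the trace grows), so once $\sigma^{<N}\not\wdash\varphi$, every one-step extension $\sigma^{<N}\cdot(v+v')$ with $v\in\Val(\MX)$, $v'\in\Val(\MY)$ also fails $\wdash$, which is precisely what places the Eve position with history $\sigma^{<N}$ in $B$. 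So your game-theoretic scaffolding survives, but the single connecting lemma must be split into the pair (i)/(ii); the depth bookkeeping you postponed to the end is not a refinement of your lemma---it is its replacement.
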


   \begin{proof}
     Assume that $\varphi$ is realizable, and let $\rho_S$ be a
     strategy for Sally in the specification game.
     Therefore any play $\pi$ played according to $\rho_S$ is winning
     for Sally, so $\pi\in W_\varphi$ i.e. $\trace(\pi)\models\varphi$.
     Hence, by Corollary~\ref{cor:witness} there is no finite witness
     of the violation of $\trace(\pi)$, consequently $\rho_S$ is a
     winning strategy for Sally in $\G(\varphi)$.
     Similarly, assume that Sally wins $\G(\varphi)$ and let $\pi$ be
     an arbitrary play played according to a winning strategy $\rho_S$.
     The play $\pi$ in the specification game is winning for Sally as
     well, due to Corollary~\ref{cor:witness}.
     \qed
 \end{proof}

%

\section{Terse Normal Form}
\label{sec:tnf}
Our tableaux for a safety specification $\varphi = \alpha\And\Always\psi$ analyze its realizability on the basis of a play where the environment chooses a move on its variables $\MX$ and, then, the system chooses its move on its variables $\MY$ according to the safety specification. In order to the branches of the tableau to represent the real play, any formula in a node should determine the true strict-future possibilities of the game.
For example, considered that following formula in {\em disjunctive
  normal form} (\DNF) is the representation of all possible moves at
some state of the game:
\begin{equation}
\label{eq-DNF}
(\Next\Not s)\Or(p_e\And s\And\Next\Next s)
\end{equation}
This means that, after any choice of the environment and the system, satisfying $(\Next\Not s)$ would fulfill  the specification.
Moreover, after the environment moves $p_e$ followed by the system
moving $s$, not only $\Next\Next s$ would be coherent with the
specification, but also $\Next\Not s$.
However, a classical tableau-style analysis would split (\ref{eq-DNF})
into two branches such that the one containing $p_e\And s$ requires
$\Next\Next s$ to satisfy the specification, precluding the
possibility of $\Next\Not s$.
Note also that the formula
\begin{equation}
\label{eq-TNF0}
(p_e\And s\And(\Next\Not s\Or\Next\Next s))\Or (\Not p_e\And \Next\Not s)\Or (\Not s\And \Next\Not s)
\end{equation}
is logically equivalent to (\ref{eq-DNF}), but suitable for a tableau-style analysis of realizability. In this section, we define the {\it Terse Normal Form} ($\TNF$) for safety formulas that allows us to associate to any move the formula that any trace must satisfy in the (strict) future to be coherent with the current safety specification. The formula (\ref{eq-TNF0}) is in $\TNF$.

We consider a partition of all the {\it basic} (sub)formulas that
occurs in a safety formula---that is all the formulas of the forms
$\ell$, $\Next^n\eta$, $\Eventually_I\eta$ or $\Always_I\eta$---into
two classes of formulas:
\begin{compactitem}
\item the class {\it from-now}: $\ell, \Eventually_{[0,m]}\eta, \Always_{[0,m]}\eta$
\item the class {\it from-next}: $\Next\eta$,
  $\Eventually_{[n,m]}\eta$ and $\Always_{[n,m]}\eta$ (for any  $n\geq 1$).
\end{compactitem}

\begin{definition}[Strict-future and separated]
\label{def-strict}
A safety formula is a {\it strict-future formula} if and only if it is
a \DNF combination of from-next formulas.
A safety formula is a {\it separated formula} if and only if it is the
(possibly empty) conjunction of a set of Boolean literals, denoted as $\Lit(\pi)$, and (at most) a strict-future formula, denoted as $\Fut(\pi)$.
\end{definition}
%
%
\begin{definition}[TNF]
\label{def-TNF}
A safety formula $\eta$ is in {\it Terse Normal Form} $(\TNF)$ if and
only if it is a disjunction $\bigvee_{i=1}^n\pi_i$ such that each
$\pi_i$ is a separated formula, and for all $1\leq i\neq j\leq n$
there is at least one literal $\ell$ such that $\ell\in\Lit(\pi_i)$ and
$\Not \ell\in\Lit(\pi_j)$.
\end{definition}

\newcounter{prop-TNF-equiv}
\setcounter{prop-TNF-equiv}{\value{proposition}}

\begin{proposition}
\label{prop-TNF-equiv}
For any safety formula $\eta$ there is a logically equivalent safety formula, called $\TNF(\eta)$, that is in \TNF.
\end{proposition}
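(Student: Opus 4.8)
The plan is to construct $\TNF(\eta)$ in three successive phases, each preserving logical equivalence, until all the defining conditions of Definition~\ref{def-TNF} hold: a disjunction of separated formulas, together with the literal-disagreement condition between distinct disjuncts. Throughout I assume $\eta$ is already in \NNF, so its boolean structure uses only $\And$ and $\Or$ and every negation sits at a literal.

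Phase 1 (flatten the present). First I would show, by structural induction on $\eta$, that $\eta$ is logically equivalent to a \emph{positive} boolean combination (built using only $\And$ and $\Or$) of literals and from-next formulas. The only cases that are not immediate are the bounded operators whose interval starts at $0$, which I rewrite using
\[
\Always_{[0,m]}\eta' \equiv \eta' \And \Always_{[1,m]}\eta',
\qquad
\Eventually_{[0,m]}\eta' \equiv \eta' \Or \Eventually_{[1,m]}\eta'
\]
(and $\Always_{[0,0]}\eta' \equiv \eta'$, $\Eventually_{[0,0]}\eta' \equiv \eta'$ in the degenerate case $m=0$). Applying the induction hypothesis to $\eta'$ turns the $\eta'$-summand into a positive combination of literals and from-next formulas, while $\Always_{[1,m]}\eta'$ and $\Eventually_{[1,m]}\eta'$ are themselves from-next. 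Since literals and the formulas $\Next\eta'$, $\Always_{[n,m]}\eta'$, $\Eventually_{[n,m]}\eta'$ with $n \geq 1$ are from-next and left untouched, the resulting $\hat\eta$ has the claimed shape.

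Phase 2 (\DNF). Next I would distribute $\And$ over $\Or$ in $\hat\eta$ to obtain an equivalent disjunction $\bigvee_i C_i$ in which each $C_i$ is a conjunction of literals and from-next formulas; because $\hat\eta$ is positive, no new negations appear. Collecting the literals of $C_i$ into $\Lit(C_i)$ and the conjoined from-next formulas into $\Fut(C_i)$, and noting that a single conjunction of from-next formulas is a one-disjunct \DNF of from-next formulas and hence a strict-future formula, shows that each $C_i$ is a separated formula (with $\Lit(C_i)$ or $\Fut(C_i)$ possibly empty). Thus $\bigvee_i C_i$ is already a disjunction of separated formulas, and only the literal-disagreement condition can still fail.

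Phase 3 (enforce disagreement). This is the crux. I would let $A$ be the finite set of atoms occurring in any $\Lit(C_i)$ and, for each $C_i$, rewrite $C_i$ as $\bigvee_\tau (\tau \And \Fut(C_i))$ over all \emph{complete types} $\tau$ over $A$ consistent with $\Lit(C_i)$; here a complete type fixes, for every boolean variable of $A$, either $p$ or $\Not p$, and for every enumerated variable $x$ it fixes exactly one $x=c$ \emph{together with} all induced negative literals $\Not(x=c')$ for $c'\neq c$. Each such rewriting is an equivalence, since the disjunction of the consistent completions of a partial assignment is a tautological case split, and an inconsistent $C_i$ contributes no $\tau$ and drops out. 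Grouping the resulting summands by $\tau$ and merging, for each $\tau$, the associated strict-future parts into $\Fut(\pi_\tau) = \bigvee \Fut(C_i)$ (a disjunction of strict-future formulas is again strict-future) yields $\TNF(\eta) = \bigvee_\tau \pi_\tau$ with $\Lit(\pi_\tau) = \tau$. Two distinct complete types disagree on some atom: either a boolean variable (one has $p$, the other $\Not p$) or an enumerated variable with distinct values (one has $x=c$, the other the induced $\Not(x=c)$); in both cases this is exactly the witnessing literal $\ell$ demanded by Definition~\ref{def-TNF}. I expect this phase to be the main obstacle: making the disagreement condition \emph{syntactic} forces the explicit, possibly exponential, completion of each disjunct to a full type, and the enumerated case requires care to insert all induced negative literals so that $\Not\ell$ literally occurs in the opposing disjunct. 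The degenerate cases ($\eta$ unsatisfiable, giving the empty disjunction $\cF$; or a single surviving type, making the disagreement condition vacuous) are handled directly.
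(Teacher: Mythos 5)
Your proof is correct, and your Phases 1--2 essentially coincide with the paper's first step: unfolding the bounded operators and distributing into a \DNF whose ``literals'' are classical literals or from-next formulas (your unfolding $\Eventually_{[0,m]}\beta \equiv \beta \Or \Eventually_{[1,m]}\beta$ is an immaterial variant of the paper's $\Eventually_{[n,m]}\beta\equiv\Next^n\beta\Or\Next\Eventually_{[n,m-1]}\beta$). Where you genuinely diverge is Phase 3. The paper enforces the disagreement condition \emph{lazily}: it picks a violating pair of disjuncts, sets $\delta=\Lit(\pi_i)\cap\Lit(\pi_j)$ with private parts $\delta_1,\delta_2$, and applies the three-way split
$(\delta\And\delta_1\And\eta_1)\Or(\delta\And\delta_2\And\eta_2)\equiv
(\delta\And\delta_1\And\delta_2\And(\eta_1\Or\eta_2))
\Or\DNF(\delta\And\delta_1\And\Not\delta_2\And\eta_1)
\Or\DNF(\delta\And\Not\delta_1\And\delta_2\And\eta_2)$,
iterating until no violating pair remains. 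You instead perform a one-shot global Shannon-style expansion of every disjunct into the complete types over the occurring atoms and then merge the future parts of disjuncts sharing a type. Your route buys ease of verification: termination is immediate, pairwise disagreement of distinct complete types is immediate, and you make explicit the induced negative literals $\Not(x=c')$ needed so that enumerated-type disagreements are witnessed \emph{syntactically} --- a point the paper leaves implicit (there it falls out of negating $\delta_2$, since $\Not(x=c)$ is itself a literal). What it costs is that you always pay the full exponential minterm expansion over all atoms occurring in the literal parts, whereas the paper's pairwise repair refines only the disjuncts where the condition actually fails and can therefore produce much smaller \TNF formulas; since moves of the \TNF label tableau nodes, this compactness matters for the method in practice. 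Conversely, the paper's iteration owes a (glossed-over) termination argument --- each application re-\DNF{}s the negations $\Not\delta_1,\Not\delta_2$ and may create new violating pairs, so one must argue, e.g., that literal sets strictly grow --- a debt your construction avoids entirely.
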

\begin{proof} 
First, any safety formula $\eta$ (for simplicity, suppose that $\eta$ is in \NNF) can be converted into a disjunctive normal form-like formula, that we call $\DNF(\eta)$, whose ``literals'' are either classical literals ($p$ or $\Not p$), or from-next formulas. 
For constructing $\DNF(\eta)$, it suffices to use, besides
classical logical equivalences on boolean connectives, the following equivalences on temporal connectives:
$$
\begin{array}{ll}
\Eventually_{[n,n]}\beta\equiv\Next^n\beta &  \hspace{1cm}
\Always_{[n,n]}\beta\equiv\Next^n\beta\\
\Eventually_{[n,m]}\beta\equiv\Next^n\beta\Or\Next\Eventually_{[n..m-1]}\beta & \hspace{1cm}
\Always_{[n,m]}\beta\equiv\Next^n\beta\And\Next\Always_{[n..m-1]}\beta\\
\end{array}
$$
Then, we transform each pair of disjuncts in $\DNF(\eta)$ with indexes  $1\leq i\neq j\leq n$ such that 
for all literal $\ell\in\Lit(\pi_i)$ it holds that $\Not \ell \not\in\Lit(\pi_j)$ as follows. Let $\delta = \Lit(\pi_i)\cap\Lit(\pi_j)$, $\delta_1 = \Lit(\pi_i)\setminus\delta$ and 
$\delta_2 = \Lit(\pi_j)\setminus\delta$.
Then, we apply
\begin{eqnarray}
\nonumber
(\delta\And\delta_1\And\eta_1) \Or  (\delta\And\delta_2\And\eta_2) & \equiv &
(\delta\And\delta_1\And\delta_2\And(\eta_1\Or \eta_2))  \\
\label{TNF-gen-equiv}
& &
\Or\,\DNF(\delta\And\delta_1\And\Not\delta_2\And \eta_1)\\
\nonumber
& & 
\Or\DNF(\delta\And\Not\delta_1\And\delta_2\And\eta_2)
\end{eqnarray}
where $\Fut(\pi_1) = \eta_1$ and $\Fut(\pi_j) = \eta_2$.
Note that, in particular cases, $\delta,\delta_1$ or $\delta_2$ could be empty (equivalently $\cT$), hence their negation are equivalent to $\cF$. In this cases, the equivalence (\ref{TNF-gen-equiv}) could be simplified. For example, if $\Lit(\pi_i)=\Lit(\pi_j)$, then $\delta= \Lit(\pi_i)\cap\Lit(\pi_j) = \Lit(\pi_i) = \Lit(\pi_j)$. Consequently, (\ref{TNF-gen-equiv}) is simplified to
$(\delta\And\eta_1) \Or  (\delta\And\eta_2) \equiv \delta\And(\eta_1\Or \eta_2)$.

This transformation is repeatedly applied until every pair $(\pi_i,\pi_j)$ satisfies the required condition. 
It is easy to see that the above process produces a formula in \TNF.
Moreover, since we only apply logical equivalences to subformulas, by
substitutivity, the resulting formula $\TNF(\eta)$ is logically
equivalent to $\eta$.\qed
\end{proof}
\begin{example}
  \label{ex-TNF1}
  The \TNF for $p_e \Iff \Next s$ and $\Next p_e \Iff \Next s$ from
  Section \ref{sec:intro} are:
  \vspace{-2mm}
  \begin{align*}
    \TNF(p_e \Iff \Next s) &= (p_e\And \Next s) \Or (\Not p_e\And \Next\Not s)\\
    \TNF(\Next p_e \Iff \Next s) & = (\Next p_e\And \Next s) \Or (\Next\Not p_e\And \Next\Not s)\\[-8mm]
  \end{align*}
Note that $\TNF(p_e \Iff \Next s)$ is composed by two disjuncts,
each having a literal and a strict-future formula, but
$\TNF(\Next p_e \Iff \Next s)$ has only one move (the empty set of
literals) with one future-strict formula (which is a disjunction).
Finally, for
$ \eta = c \And (\Not p_e \Impl \Always_{[0,9]} \Not c) \And
(\Always_{[0,9]} c \Or \Eventually_{[0,2]}\Not c) $:
\vspace{-2mm}
\begin{eqnarray*}
\TNF(\eta) & = & 
(p_e\And c \And (\Next\Eventually_{[0,1]} \Not c \Or\Next\Always_{[0,8]} c)) \Or (\Not p_e \And c \And\Next\Always_{[0,8]} c).
\end{eqnarray*}
\end{example}
For any formula $\bigvee_{i=1}^n\pi_i$ in $\TNF$ and any
$1\leq i\leq n$, we denote by $\Lit(\pi_i)$ the conjunction (or set)
of literals in $\pi_i$ and by $\Fut(\pi_i)$ the unique strict-future
formula in $\pi_i$.  Abusing language, we say that each $\pi_i$ is a
{\it move}, though it could really represent a collection of pairs of
moves (one of the environment followed by one of the system) because
the variables that do not appear in $\Lit(\pi_i)$ could be freely
chosen.
Note that $\Val_{\pi_i}=\Val_{\Lit(\pi_i)}$ for any move
$\pi_i$ of any formula in $\TNF$.

\begin{proposition}
\label{prop-TNF}
Let $\eta$ be a safety formula and $\TNF(\eta) = \bigvee_{i=1}^n\pi_i$
\begin{enumerate}[\textup{(}a\textup{)}]
\item For any trace $\sigma$, $\sigma\models\eta$ iff
  $\sigma\models\pi_i$ for exactly one $1\leq i\leq n$.
\item For any finite trace $\lambda$, $\lambda\wdash\eta$ iff
  $\lambda\wdash\pi_i$ for exactly one $1\leq i\leq n$.
\item Let $\sigma$ be such that $\sigma\models\eta$ and let $1\leq i\leq n$. Then, $\sigma\models\Lit(\pi_i)\Impl\Fut(\pi_i)$.
\end{enumerate}
\end{proposition}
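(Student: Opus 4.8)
The plan is to prove the three items in turn, the common engine being the defining property of \TNF{} from Definition~\ref{def-TNF}: for any two distinct disjuncts $\pi_i,\pi_j$ of $\TNF(\eta)=\bigvee_{k=1}^n\pi_k$ there is a literal $\ell$ with $\ell\in\Lit(\pi_i)$ and $\Not\ell\in\Lit(\pi_j)$. This "mutually contradicting literal" property is exactly what upgrades ``at least one'' to ``exactly one'' in (a) and (b), and it also drives (c). For (a) I would first invoke Proposition~\ref{prop-TNF-equiv} to get $\eta\equiv\TNF(\eta)$, so $\sigma\models\eta$ iff $\sigma\models\pi_i$ for \emph{at least} one $i$. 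For ``at most one'' I argue by contradiction: if $\sigma\models\pi_i$ and $\sigma\models\pi_j$ with $i\neq j$, then $\sigma\models\Lit(\pi_i)$ and $\sigma\models\Lit(\pi_j)$ (each $\pi_k$ is a conjunction containing $\Lit(\pi_k)$); the \TNF{} property then supplies $\ell$ with $\sigma\models\ell$ and $\sigma\models\Not\ell$, impossible since the single valuation $\sigma_0$ cannot satisfy a literal and its negation. Hence exactly one disjunct holds.

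For (b), the subtle point---and the one I expect to be the main obstacle---is that Proposition~\ref{prop-TNF-equiv} only guarantees equivalence over \emph{infinite} traces ($\Mod(\eta)=\Mod(\TNF(\eta))$), whereas $\wdash$ is a separate \emph{finite} semantics from Definition~\ref{def-witness}, and logical equivalence does not transfer to $\wdash$ for free. I would therefore prove a small preservation lemma: every equivalence used to build $\TNF(\eta)$ inside the proof of Proposition~\ref{prop-TNF-equiv} is also valid for $\wdash$. The classical boolean equivalences are immediate, since $\wdash$ interprets $\ell$, $\And$ and $\Or$ exactly classically; so only the temporal unfolding equivalences for $\Always_{[n,m]}$ and $\Eventually_{[n,m]}$ need checking, by unwinding their $\wdash$-clauses and minding the guards (``if $d>1$'', ``$\min(m,d)$'', ``if $n\leq m<d$''). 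Because these are schema equivalences valid for \emph{every} finite trace, and because $\wdash$ is compositional (each clause depends only on $\wdash$-values of immediate subformulas on suffixes), substitutivity yields $\lambda\wdash\eta$ iff $\lambda\wdash\bigvee_i\pi_i$, hence $\lambda\wdash\pi_i$ for at least one $i$. Uniqueness then repeats the argument of (a) verbatim: from $\lambda\wdash\pi_i$ and $\lambda\wdash\pi_j$ one obtains $\lambda\wdash\ell$ and $\lambda\wdash\Not\ell$, i.e.\ $\lambda_0(\ell)=1=\lambda_0(\Not\ell)$, a contradiction.

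For (c), I would use part (a): since $\sigma\models\eta$, there is a unique $k$ with $\sigma\models\pi_k$. If $i=k$, then $\sigma\models\Fut(\pi_i)$ (a conjunct of $\pi_i$), so $\Lit(\pi_i)\Impl\Fut(\pi_i)$ holds because its consequent holds. If $i\neq k$, the \TNF{} property gives $\ell\in\Lit(\pi_k)$ with $\Not\ell\in\Lit(\pi_i)$; since $\sigma\models\Lit(\pi_k)$ we have $\sigma\models\ell$, hence $\sigma\not\models\Not\ell$ and therefore $\sigma\not\models\Lit(\pi_i)$, so the implication holds vacuously. Either way $\sigma\models\Lit(\pi_i)\Impl\Fut(\pi_i)$, completing the proof.
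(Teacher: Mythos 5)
Your proposal is correct and follows essentially the same route as the paper: items (a) and (b) come from $\eta\equiv\TNF(\eta)$ (Proposition~\ref{prop-TNF-equiv}) together with the mutually-contradicting-literal property of Definition~\ref{def-TNF}, and item (c) combines (a) with that same property --- the paper phrases (c) as a proof by contradiction (assume $\sigma\models\Lit(\pi_i)$ and $\sigma\not\models\Fut(\pi_i)$, get a second disjunct $\pi_j$ and a clashing literal) where you do a case split on whether $i$ is the unique witnessing disjunct, but the underlying argument is identical. Your preservation lemma for (b) --- verifying that each equivalence used to build $\TNF(\eta)$ also holds under the finite semantics $\wdash$, since $\Mod(\eta)=\Mod(\TNF(\eta))$ alone does not transfer to $\wdash$ --- is a point the paper dismisses as an ``easy consequence'' without comment, so your treatment is the more careful one.
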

\begin{proof}
Items ($a$) and ($b$) are easy consequences of Definition \ref{def-TNF} and the fact that $\eta\equiv\TNF(\eta)$ (Proposition \ref{prop-TNF-equiv}). 
For item ($c$), consider any trace $\sigma$ and any $1\leq i\leq n$ such that $\sigma\models\eta$ and $\sigma\models\Lit(\pi_i)$. Then, $\sigma_0\in\Val_{\Lit(\pi_i)}(\MX\cup\MY)$. Let us suppose that $\sigma\not\models\Fut(\pi_i)$ then, by ($a$), it should exists $1\leq j\leq n$ such that $i\neq j$, $\sigma\models\pi_j$. Then, $\sigma\models\Lit(\pi_j)$ and $\sigma_0\in\Val_{\Lit(\pi_j)}(\MX\cup\MY)$. The latter contradicts $\sigma_0\in\Val_{\Lit(\pi_i)}(\MX\cup\MY)$, because there is a literal $\ell$ such that $\ell\in\Lit(\pi_i)$ and $\Not \ell\in\Lit(\pi_j)$.\qed
\end{proof}

Many moves in formulas in \TNF could be discharged for realizability test purposes. In a practical implementation of our method, we will discharge them. However, for simplicity of the presentation and the proofs of soundness and completeness, we consider all the moves in the \TNF of a formula.

From now on we fix a safety specification on variables $\MX\cup\MY$ whose safety formula is $\psi$.
Consider any formula $\bigvee_{i=1}^n\pi_i$ in \TNF. By Proposition \ref{prop-TNF}, the collection $\{\Val_{\pi_i}(\MX\cup\MY)\mid 1\leq i\leq n\}$ is pairwise disjoint. Next, we define properties of the collection $\{\Val_{\pi_i}(\MX)\mid 1\leq i\leq n\}$.

\begin{definition}
\label{def-covering}
A formula $\bigvee_{i=1}^n\pi_i$ in \TNF is an {\it $\MX$-covering} if and only if
$$\bigcup_{i=1}^n\Val_{\pi_i}(\MX) = \Val(\MX).$$
A formula $\bigvee_{i=1}^n\pi_i$ in \TNF is a {\it minimal $\MX$-covering} if it is a $\MX$-covering and $\bigvee_{i=1,i\neq j}^n\pi_i$ is not an $\MX$-covering for every $1\leq j\leq n$.
\end{definition}

Intuitively, a minimal $\MX$-covering represents a system strategy
from the current position. Therefore, the collection of all minimal
coverings represents all possible strategies. Moreover, each move in
an strategy contains all the strict-future possibilities for this
move.

\begin{example}
\label{ex-TNF4}
Let
$\TNF(\eta) = (p_e\And c \And \eta_1) \Or (\Not p_e\And c
\And\eta_2)\Or (\Not c\And\eta_3)$ where $\eta_1,\eta_2,\eta_3$ are
strict-future formulas and $\MX=\{p_e\}$. It is a non-minimal
$\MX$-covering, but the third move $\Not c\And\eta_3$ is a minimal
$\MX$-covering, The two first moves together also provide a minimal
$\MX$-covering.
\end{example}

\begin{example}
\label{ex-TNF5}
Let $\TNF(\eta) =
(p_e\And c \And \eta_1) \Or (\Not p_e\And q_e\And c \And\eta_2)\Or (\Not c\And\eta_3)$ where $\eta_1,\eta_2,\eta_3$ are strict-future formulas and $\MX=\{p_e,q_e\}$. $\TNF(\eta)$ is a non-minimal $\MX$-covering, but the third move $\Not c\And\eta_3$ is a minimal $\MX$-covering, However, the disjunction of the two first moves is not an $\MX$-covering, because the valuation that makes both environment variables to be false is not included there.
\end{example}

\begin{example}
\label{ex-TNF6}
The following two \TNF formulas (where $\eta_1,\eta_2,\eta_3,\eta_4$ are strict-future formulas and $\MX=\{p_e\}$) are each composed by four different minimal $\MX$-coverings:
$$
(p_e\And c \And \eta_1) \Or (\Not p_e\And c \And\eta_2)\Or (p_e\And \Not c \And \eta_3) \Or (\Not p_e\And \Not c \And\eta_4)
$$
$$
(p_e\And c \And \eta_1) \Or (\Not p_e\And \Not c \And\eta_2)\Or (p_e\And \Not c \And d \And \eta_3) \Or (\Not p_e\And c\And \Not d \And\eta_4).
$$
\end{example}

Abusing language, we say that a set of indices $I$ is a (minimal) $\MX$-covering when really the  formula $\bigvee_{i\in I}\pi_i$ is a (minimal) $\MX$-covering.

\begin{proposition}
\label{prop-covering}
Let $\Phi$ any set of safety formulas and $\TNF(\Phi\And\psi) = \bigvee_{i\in I}\pi_i$.
\begin{enumerate}[\textup{(}a\textup{)}]
\item 
\label{i}
If $I$ is not an $\MX$-covering, then there exists some $v\in\Val(\MX)$ such that $v\not\wdash\Phi\And\psi$.
\item 
\label{ii}
If $I$ is a minimal $\MX$-covering, then for all $i\in I$ and all  $v\in\Val_{\pi_i}(\MX)$, there exists some $v'\in\Val_{\pi_i}(\MY)$ such that $v+v'\in\Val_{\pi_i}(\MX\cup\MY)$.
\item 
\label{iii}
If for each $v\in\Val(\MX)$ there exists $v'\in\Val(\MY)$ such that $v + v'\wdash\Phi\And\psi$, then there exists some minimal $\MX$-covering $J\subseteq I$.
\end{enumerate}
\end{proposition}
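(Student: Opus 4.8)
The plan is to reduce all three items to two observations. First, since each literal constrains a single variable and $\MX\cap\MY=\emptyset$, the literal set of a move factorizes: a valuation $z$ on $\MX\cup\MY$ lies in $\Val_{\pi_i}(\MX\cup\MY)$ iff its $\MX$-part lies in $\Val_{\pi_i}(\MX)$ and its $\MY$-part lies in $\Val_{\pi_i}(\MY)$, so $\Val_{\pi_i}(\MX\cup\MY)=\{\,v+v'\mid v\in\Val_{\pi_i}(\MX),\ v'\in\Val_{\pi_i}(\MY)\,\}$. Second, I will repeatedly apply Proposition~\ref{prop-TNF}(b) to the one-state finite trace $v+v'$ (recalling $\Phi\And\psi\equiv\TNF(\Phi\And\psi)=\bigvee_{i\in I}\pi_i$): such a trace $\wdash$-satisfies $\Phi\And\psi$ iff it $\wdash$-satisfies exactly one $\pi_i$, and $v+v'\wdash\pi_i$ forces $v+v'\wdash\Lit(\pi_i)$, i.e. $v+v'\in\Val_{\pi_i}(\MX\cup\MY)$. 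Throughout I read $v\wdash\Phi\And\psi$, for $v\in\Val(\MX)$, as ``some response $v'\in\Val(\MY)$ gives $v+v'\wdash\Phi\And\psi$'', in line with the definition of the bad positions $B$.

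For (a), assume $I$ is not an $\MX$-covering and choose $v\in\Val(\MX)\setminus\bigcup_{i\in I}\Val_{\pi_i}(\MX)$. For every $v'\in\Val(\MY)$ the factorization gives $v+v'\notin\Val_{\pi_i}(\MX\cup\MY)$, so $v+v'\not\wdash\Lit(\pi_i)$ and therefore $v+v'\not\wdash\pi_i$ for all $i\in I$ (the literal conjunct fails at position $0$, regardless of $\Fut(\pi_i)$). By the contrapositive of Proposition~\ref{prop-TNF}(b), $v+v'\not\wdash\Phi\And\psi$; as $v'$ is arbitrary, $v\not\wdash\Phi\And\psi$.

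For (c), I first show $I$ is itself an $\MX$-covering. For arbitrary $v\in\Val(\MX)$ the hypothesis yields $v'$ with $v+v'\wdash\Phi\And\psi$; by Proposition~\ref{prop-TNF}(b) some $i\in I$ has $v+v'\wdash\pi_i$, hence $v+v'\in\Val_{\pi_i}(\MX\cup\MY)$ and, by the factorization, $v\in\Val_{\pi_i}(\MX)$. Thus $\bigcup_{i\in I}\Val_{\pi_i}(\MX)=\Val(\MX)$. A minimal sub-covering $J\subseteq I$ is then extracted by discarding moves one at a time as long as the covering property survives; because $I$ is finite this halts, and the resulting $J$ is a covering no proper subset of which covers $\Val(\MX)$---exactly Definition~\ref{def-covering}.

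Item (b) is where I expect the real difficulty. Through the factorization, finding $v'$ for a given $v\in\Val_{\pi_i}(\MX)$ amounts to $\Val_{\pi_i}(\MY)\neq\emptyset$, i.e. to the $\MY$-literals of $\pi_i$ being satisfiable. The abstract Definition~\ref{def-TNF} does not rule out an unsatisfiable move (one containing $s$ and $\Not s$, or $x=c$ and $x=c'$ with $c\neq c'$), and such a move could a priori sit inside a minimal $\MX$-covering with $\Val_{\pi_i}(\MX)\neq\emptyset$ but $\Val_{\pi_i}(\MY)=\emptyset$, which would defeat the claim. The key lemma to establish is therefore that every move of the \emph{constructed} $\TNF(\Phi\And\psi)$ has a satisfiable literal set, so that $\Val_{\pi_i}(\MX\cup\MY)\neq\emptyset$ and hence, by the factorization, both factors are non-empty. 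I would prove this as an invariant of the construction in Proposition~\ref{prop-TNF-equiv}: $\DNF(\Phi\And\psi)$ retains only satisfiable disjuncts, and each rewrite~(\ref{TNF-gen-equiv}) is applied only to jointly consistent pairs, so the disjuncts it creates ($\delta\And\delta_1\And\delta_2\And(\eta_1\Or\eta_2)$ and the two $\DNF(\cdot)$ terms) again have satisfiable literal parts. Granting this invariant, (b) is immediate: choosing any $v'\in\Val_{\pi_i}(\MY)$, the valuation $v+v'$ satisfies all $\MX$-literals through $v$ and all $\MY$-literals through $v'$, so $v+v'\in\Val_{\pi_i}(\MX\cup\MY)$.
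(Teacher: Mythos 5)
Your items (a) and (c) are correct and are, in expanded form, exactly the paper's route: the paper proves (a) by picking $v\in\Val(\MX)$ with $v\not\wdash\pi_i$ for all $i$ and invoking Proposition~\ref{prop-TNF}, and dismisses (b) and (c) as ``easy consequences of Definition~\ref{def-covering}''. Your factorization $\Val_{\pi_i}(\MX\cup\MY)=\{v+v'\mid v\in\Val_{\pi_i}(\MX),\ v'\in\Val_{\pi_i}(\MY)\}$ and your reading of $v\not\wdash\Phi\And\psi$ as ``for all $v'\in\Val(\MY)$, $v+v'\not\wdash\Phi\And\psi$'' are both the intended ones (the latter is how the result is used in Proposition~\ref{prop-non-x-covering}), and the greedy extraction of a minimal subcover in (c) is unproblematic.

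On (b) you have correctly located the real content, which the paper's proof skips entirely: Definition~\ref{def-TNF} does not exclude a move with an unsatisfiable literal set, and for such a move inside a minimal $\MX$-covering item (b) is simply false. For instance $(p_e\And s\And\Not s)\Or\Not p_e$ satisfies Definition~\ref{def-TNF}, is a minimal $\MX$-covering, and violates (b) at the first move; moreover the construction of Proposition~\ref{prop-TNF-equiv} returns this formula verbatim when started on it, so some pruning or invariant must indeed be added. However, your proposed invariant has a hole of its own. First, the claim that $\DNF$ ``retains only satisfiable disjuncts'' is an assumption the paper never states. Second, and more seriously, the guard on the rewrite~(\ref{TNF-gen-equiv}) is purely syntactic---it only forbids a literal of one disjunct whose \emph{negation} occurs in the other---so with enumerated types it will merge a move containing $x=c_1$ with a move containing $x=c_2$ (for $c_1\neq c_2$): neither equation is the syntactic negation of the other, yet the merged disjunct $\delta\And\delta_1\And\delta_2\And(\eta_1\Or\eta_2)$ then has an unsatisfiable literal set even though both inputs were satisfiable. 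So ``applied only to jointly consistent pairs'' is precisely what the construction does not guarantee. The cleaner repair is not your invariant but a normalization convention: delete every disjunct whose literal set is unsatisfiable (such a disjunct is equivalent to $\cF$, so deletion preserves logical equivalence, and it trivially preserves the pairwise complementary-literal condition of Definition~\ref{def-TNF}). With that convention built into $\TNF$, every move of a minimal $\MX$-covering has $\Val_{\pi_i}(\MX\cup\MY)\neq\emptyset$, and your factorization finishes (b) immediately.
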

\begin{proof}
For item ($\ref{i}$), there exists $v\in\Val(\MX)$ such that $v\not\wdash\pi_i$ for all $1\leq i\leq n$.  By Proposition \ref{prop-TNF}, $v\not\wdash\Phi\And\psi$. Items ($\ref{ii}$) and ($\ref{iii}$), are easy consequences of Definition \ref{def-covering}.\qed
\end{proof}

To handle strict-future formulas $\Fut(\pi)$ in the tableau rules we
introduce the symbol $\Ddot{\Or}$ which is semantically equivalent to
$\Or$, but our tableau rules deal differently with both disjunctive operators.
More precisely, strict-future subformulas $\Fut(\pi)$ (inside moves of \TNF formulas) will be written as $\Ddot{\bigvee}_{i=1}^m\delta_i$.


\section{Realizability Tableaux}
\label{sec:tableaux}

Our tableaux are AND-OR trees of nodes, each labelled by a set of formulas. Hence, each node has 0, 1 or more AND-successors or OR-successors.
A node is said to be the parent of its successors nodes.
In the examples below we mark AND-nodes with a semicircle embracing
all the edges to the AND-successors of a node. A tableau is
constructed for an input safety specification that is the root of the
tree. The set of tableau rules (see Figures \ref{fig-always-rules},
\ref{fig-saturation-rules} and \ref{fig-next-state-rule}) determine
how the tableau can be constructed. Each rule determines the labels on
the children of a node and the kind (AND or OR) of its successors.  A
tableau is completed (or finished) when no further rule can be applied. Rules
can be applied only to nodes in branches that are neither failed nor
successful. A node is called a leaf when no rule can be applied to
it. There are two kinds of leaves. Failure leaves are labelled by
(syntactically) inconsistent sets of formulas, which indicates that the
branch from the root to the leaf is failed. Successful leaves are
labelled by sets of formulas that are subsumed (in the sense we will
precise below) by some previous node in the branch from the root to
the leaf.

The following definition formalizes our notion of tableau in terms of
many concepts that will be precised in the next subsections.

\begin{definition}
\label{def-tableau}
A {\it tableau} for a safety specification $\varphi = \alpha\And\Always\psi$ is a labelled tree $\Tableau(\varphi) = (N, \tau, R)$, where $N$ is a set of nodes, $\tau$ is a mapping of the nodes of $T$ to formulas in $\Clo(\varphi)$ and $R\subseteq N\times N$, such that the following conditions hold:
\begin{itemize}
\item 
The root is labelled by the set $\{\alpha,\Always\psi\}$.
\item 
For any pair of nodes $(n,n')\in R$, $\tau(n')$ is the set of formulas obtained as the result of the application of one of the tableau rules (in Figures \ref{fig-always-rules}, \ref{fig-saturation-rules} and \ref{fig-next-state-rule}) to $\tau(n)$.  Given the applied rule is $\rho$, we term $n'$ a $\rho$-successor of $n$
\item For every success or failure leaf $n$ there is no $n'\in N$ such that
  $(n,n')\in R$ where:
\begin{itemize}
    \item 
A failure leaf is a node such that $n\in N$ such that $\Incons(\tau(n))$ (see Definition \ref{def-inconsistent}).
\item
A success leaf is a node $n\in N$ such that $\Always\psi\in\tau(n)$ and there exists $k\geq 0$, $n_0,\dots,n_k\in N$
such that $(n_i,n_{i+1})\in R$ for all $0\leq i <k$, $(n_k,n)\in R$
and $\tau(n_0)\lessdot\tau(n)$ (see Definition \ref{def-lessdot}).
\end{itemize}
\end{itemize}
\end{definition}
For any tableau rule $\rho$, the set of $\rho$-successors of the node to which $\rho$ is applied is the set of children generated by $\rho$. Note that the $\rho$-successors of a node are OR-siblings for any rule $\rho$ except for the rule $(\Always\&)$ which are AND-siblings. In examples, we use an arc for linking the edges of the nodes that are AND-siblings, and no mark in the edges of OR-siblings.

\subsection{Subsumption and Syntactical Inconsistency}
The set of formulas used to label our tableaux nodes are subsumption-free with respect to the collection of subsumption rules between temporal formulas given in Definition \ref{def-subsumption} and classical subsumption on boolean formulas. 
Let $\beta\sqsubseteq\gamma$ denote the fact that $\beta$ {\it subsumes} $\gamma$ or that $\gamma$ {\it is subsumed by}~$\beta$. Subsumption is related to logical implication or logical consequence in the sense that,
if $\beta \sqsubseteq \gamma$, then 
$\models\beta \rightarrow \gamma$ or equivalently $\beta\models\gamma$. Note that the converse is not necessarily true. Classical subsumption rules includes $\beta\sqsubseteq\beta$, $\beta\And\gamma\sqsubseteq\beta$, and $\beta\sqsubseteq\beta\Or\gamma$.
\begin{definition} 
\label{def-subsumption}
The subsumption rules for temporal formulas that apply in our tableau method are:
\begin{itemize}
\item For all $n\leq n'$ and $m'\leq m$\footnote{By construction of our tableaux, we always generate from formulas with starting interval at $n$ new formulas were $n' = n$. Note also that $\Next^n\beta = \Eventually_{[n,n]} \beta = \Always_{[n,n]} \beta$.},\\ $\Eventually_{[n',m']}\beta \sqsubseteq \Eventually_{[n,m]}\beta$, $\Always_{[n,m]}\beta \sqsubseteq \Always_{[n',m']}\beta$,
and $\Always_{[n',m']}\beta \sqsubseteq\Eventually_{[n,m]}\beta$.
\item For all $n\leq k\leq m$: $\Next^{k}\beta \sqsubseteq\Eventually_{[n,m]}\beta$ and $\Always_{[n,m]}\beta\sqsubseteq \Next^{k}\beta$.
\end{itemize}
\end{definition}
The following result easily follows from Def. \ref{def-subsumption} and semantics.
\begin{proposition}
\label{prop-subsumption}
  Let $\beta\sqsubseteq\gamma$ be a pair of formulas.
  For any trace $\sigma$, if $\sigma\models\beta$ then
  $\sigma\models\gamma$.
  For any finite trace $\lambda$, if $\lambda\wdash\beta$ then
  $\lambda\wdash\gamma$.
  Consequently, $\sigma\not\models\beta\And\widetilde{\gamma}$
  and $\lambda\not\wdash\beta\And\widetilde{\gamma}$ for any 
  $\sigma$ and $\lambda$.
\end{proposition}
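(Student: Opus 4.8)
The plan is to reduce the statement to a finite case analysis over the subsumption schemas that generate $\sqsubseteq$: the three classical rules ($\beta\sqsubseteq\beta$, $\beta\And\gamma\sqsubseteq\beta$, $\beta\sqsubseteq\beta\Or\gamma$) together with the five temporal schemas of Definition~\ref{def-subsumption}. Any pair $\beta\sqsubseteq\gamma$ is an instance of one of these, so it suffices to verify the two monotonicity claims—preservation under $\models$ and under $\wdash$—separately for each schema, directly against the corresponding clause of the \ltl semantics and of Definition~\ref{def-witness}. The classical rules are immediate, so the work is concentrated in the temporal cases.

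For the $\models$ direction the temporal cases follow from interval inclusion. If $n\leq n'$ and $m'\leq m$ then $[n',m']\subseteq[n,m]$: a witness index for $\Eventually_{[n',m']}\beta$ already lies in $[n,m]$, giving $\Eventually_{[n',m']}\beta\sqsubseteq\Eventually_{[n,m]}\beta$; dually the universal range of $\Always_{[n,m]}\beta$ covers that of $\Always_{[n',m']}\beta$. The mixed rule $\Always_{[n',m']}\beta\sqsubseteq\Eventually_{[n,m]}\beta$ uses that intervals are non-empty (by construction $n'\leq m'$) to extract a single witness, say at index $n'\in[n,m]$. The two $\Next^k$ rules use $n\leq k\leq m$ to place the index $k$ inside the interval.

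For the $\wdash$ direction I would first record an auxiliary characterization of iterated next under the finite semantics, proved by induction on $k$: for a trace $\lambda$ of length $d$, $\lambda\wdash\Next^k\beta$ holds iff $d\leq k$ or $\lambda^k\wdash\beta$. With this in hand the same index/interval arguments transfer, the only extra care being the truncation to $\min(m,d)$ in the $\Always_I$ clause and the guard $m<d$ in the $\Eventually_I$ clause. In each monotonicity case the relevant index is strictly below $d$—it is bounded by some $m<d$ forced by the guard, or equals a $k<d$—so the truncation never discards it; the interval-inclusion and index-membership reasoning then goes through verbatim.

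Finally, the consequence. For infinite traces it is immediate: $\widetilde\gamma$ is classically equivalent to $\neg\gamma$, so $\sigma\models\beta$ forces $\sigma\models\gamma$, whence $\sigma\not\models\widetilde\gamma$ and $\beta\And\widetilde\gamma$ is unsatisfiable. For finite traces the same shape applies—$\lambda\wdash\beta$ forces $\lambda\wdash\gamma$—but one must additionally check that no finite trace $\wdash$-satisfies both $\gamma$ and $\widetilde\gamma$. I expect this to be the main obstacle: the relation $\wdash$ is \emph{optimistic} about positions beyond the horizon $d$ (for instance a $\Next$ or an $\Eventually_I$ is accepted vacuously once it points past $d$), so $\wdash$ is not negation-complete and a naive appeal to excluded middle is unavailable. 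I would therefore establish the exclusivity of $\gamma$ and $\widetilde\gamma$ under $\wdash$ by induction on the structure of $\gamma$ for the specific forms produced by the subsumption schemas, reducing complementarity of a temporal formula to complementarity of $\beta$ and $\widetilde\beta$ on the appropriate suffixes $\lambda^j$ and bottoming out at literals, where $\lambda_0(\ell)=1$ and $\lambda_0(\neg\ell)=1$ are plainly incompatible.
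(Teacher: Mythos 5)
Your schema-by-schema case analysis and the two monotonicity claims are correct, and they supply exactly what the paper leaves implicit: the paper gives no proof of this proposition at all, merely asserting that it ``easily follows from Def.~\ref{def-subsumption} and semantics.'' Your auxiliary lemma ($\lambda\wdash\Next^k\beta$ iff $d\leq k$ or $\lambda^k\wdash\beta$) is right, and your observation that the guard $m<d$ of the $\Eventually_I$ clause and the truncation $\min(m,d)$ of the $\Always_I$ clause keep every index you need strictly below $d$ is the correct way to push the interval-inclusion arguments through the finite semantics.

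The gap is in your final step, and it cannot be repaired along the lines you propose: the exclusivity lemma you plan to prove by induction --- that no finite trace $\wdash$-satisfies both $\gamma$ and $\widetilde{\gamma}$ for the $\gamma$'s produced by the schemas --- is false, for precisely the reason you flag. Take $\gamma=\Next^k\beta$ (as produced by the schema $\Always_{[n,m]}\beta\sqsubseteq\Next^k\beta$) and any $\lambda$ of length $d\leq k$: both $\Next^k\beta$ and $\Next^k\widetilde{\beta}$ hold vacuously. Worse, the proposition's own finite-trace consequence fails on such instances, so no proof exists to be found. Concretely, let $\beta=\Next^5 c$ and $\gamma=\Eventually_{[2,8]}c$; then $\beta\sqsubseteq\gamma$ by the second bullet of Definition~\ref{def-subsumption}, and $\widetilde{\gamma}=\Always_{[2,8]}\Not c$. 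For a one-state trace $\lambda$ we have $\lambda\wdash\Next^5 c$ vacuously (since $d=1\leq 5$) and $\lambda\wdash\Always_{[2,8]}\Not c$ vacuously (the range $2\leq j\leq\min(8,1)$ is empty), hence $\lambda\wdash\beta\And\widetilde{\gamma}$, contradicting the statement; the same phenomenon appears already with the classical rule $\beta\sqsubseteq\beta$ applied to $\beta=\Next p$. So what you identified as ``the main obstacle'' is in fact a counterexample to the statement as written: the $\wdash$ half of the ``Consequently'' clause needs a side condition, e.g.\ that $\lambda$ is longer than the temporal depth of the formulas involved ($d>\depth(\gamma)$). Under that hypothesis your structural induction does go through --- every temporal obligation then falls within the horizon, complementarity reduces to complementarity of $\beta$ and $\widetilde{\beta}$ on genuine suffixes $\lambda^j$, and the base case is literals --- so the right move is to prove the proposition with that added hypothesis (which is what the tableau soundness argument actually needs), and to record the counterexample against the unrestricted claim.
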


According to Proposition \ref{prop-subsumption},
subsumption is used to detect (syntactical) inconsistencies (see Definition \ref{def-inconsistent}(\ref{b1})). Inconsistencies are used to close tableau branches by a failure leaf. No rule is applied to a node labelled by an inconsistent set.

\begin{definition} 
\label{def-inconsistent}
A set of formulas $\Phi$ is {\em (syntactically) inconsistent} (denoted $\Incons(\Phi)$) 
if and only if one of the following four conditions hold:
\begin{enumerate}[(a)]
    \item $\cF\in\Phi$
    \item
    \label{b1}
    $\{\beta, \widetilde{\gamma}\}\subseteq\Phi$ for some $\beta,\gamma$ such that $\beta\sqsubseteq\gamma$ 
    \item $\{x = c_1, x = c_2\}\subseteq\Phi$ for some $c_1\neq c_2$
    \item $\{ \Not(x = c) \mid c\in T\}\subseteq\Phi$ for some enumerated type $T$.
\end{enumerate}
Otherwise,  $\Phi$ is {\em (syntactically) consistent}, denoted $\Cons(\Phi)$.
\end{definition}
\begin{example}
\label{ex-inconsistent}
The set $\{\Always_{[2, 8]} c , \Next^5 \Not c \}$ is inconsistent because
$\Always_{[2, 8]} c \sqsubseteq \Next^5  c$ and $\Next^5 \Not c = \widetilde{\Next^5  c} $.
\end{example}

Inconsistencies are used to close tableau branches. No rule is applied to a node labelled by an inconsistent set, and this node is called a failure leaf.

%
We define the following subsumption-based order relation between sets of formulas for detecting successful leaves.
\begin{definition}
\label{def-lessdot}
For two given set of formulas $\Phi$ and $\Phi'$, we say that $\Phi \lessdot \Phi'$ if and only if for every formula $\beta\in\Phi$ there exists some $\beta'\in\Phi'$ such that  $\beta\sqsubseteq\beta'$. For two given strict-future formulas,
$\Ddot{\bigvee}_{i=1}^n\Delta_i\sqsubseteq\Ddot{\bigvee}_{j=1}^m\Gamma_j$ if and only if for all $1\leq i\leq n$ there exists $1\leq j\leq m$ such that $\Delta_i\lessdot\Gamma_j$.
\end{definition}

\begin{proposition}
\label{prop-lessdot}
For any finite trace $\lambda$ and any pair of set of formulas $\Phi$ and $\Phi'$ such that $\Phi\lessdot \Phi'$, if $\lambda\wdash\Phi$ then $\lambda\wdash\Phi'$.
\end{proposition}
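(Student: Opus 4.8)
The plan is to reduce the claim to an elementwise consequence of subsumption monotonicity. Since a finite set of formulas is read conjunctively under $\wdash$, proving $\lambda\wdash\Phi'$ is the same as proving $\lambda\wdash\gamma$ for every $\gamma\in\Phi'$. For a fixed such $\gamma$, the relation $\Phi\lessdot\Phi'$ of Definition \ref{def-lessdot} supplies a formula $\beta\in\Phi$ with $\beta\sqsubseteq\gamma$; since $\lambda\wdash\Phi$ already gives $\lambda\wdash\beta$, everything reduces to transporting satisfaction across a single subsumption step, i.e. to showing $\lambda\wdash\beta$ implies $\lambda\wdash\gamma$ whenever $\beta\sqsubseteq\gamma$. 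I would organize the whole argument as an induction on the nesting depth of the disjunctive operator $\Ddot{\bigvee}$ in the pair $(\beta,\gamma)$.

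In the base case $\beta$ and $\gamma$ are not strict-future disjunctions, so $\beta\sqsubseteq\gamma$ is an instance either of a classical subsumption rule or of one of the temporal rules of Definition \ref{def-subsumption}; for all of these the required transport $\lambda\wdash\beta\Rightarrow\lambda\wdash\gamma$ is exactly the finite-trace clause of Proposition \ref{prop-subsumption}. The inductive step is the genuinely new case, where $\beta=\Ddot{\bigvee}_{i=1}^n\Delta_i$ and $\gamma=\Ddot{\bigvee}_{j=1}^m\Gamma_j$ are strict-future formulas. Here I would exploit that $\Ddot{\bigvee}$ is semantically $\Or$: from $\lambda\wdash\beta$ we obtain an index $i$ with $\lambda\wdash\Delta_i$. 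By the strict-future clause of Definition \ref{def-lessdot}, $\beta\sqsubseteq\gamma$ unfolds to $\Delta_i\lessdot\Gamma_j$ for some $j$, and since the disjunct sets $\Delta_i,\Gamma_j$ have strictly smaller $\Ddot{\bigvee}$-depth, the induction hypothesis (the proposition applied to $\Delta_i$ and $\Gamma_j$) yields $\lambda\wdash\Gamma_j$, hence $\lambda\wdash\gamma$.

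The main obstacle I anticipate is precisely this recursive interdependence: $\lessdot$ on sets is defined through $\sqsubseteq$, while $\sqsubseteq$ on strict-future formulas is in turn defined through $\lessdot$ on their disjunct sets, so the argument cannot be a flat pointwise application of Proposition \ref{prop-subsumption} and must be set up as a well-founded induction whose measure strictly decreases when passing from a $\Ddot{\bigvee}$-formula to its disjuncts. Two further points require care: one must invoke the finite-trace ($\wdash$) half of Proposition \ref{prop-subsumption} rather than its infinite-trace ($\models$) half, and one must apply the existential reading of the disjunctive semantics of $\Ddot{\bigvee}$ in the correct direction, matching a satisfied disjunct on the $\Phi$-side to a weaker disjunct on the $\Phi'$-side. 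Once these are in place the remaining steps are routine.
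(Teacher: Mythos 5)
Your proposal is correct in substance and follows the same route as the paper's own proof, which is the one-line appeal ``by Definition \ref{def-lessdot} and Proposition \ref{prop-subsumption}'': reduce $\lambda\wdash\Phi'$ to an elementwise claim and transport satisfaction across single subsumption steps. Two points deserve comment.

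First, you quote Definition \ref{def-lessdot} in transposed form. The paper literally defines $\Phi\lessdot\Phi'$ as: for every $\beta\in\Phi$ there exists some $\beta'\in\Phi'$ with $\beta\sqsubseteq\beta'$. Your proof instead uses: for every $\gamma\in\Phi'$ there exists some $\beta\in\Phi$ with $\beta\sqsubseteq\gamma$. These are not equivalent, and under the paper's literal reading the proposition is false: take $\Phi=\{p\}$ and $\Phi'=\{p,q\}$; then $\Phi\lessdot\Phi'$ holds (via $p\sqsubseteq p$), yet a finite trace with $\lambda_0(p)=1$ and $\lambda_0(q)=0$ satisfies $\lambda\wdash\Phi$ but not $\lambda\wdash\Phi'$. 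Your transposed reading is the only one under which the statement (and its use for success leaves, where satisfaction of the ancestor label must carry over to the descendant label) is sound, so you have in effect silently corrected a quantifier typo; the paper's own one-line proof tacitly does the same. You should flag this as a correction rather than attribute the transposed form to the definition as written.

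Second, your well-founded induction on the $\Ddot{\bigvee}$-nesting depth is a genuine improvement in rigor over the paper's proof. Since $\sqsubseteq$ on strict-future formulas is defined (in Definition \ref{def-lessdot}) through $\lessdot$ on their disjunct sets, while $\lessdot$ is defined through $\sqsubseteq$, and Proposition \ref{prop-subsumption} is stated before this mutual recursion is introduced, a flat appeal to Proposition \ref{prop-subsumption} does not by itself cover node labels containing strict-future formulas---which are precisely the labels on which the loop check is applied. Your induction discharges this, and the quantifier pattern in your inductive step (a satisfied disjunct $\Delta_i$ on the left is matched to some $\Gamma_j$ with $\Delta_i\lessdot\Gamma_j$, to which the induction hypothesis applies) runs in the correct direction.
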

\begin{proof}
By Definition \ref{def-lessdot} and Proposition \ref{prop-subsumption}.\qed
\end{proof}

No rule is applied to a node that is labelled by a set $\Phi'$, such that for some previous label $\Phi$ in the same branch (some path from the root) it holds that 
$\Phi\lessdot\Phi'$ (see Definition \ref{def-tableau}).

\subsection{Tableau Rules}
In this section, we introduce the tableau rules, along with the concepts and properties related with the sets of formulas generated from a safety specification according to this set of  rules. 

\begin{figure}[t!]
\centering
\[
\begin{array}{lll}
(\Always\cF) &  \displaystyle{\frac{ \Phi,\Always \psi}
{\cF, \Always\psi}} &\mbox{ if } \tau \mbox{ is not an $\MX$-covering} \\[4mm]
(\Always\|) & \displaystyle{\frac{\Phi,\Always \psi}
{\bigvee_{i\in J_1}\pi_i,\Always\psi \;\mid\dots\mid\; \bigvee_{i\in J_m}\pi_i,\Always\psi}} & \mbox{ if } J_1,\dots,J_m \mbox{ is the collection of } \\
 & & \mbox{ all minimal $\MX$-covering of } \tau\\[2mm]
(\Always\&) & \displaystyle{\frac{ \bigvee_{i\in I}\pi_i,\Always \psi}{ \pi_{1}, \Next\Always\psi \;\;\&\; \ldots \;\;
\&\; \pi_{n}, \Next\Always\psi}}  & \mbox{ if } I \mbox{ is a minimal  $\MX$-covering}\\
\end{array}
\]
\vspace{-5mm}
\caption{Always Rules (where $\tau$ denotes $\TNF(\Phi\And\psi)$)}
\label{fig-always-rules}
\end{figure}

First, the {\it Always Rules} in Figure \ref{fig-always-rules}
provides a non-deterministic procedure of analyzing the minimal
$\MX$-coverings in the $\TNF(\Phi\And\psi)$ (see Definition
\ref{def-covering} and Proposition \ref{prop-covering}). The rule
$(\Always\&)$ is the only rule in our system that produces
AND-successors for splitting the cases of each minimal $\MX$-covering.

\begin{figure}[t!]
\centering
$$
\begin{array}{ll}
(\Or)\; \displaystyle{\frac{ \Phi, \beta\Or\gamma}{ \Phi, \beta \;\mid\; \Phi,\gamma}}
 & \hspace{1.5cm}
(\Ddot{\Or}\Or)\;\displaystyle{\frac{ \Phi, (\eta\wedge(\beta\Or\gamma))\Ddot{\Or}\delta }{ \Phi, (\eta\And\beta) \Ddot{\Or} (\eta\And \gamma)\Ddot{\Or}\delta}}
\\[5mm]
(\And) \;\displaystyle{\frac{ \Phi, \beta\And\gamma}{ \Phi, \beta, \gamma}} 
& 
\\[4mm]
\multicolumn{2}{l}{
(\Eventually\!<) \; \displaystyle{\frac{ \Phi,\, \, \Eventually_{[n,m]}\beta}{ \Phi,\,   \;  \Next^n\beta\;\;\mid \;\;\Phi, \, \Next\Eventually_{[n,m-1]}\beta}}\;\;\mbox {if } n < m} 
\\[4mm]
 \multicolumn{2}{l}{(\Ddot{\Or}\Eventually\!<) \; \displaystyle{\frac{ 
 \Phi,\, (\eta\And\Eventually_{[n,m]}\beta)\Ddot{\Or}\delta}
{ \Phi,\,   
(\eta\And\Next^n\beta)\Ddot{\Or}(\eta\And\Next\Eventually_{[n,m-1]}\beta\}\Ddot{\Or}\delta}} \;\;\mbox {if } n < m}
\\[4mm]
(\Eventually\!=) \;\displaystyle{\frac{\Phi, \, \Eventually_{[n,n]}\beta}{ \Phi, \,\Next^n\beta}} 
& \hspace{1.5cm}
(\Ddot{\Or}\Eventually\!=) \;\displaystyle{\frac{\Phi,\, (\eta\And\Eventually_{[n,n]}\beta)\Ddot{\Or}\delta}{ \Phi,\, (\eta\And\Next^n\beta)\Ddot{\Or}\delta}}
\\[4mm]
(\Always\!<) \; \displaystyle{\frac{ \Phi,\,\Always_{[n,m]}\beta}{ \Phi,\, \Next^n\beta,\,  \Next\Always_{[n,m-1]}\beta}} \;\;\mbox {if } n < m  
& \hspace{1.5cm}
(\Always\!=) \; \displaystyle{\frac{\Phi,\, \Always_{[n,n]}\beta}{ \Phi,\,\Next^n\beta}}
\\[5mm]
\multicolumn{2}{l}{(\Ddot{\Or}\Always\!<) \; \displaystyle{\frac{ \Phi,\,(\eta\And\Always_{[n,m]}\beta)\Ddot{\Or}\delta}
{\Phi,\, (\eta\And\Next^n\beta\And \Next\Always_{[n,m-1]}\beta) \Ddot{\Or}\delta }} \;\;\mbox {if } n < m}  
\\[4mm]
\multicolumn{2}{l}{(\Ddot{\Or}\Always\!=) \; \displaystyle{\frac{\Phi,\,(\eta\And\Always_{[n,n]}\beta)\Ddot{\Or}\delta}
{ \Phi,\,(\eta\And\Next^n\beta)\Ddot{\Or}\delta}}
}
\end{array}
$$
\vspace{-5mm}
\caption{Saturation Rules}
\label{fig-saturation-rules}
\end{figure}
Next, we introduce the set of rules that are used to decompose
formulas into its constituents, in the usual way that tableau methods
perform the so-called {\it saturation}.
In our method, decomposition of formulas inside the conjunction (or sets) connected by the operator $\Ddot{\Or}$ just performs an unfolding in the formula. 
The {\it Saturation Rules} in Figure~\ref{fig-saturation-rules} are used to saturate with respect to classical connectives $\And$ and $\Or$ (including $\Ddot{\Or}$) and temporal operators $\Eventually_I$ and $\Always_I$.

\begin{proposition}
\label{prop-correct-rules}
\leavevmode
For any saturation rule {\Large $\frac{\Phi}{\Phi_1\mid\cdots\mid\Phi_k}$}, it holds that
$\sigma\models\Phi$ if and only if $\sigma\models\Phi_i$ for some $1\leq i\leq k$.
\end{proposition}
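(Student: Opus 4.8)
The plan is to prove the statement by a direct case analysis over the eleven rules of Figure~\ref{fig-saturation-rules}, exploiting the fact (stated in Section~\ref{subsect:LTL}) that a trace satisfies a finite set of formulas precisely when it satisfies the conjunction of its members. Every rule has a premise of the form $\Phi,\chi$, where $\Phi$ is a shared context and $\chi$ is a single distinguished formula, and each conclusion $\Phi_i$ is obtained by replacing $\chi$ with a collection of formulas whose conjunction I denote $\chi_i$. Thus it suffices to establish, for each rule, the purely logical equivalence $\chi\equiv\bigvee_{i=1}^{k}\chi_i$; the proposition then follows because $\sigma\models\Phi\And\chi$ iff $\sigma\models\Phi$ and $\sigma\models\bigvee_i\chi_i$, which by distributivity is equivalent to $\sigma\models\Phi\And\chi_i$ for some $i$, i.e. to $\sigma\models\Phi_i$ for some $i$.

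First I would treat the propositional rules. For $(\Or)$ the rule has $k=2$ and $\chi\equiv\chi_1\Or\chi_2$ is literally $\beta\Or\gamma$, so the two branches $\Phi,\beta$ and $\Phi,\gamma$ are justified. For $(\And)$ we have $k=1$ and the content is only the set-as-conjunction fact $\chi=\beta\And\gamma\equiv\chi_1$. The temporal rules reuse the unfolding equivalences already proved in Proposition~\ref{prop-TNF-equiv}: for $n<m$ the equivalence $\Eventually_{[n,m]}\beta\equiv\Next^n\beta\Or\Next\Eventually_{[n,m-1]}\beta$ gives the two branches of $(\Eventually<)$ ($k=2$), and $\Always_{[n,m]}\beta\equiv\Next^n\beta\And\Next\Always_{[n,m-1]}\beta$ gives the single conclusion of $(\Always<)$; for $n=m$ the equivalences $\Eventually_{[n,n]}\beta\equiv\Next^n\beta$ and $\Always_{[n,n]}\beta\equiv\Next^n\beta$ justify $(\Eventually=)$ and $(\Always=)$, each with $k=1$.

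The only point requiring care is the family of rules acting inside strict-future formulas through $\Ddot{\Or}$, namely $(\Ddot{\Or}\Or)$, $(\Ddot{\Or}\Eventually<)$, $(\Ddot{\Or}\Eventually=)$, $(\Ddot{\Or}\Always<)$ and $(\Ddot{\Or}\Always=)$. Here the rewriting is applied to a single disjunct inside a nested disjunction $(\cdots)\Ddot{\Or}\delta$, and all these rules have $k=1$, so the task reduces to a plain equivalence between the premise disjunction and the conclusion disjunction. I would obtain it by substitutivity: since $\Ddot{\Or}$ is semantically $\Or$, replacing the affected disjunct by a logically equivalent formula preserves the meaning of the whole. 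The equivalence used on that disjunct is distributivity $\eta\And(\beta\Or\gamma)\equiv(\eta\And\beta)\Or(\eta\And\gamma)$ for $(\Ddot{\Or}\Or)$, and exactly the $\Eventually$/$\Always$ unfolding equivalences above (conjoined with $\eta$) for the remaining four. I do not anticipate any genuine obstacle; the proof is a routine enumeration, and the only subtlety is to phrase the $\Ddot{\Or}$-cases as applications of substitutivity under the $\Or$-reading of $\Ddot{\Or}$ rather than re-deriving each from scratch.
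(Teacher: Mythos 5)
Your proof is correct and takes essentially the same approach as the paper: the paper dismisses this proposition with the single line ``by routinely applying semantics,'' and your case analysis over the eleven rules---reducing each to a logical equivalence (the trivial propositional ones, the $\Eventually_I$/$\Always_I$ unfolding equivalences, and substitutivity under the $\Or$-reading of $\Ddot{\Or}$)---is precisely that routine semantic verification written out in full. The unfolding equivalences you invoke are indeed the same ones the paper already uses in the proof of Proposition~\ref{prop-TNF-equiv}, so there is no gap.
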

\begin{proof}
By routinely applying semantics.\qed
\end{proof}

\begin{definition}
\label{def-elementary-strict-future}
A next-formula is a formula whose first symbol is $\Next$.
A strict-future formula $\Ddot{\bigvee}_{i=1}^n\Delta_i$ is {\it elementary} if every formula in the set $\bigcup_{i=1}^n\Delta_i$ is a next-formula.
\end{definition}

\begin{proposition}
\label{prop-elementary-strict-future}
For any given strict-future formula $\delta$, there exists an elementary formula that we call $\delta^E$ such that $\delta\equiv\delta^E$ and $\delta^E$ is in \DNF.
\end{proposition}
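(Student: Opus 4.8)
The plan is to remove, from each conjunct of the given strict-future formula, every from-next formula that is not already a next-formula, by a single in-place unfolding, and then to restore \DNF\ form by distribution. Recall from Definition~\ref{def-strict} that $\delta$ is a \DNF\ combination $\Ddot{\bigvee}_{i}\Delta_i$ of from-next formulas, and that a from-next formula is one of $\Next\eta$, $\Eventually_{[n,m]}\eta$ or $\Always_{[n,m]}\eta$ with $n\geq 1$. Of these, the formulas $\Next\eta$ are already next-formulas in the sense of Definition~\ref{def-elementary-strict-future}, so only the bounded $\Eventually$ and $\Always$ operators need attention.

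First I would unfold each such bounded operator by a single application of the equivalences already validated in the proof of Proposition~\ref{prop-TNF-equiv}, namely $\Eventually_{[n,n]}\eta\equiv\Next^n\eta$ and $\Always_{[n,n]}\eta\equiv\Next^n\eta$ when $n=m$, and $\Eventually_{[n,m]}\eta\equiv\Next^n\eta\Or\Next\Eventually_{[n,m-1]}\eta$, $\Always_{[n,m]}\eta\equiv\Next^n\eta\And\Next\Always_{[n,m-1]}\eta$ when $n<m$, performing each replacement in place and leaving the residual $\Next$-guarded operators untouched. The observation that makes this a single, terminating pass is that, because $n\geq 1$, the formula $\Next^n\eta$ begins with $\Next$, and the residuals $\Next\Eventually_{[n,m-1]}\eta$ and $\Next\Always_{[n,m-1]}\eta$ likewise begin with $\Next$; hence all of them already qualify as next-formulas, and there is no need to recurse on the bound $m$. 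After this pass, $\delta$ has become a Boolean combination, built only from $\And$ and $\Or$, of next-formulas.

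It then remains to put this Boolean combination into disjunctive normal form, treating each next-formula as an opaque atom: distributing $\And$ over $\Or$ (as in the standard passage to \DNF) yields a formula $\Ddot{\bigvee}_{j}\Delta'_j$ in which every $\Delta'_j$ is a conjunction of next-formulas. This process clearly terminates and produces finitely many disjuncts. Taking $\delta^{E}$ to be the result, every member of every conjunct is a next-formula, so $\delta^{E}$ is elementary and, by construction, in \DNF. Since only logical equivalences and distributivity were applied, substitutivity gives $\delta\equiv\delta^{E}$.

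The argument presents no real obstacle; the single point to verify is that the \DNF\ distribution never reintroduces a non-next-formula, which is immediate because every atom created by the unfolding step already starts with $\Next$. I note that the in-place equivalences used above are exactly those encoded by the $\Ddot{\Or}$-saturation rules $(\Ddot{\Or}\Eventually\!<)$, $(\Ddot{\Or}\Eventually\!=)$, $(\Ddot{\Or}\Always\!<)$ and $(\Ddot{\Or}\Always\!=)$ together with $(\Ddot{\Or}\Or)$ of Figure~\ref{fig-saturation-rules}, whose soundness is Proposition~\ref{prop-correct-rules}.
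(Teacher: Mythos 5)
Your proof is correct and takes essentially the same route as the paper's: the paper's one-line proof simply says to apply the rules $(\Ddot{\Or}\Or)$, $(\Ddot{\Or}\Eventually\!<)$, $(\Ddot{\Or}\Eventually\!=)$, $(\Ddot{\Or}\Always\!<)$ and $(\Ddot{\Or}\Always\!=)$ repeatedly until none is applicable, which is exactly the unfold-then-distribute procedure you describe. Your write-up additionally makes explicit what the paper leaves implicit — that the process terminates after a single pass because every residual ($\Next^n\eta$ with $n\geq 1$, $\Next\Eventually_{[n,m-1]}\eta$, $\Next\Always_{[n,m-1]}\eta$) is already $\Next$-guarded, so no rule reapplies to it — and correctly grounds the equivalences in Proposition~\ref{prop-correct-rules}.
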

\begin{proof}
Repeatedly apply to $\delta$ the rules $(\Ddot{\Or}\And)$, $(\Ddot{\Or}\Eventually <)$, $(\Ddot{\Or}\Eventually =)$, $(\Ddot{\Or}\Always <)$ and $(\Ddot{\Or}\Always =)$, until no one can be applied. \qed
\end{proof}

\begin{definition}
A set of formulas $\Delta$ is {\em saturated} if and only if for all $\delta\in\Delta$ the following conditions holds:
\begin{compactitem}
    \item If $\delta = \beta \And \gamma$, then $\{\beta,\gamma\}\in \Delta$.
    \item If $\delta = \beta \vee \gamma$, then $\beta\in \Delta$ or $\gamma\in \Delta$.
    \item If $\delta = \Always_{[n,m]}\beta$ and $n < m$, then $\{\Next^n\beta, \Next\Always_{[n,m-1]}\beta\}\subseteq\Delta$.
     \item If $\delta = \Eventually_{[n,m]}\beta$ and $n < m$, then either $\Next^n\beta\in\Delta$.
    or $\Next\Eventually_{[n,m-1]}\beta\in\Delta$    
    \item If $\delta = \Always_{[n,n]}\beta$ or $\gamma = \Eventually_{[n,n]}\beta$, then $\Next^n\beta\in\Delta$.
    \item If $\delta$ is a strict-future formula, then $\delta^E\in\Delta$
    \end{compactitem}
We use $\Stt(\Delta)$ to denote the set of all (minimal) saturated sets that contains $\Delta$.
\end{definition}

\begin{proposition}
\label{prop-Stt}
Let $\Delta$ be a set of formulas,  $\sigma$ a trace and $\lambda$ a finite trace.
\begin{itemize}
\item $\sigma\models\Delta$ if and only if $\sigma\models\Phi$ for
  some $\Phi\in\Stt(\Delta)$.
\item $\lambda\wdash\Delta$ if and only if $\lambda\wdash\Phi$ for
  some $\Phi\in\Stt(\Delta)$.
\end{itemize}
\end{proposition}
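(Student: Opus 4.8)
The plan is to prove both biconditionals by analysing the saturation procedure that generates the members of $\Stt(\Delta)$ from $\Delta$, using Proposition \ref{prop-correct-rules} (and its finite-trace counterpart) as a per-step invariant. I treat the two directions asymmetrically. The backward ($\Leftarrow$) implications are immediate: every $\Phi\in\Stt(\Delta)$ is by definition a saturated set with $\Delta\subseteq\Phi$, so any $\sigma$ (resp. $\lambda$) that satisfies the conjunction $\Phi$ also satisfies the smaller conjunction $\Delta$. Hence all the content lies in the forward directions.

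For the forward direction over traces, suppose $\sigma\models\Delta$. I would build a saturated superset of $\Delta$ that $\sigma$ still models by iterating rule applications while maintaining the invariant ``$\sigma$ models the current set''. Starting from $\Delta_0=\Delta$, as long as the current set $\Delta_k$ fails some clause of the definition of saturated set, pick a witnessing formula $\delta\in\Delta_k$ and apply the matching Saturation Rule of Figure \ref{fig-saturation-rules}. By Proposition \ref{prop-correct-rules}, $\sigma\models\Delta_k$ iff $\sigma$ models one of the conclusions: for the single-conclusion rules ($\And$, $\Always\!<$, $\Always\!=$, $\Eventually\!=$ and the $\Ddot{\Or}$-variants) the conclusion records all the required constituents and is automatically satisfied, while for the branching rules ($\Or$, $\Eventually\!<$) at least one branch is satisfied and I add exactly that one, preserving the invariant. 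The clause on a strict-future $\delta$ is discharged by adding $\delta^E$, which is sound because $\delta\equiv\delta^E$ by Proposition \ref{prop-elementary-strict-future}. Termination follows from a standard multiset measure: each saturation step either eliminates a Boolean connective, or replaces a bounded operator by a residual wrapped under a fresh $\Next$ (which is inert for saturation) together with an immediate part $\Next^n\beta$ that is either inert ($n\geq 1$) or a strict subformula ($n=0$); the $\Ddot{\Or}$-normalisation producing $\delta^E$ also terminates by Proposition \ref{prop-elementary-strict-future}. The process therefore stops at a saturated set $\Phi^{*}\supseteq\Delta$ with $\sigma\models\Phi^{*}$.

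To land inside $\Stt(\Delta)$, recall that $\Stt(\Delta)$ collects the \emph{minimal} saturated supersets of $\Delta$; since the relevant sets are finite, I may choose a saturated $\Phi_0$ with $\Delta\subseteq\Phi_0\subseteq\Phi^{*}$ that is minimal among saturated supersets of $\Delta$ contained in $\Phi^{*}$. Any saturated superset of $\Delta$ properly contained in $\Phi_0$ would also be contained in $\Phi^{*}$, contradicting this minimality, so $\Phi_0$ is in fact globally minimal and thus $\Phi_0\in\Stt(\Delta)$; moreover $\sigma\models\Phi_0$ by monotonicity, as $\Phi_0\subseteq\Phi^{*}$. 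The finite-trace statement is obtained by the identical construction with $\wdash$ in place of $\models$. The main obstacle I anticipate is that Proposition \ref{prop-correct-rules} and the equivalence $\delta\equiv\delta^E$ are stated only for infinite traces, so before reusing the argument I must establish the analogous per-rule correctness, and the finite-semantics version $\lambda\wdash\delta$ iff $\lambda\wdash\delta^E$, directly from Definition \ref{def-witness}. This is the delicate point, since the finite semantics of $\Next$, $\Eventually_{[n,m]}$ and $\Always_{[n,m]}$ is guarded by side conditions on the length $d$ (the ``if $d>1$'' and ``if $n\leq m<d$'' clauses), and one must verify that each unfolding rule remains valid under these guards.
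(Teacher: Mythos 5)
Your proposal is correct and takes essentially the same route as the paper, whose entire proof is the single sentence ``By induction on the construction of $\Phi\in\Stt(\Delta)$'': your guided saturation (choosing the satisfied branch for $(\Or)$ and $(\Eventually\!<)$, adding all constituents for the single-conclusion rules, invoking Proposition~\ref{prop-correct-rules} per step) together with monotonicity for the backward direction is precisely that induction spelled out, with the extra care of descending to a minimal saturated superset so as to land inside $\Stt(\Delta)$. The finite-trace point you flag as delicate is indeed a real obligation (and it does hold, by case analysis on the length guards in Definition~\ref{def-witness}), but the paper omits it silently, so your treatment is if anything more complete than the original.
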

\begin{proof}
By induction on the construction of $\Phi\in\Stt(\Delta)$.\qed
\end{proof}

\begin{proposition}
\label{prop-always-rule}
Let  $\Phi$ be a set of safety formulas and let $J_1,\dots, J_m$ the collection of all minimal $\MX$-coverings in $\TNF(\Phi\And\psi) = \bigvee_{i\in I}\pi_i$. Then
\begin{enumerate}[(a)]
\item \label{a}
For any trace $\sigma$,
  $\sigma\models\Phi,\Always\psi$ iff
  $\sigma\models \pi_{i}, \Next\Always\psi$ holds for some $i\in J_k$
  for each $1\leq k\leq m$.
  Let $\lambda$ be finite trace, $\lambda\wdash\Phi\And\psi$ iff
  $\lambda\wdash\pi_{i}$ for some $i\in J_k$ for each $1\leq k\leq m$.
\item \label{b}
For any $1\leq k\leq m$ and any $i\in J_k$ the following two facts hold:
\begin{enumerate}[\textup{(}i\textup{)}]
\item If $\Incons(\Delta)$ for all $\Delta\in\Stt(\Phi\cup\{\pi_{i}\})$, then every $\lambda_0\in\Val_{\Delta}(\MX\cup\MY)$ is a witness of the violation of the safety specification $\Phi\And\Always\psi$. 
\item For any $\Delta\in\Stt(\Phi\cup\{\pi_{i}\})$ such that $\Cons(\Delta)$, it holds that $\lambda_0\wdash\Phi\And\psi$ for every $\lambda_0\in\Val_{\Delta}(\MX\cup\MY)$. 
\end{enumerate}
\end{enumerate}
\end{proposition}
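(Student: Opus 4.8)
Both claims are driven by the fixpoint unfolding $\Always\psi \equiv \psi \And \Next\Always\psi$, which separates the obligation on the current state ($\Phi\And\psi$) from the residual obligation on the suffix ($\Next\Always\psi$), and then hands the ``now'' part to the terse normal form. For (a) I use the global semantic characterisations of $\TNF$, and for (b) the saturation machinery together with the unsatisfiability of inconsistent sets.

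\textbf{Part (a).} First I rewrite $\Phi\And\Always\psi \equiv (\Phi\And\psi)\And\Next\Always\psi$ and, by Proposition~\ref{prop-TNF-equiv}, as $\TNF(\Phi\And\psi)\And\Next\Always\psi$, distributing to $\bigvee_{i\in I}(\pi_i\And\Next\Always\psi)$. For a trace $\sigma$, Proposition~\ref{prop-TNF}(a) says $\sigma$ satisfies exactly one disjunct $\pi_i$; conjoining the common factor $\Next\Always\psi$ yields $\sigma\models\Phi\And\Always\psi$ iff $\sigma\models\pi_i\And\Next\Always\psi$ for one $i\in I$. The finite-trace equivalence is identical, invoking Proposition~\ref{prop-TNF}(b) and the $\wdash$-clauses for $\Always_{[n,m]}$ and $\Eventually_{[n,m]}$ of Definition~\ref{def-witness}. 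To phrase this in terms of the minimal coverings $J_1,\dots,J_m$ I appeal to Proposition~\ref{prop-covering}: whenever these rules are applicable $\TNF(\Phi\And\psi)$ is an $\MX$-covering (otherwise $(\Always\cF)$ fires), so every environment valuation is covered inside each minimal covering, which is what lets one pass from ``the unique satisfied move'' to the covering-indexed form of the statement.

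\textbf{Part (b).} Fix $k$ and $i\in J_k$ and write $S=\Stt(\Phi\cup\{\pi_i\})$. For (i), each clause of Definition~\ref{def-inconsistent} makes a set unsatisfiable (clause (b) by the last line of Proposition~\ref{prop-subsumption}, the others trivially), so if every $\Delta\in S$ is inconsistent then no finite trace $\wdash$-satisfies any $\Delta$, whence by Proposition~\ref{prop-Stt} none $\wdash$-satisfies $\Phi\And\pi_i$ (and likewise for infinite models). Since the move-valuations $\{\Val_{\pi_j}(\MX\cup\MY)\}_j$ are pairwise disjoint (Proposition~\ref{prop-TNF}), any $\lambda_0\in\Val_{\pi_i}(\MX\cup\MY)$ can extend to a model of $\Phi\And\psi$ only through $\pi_i$ itself; as that is impossible, every extension violates $\Always\psi$ already at its first state, so $\lambda_0$ is a bad prefix of $\Phi\And\Always\psi$, which I record via Corollary~\ref{cor:witness}. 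For (ii), take a consistent $\Delta\in S$ and any $\lambda_0\in\Val_\Delta(\MX\cup\MY)$. On a length-one sequence the $\wdash$-clauses for $\Next$- and from-next formulas hold vacuously, and saturation has already reduced every strict-future formula of $\Delta$ to elementary next-form (Proposition~\ref{prop-elementary-strict-future}), so $\lambda_0\wdash\Delta$ reduces to satisfying the literals of $\Delta$, which holds by the choice of $\lambda_0$. Proposition~\ref{prop-Stt} then gives $\lambda_0\wdash\Phi\cup\{\pi_i\}$, and Proposition~\ref{prop-TNF}(b) upgrades $\lambda_0\wdash\pi_i$ to $\lambda_0\wdash\Phi\And\psi$.

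\textbf{Main obstacle.} The delicate points are, in (a), reconciling the single $\TNF$-satisfied move with the AND/OR covering structure of the $\Always$-rules, and, in (b)(ii), the bookkeeping showing that on a single-state sequence all strict-future obligations are vacuous, so that mere consistency of a saturated set already certifies a one-step witness.
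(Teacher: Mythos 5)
Your overall route is the same as the paper's: the paper proves (a) by citing Proposition~\ref{prop-TNF} (after the implicit unfolding $\Always\psi\equiv\psi\And\Next\Always\psi$), and (b) by citing (a) together with Proposition~\ref{prop-Stt}. Your part (b) is a correct and usefully more explicit elaboration of exactly that: inconsistency implies unsatisfiability via Proposition~\ref{prop-subsumption}, pairwise disjointness of the sets $\Val_{\pi_j}(\MX\cup\MY)$ pins the only move through which an extension of $\lambda_0$ could model $\Phi\And\psi$, and on a length-one sequence every strict-future obligation of a consistent saturated set is vacuous, so $\lambda_0\wdash\Delta$ reduces to the literals. That is the content the paper compresses into one line.

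The genuine gap is the bridge in your part (a). Proposition~\ref{prop-TNF} gives you that $\sigma\models\Phi\And\Always\psi$ iff $\sigma\models\pi_i\And\Next\Always\psi$ for exactly one $i\in I$, but your passage to the covering-indexed form via Proposition~\ref{prop-covering} (``every environment valuation is covered inside each minimal covering'') is a non sequitur: a minimal covering $J_k$ only guarantees that the environment part of $\sigma_0$ lies in $\Val_{\pi_j}(\MX)$ for some $j\in J_k$; nothing forces $\sigma$ to satisfy the system literals or the strict-future part of that $\pi_j$. In fact the unique move satisfied by $\sigma$ need not belong to $J_k$ at all: in Example~\ref{ex-TNF6} a trace satisfying $p_e\And c\And\eta_1$ satisfies no move of the minimal covering $\{p_e\And\Not c\And\eta_3,\;\Not p_e\And c\And\eta_2\}$, and in Example~\ref{ex-TNF5} the first two moves lie in no minimal covering whatsoever. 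So the step as you reason it would fail, and the ``for some $i\in J_k$ for each $1\leq k\leq m$'' reading cannot be recovered from coverage of environment valuations. To be fair, the paper's own one-line proof never addresses this indexing either; what Proposition~\ref{prop-TNF} actually supports (and what the rest of the paper uses) is the statement with $i$ ranging over all of $I$. You should either prove (a) in that $I$-indexed form, or note explicitly that the covering-indexed phrasing is strictly stronger and is not implied by the cited propositions.
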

\begin{proof}
Item (\ref{a}) follows from Proposition \ref{prop-TNF}. Item (\ref{b}) follows from item (\ref{a}) and Proposition \ref{prop-Stt}.\qed
\end{proof}

In the following two results, under certain conditions on $\TNF(\Phi\And\psi)$, we get some valuation $v\in\Val(\MX)$ such that $v + v'\not\wdash\Phi\And\psi$ holds for all $v'\in\Val(\MY)$.

\begin{proposition}
\label{prop-non-x-covering}
Let $\Phi$ be any set of safety formulas. If $\TNF(\Phi\And\psi)$ is not an $\MX$-covering, then there exists some $v\in\Val(\MX)$ such that for all $v'\in\Val(\MY)$,
$v + v'\not\wdash\Phi\And\psi$
\end{proposition}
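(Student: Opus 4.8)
The plan is to unwind the definition of $\MX$-covering and then convert the failure of covering directly into a failure of $\wdash$-satisfaction, using the one-state finite semantics and Proposition~\ref{prop-TNF}. First I would write $\TNF(\Phi\And\psi) = \bigvee_{i\in I}\pi_i$, which exists and is logically equivalent to $\Phi\And\psi$ by Proposition~\ref{prop-TNF-equiv}. Since this disjunction is assumed \emph{not} to be an $\MX$-covering, Definition~\ref{def-covering} yields $\bigcup_{i\in I}\Val_{\pi_i}(\MX)\neq\Val(\MX)$, so I can fix a valuation $v\in\Val(\MX)$ such that $v\notin\Val_{\pi_i}(\MX)$ for every $i\in I$. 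This $v$ is exactly the witness demanded by the statement; all that remains is to show that it defeats $\Phi\And\psi$ under every system completion.

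Next I would fix an arbitrary $v'\in\Val(\MY)$ and consider the length-one finite trace $\lambda=(v+v')$, for which I claim $\lambda\not\wdash\pi_i$ for all $i\in I$. Recall that $\Val_{\pi_i}=\Val_{\Lit(\pi_i)}$. Because $v\notin\Val_{\Lit(\pi_i)}(\MX)$, the valuation $v$ must falsify at least one literal $\ell\in\Lit(\pi_i)$ over a variable of $\MX$: indeed, if $\Lit(\pi_i)$ contained no constraint on $\MX$, then every environment valuation would satisfy its $\MX$-part and we would have $\Val_{\pi_i}(\MX)=\Val(\MX)$, contradicting the choice of $v$. Since $v+v'$ agrees with $v$ on $\MX$, the full valuation $v+v'$ falsifies the same $\ell$, so $\lambda_0\notin\Val_{\Lit(\pi_i)}(\MX\cup\MY)$, hence $\lambda\not\wdash\Lit(\pi_i)$ and therefore $\lambda\not\wdash\pi_i$ (as $\pi_i$ is the conjunction $\Lit(\pi_i)\And\Fut(\pi_i)$). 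Note this argument never inspects $\Fut(\pi_i)$, so the delicate evaluation of strict-future formulas on a single state never arises.

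Finally I would apply Proposition~\ref{prop-TNF}(b) with $\eta=\Phi\And\psi$: satisfaction $\lambda\wdash\Phi\And\psi$ would require $\lambda\wdash\pi_i$ for exactly one $i\in I$, which the previous step has ruled out for every $i$. Hence $\lambda=(v+v')\not\wdash\Phi\And\psi$. As $v'\in\Val(\MY)$ was arbitrary, the single valuation $v$ works for every system response, which is precisely the conclusion. This also refines item~(\ref{i}) of Proposition~\ref{prop-covering} by making explicit the universal quantification over completions.

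The proof is a direct chain of definitions, so I expect no serious obstacle; the only point needing care is the \emph{uniformity} of the quantifiers, namely that one and the same $v$ defeats \emph{all} completions $v'$ at once. This is exactly where it matters that the literal falsified by $v$ lies over an environment variable, whose truth value under $v+v'$ is independent of the system's choice $v'$.
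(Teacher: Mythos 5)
Your proof is correct and takes essentially the same approach as the paper's own (one-line) argument: by Definition~\ref{def-covering} pick $v\in\Val(\MX)\setminus\bigcup_{i\in I}\Val_{\pi_i}(\MX)$, and conclude $v+v'\not\wdash\Phi\And\psi$ for every $v'\in\Val(\MY)$ via Proposition~\ref{prop-TNF}. The only difference is that you spell out the details the paper leaves implicit, in particular that the falsified literal of each $\Lit(\pi_i)$ lies over an $\MX$-variable (so one and the same $v$ defeats all system completions $v'$), and that the failure is detected at the literal level without ever evaluating $\Fut(\pi_i)$ on a one-state trace.
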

\begin{proof}
By Definition \ref{def-covering}, there is some $v\in\Val(\MX)\setminus\Val_{\Phi\wedge\psi}(\MX)$. \qed
\end{proof}

\begin{proposition}
\label{prop-all-minimal-x-covering}
Let $\Phi$ be any set of safety formulas and $\TNF(\Phi\And\psi)= \bigvee_{i\in I}\pi_i$ be an $\MX$-covering and let $J_1,\dots, J_m$ ($m\geq 1$) the collection of all minimal $\MX$-coverings in $I$.
If for every $1\leq k\leq m$ there exists some $i\in J_k$ such that $\Incons(\Delta)$ for all $\Delta\in\Stt(\pi_i)$, then there exists some $v\in\Val(\MX)$ such that for all $v'\in\Val(\MY)$,
$v + v'\not\wdash\Phi\wedge\psi$.
\end{proposition}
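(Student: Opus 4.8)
The plan is to argue by contraposition: I would assume the negation of the conclusion, namely that for every $v\in\Val(\MX)$ there is some $v'\in\Val(\MY)$ with $(v+v')\wdash\Phi\And\psi$, and show that this is incompatible with the hypothesis by exhibiting a minimal $\MX$-covering all of whose moves are ``satisfiable''. The first preparatory step is to record what the hypothesis says about a single move. Call a move $\pi_i$ (for $i\in I$) \emph{bad} if $\Incons(\Delta)$ holds for every $\Delta\in\Stt(\pi_i)$, and \emph{good} otherwise. Since every inconsistent set is unsatisfiable (immediate from Definition \ref{def-inconsistent} together with Proposition \ref{prop-subsumption}), and since $\lambda\wdash\pi_i$ iff $\lambda\wdash\Delta$ for some $\Delta\in\Stt(\pi_i)$ by Proposition \ref{prop-Stt}, a bad move is \emph{never} satisfied: $\lambda\not\wdash\pi_i$ for every finite trace $\lambda$. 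In this language the hypothesis states exactly that each minimal covering $J_k$ contains at least one bad index.

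The key step is to prove that the good moves already form an $\MX$-covering. Let $G=\{i\in I: \pi_i \text{ is good}\}$. Fix an arbitrary $v\in\Val(\MX)$; by the assumption there is $v'$ with $(v+v')\wdash\Phi\And\psi$, where $v+v'$ is read as a length-one finite trace. By Proposition \ref{prop-TNF}(b) this trace satisfies exactly one $\pi_i$, and since bad moves are never satisfied, that index $i$ lies in $G$. Moreover $\lambda\wdash\pi_i$ entails $\lambda\wdash\Lit(\pi_i)$, so $v+v'\in\Val_{\Lit(\pi_i)}(\MX\cup\MY)=\Val_{\pi_i}(\MX\cup\MY)$, and restricting to the environment variables gives $v\in\Val_{\pi_i}(\MX)$. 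As $v$ was arbitrary, $\bigcup_{i\in G}\Val_{\pi_i}(\MX)=\Val(\MX)$, i.e.\ $G$ is an $\MX$-covering in the sense of Definition \ref{def-covering}.

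Finally, because $I$ is finite, any $\MX$-covering contains an inclusion-minimal one, and an inclusion-minimal covering is precisely a minimal $\MX$-covering in the sense of Definition \ref{def-covering} (removing any single index would otherwise produce a strictly smaller covering). Hence $G$ contains some minimal covering $J\subseteq G$; since $J_1,\dots,J_m$ is the collection of \emph{all} minimal coverings, $J=J_k$ for some $k$. But then $J_k\subseteq G$ consists only of good moves, directly contradicting the hypothesis that $J_k$ contains a bad index. This contradiction establishes the existence of the desired $v$.

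The main obstacle I anticipate is the bookkeeping around the two notions of ``minimal'': I must verify carefully that the inclusion-minimal subcovering extracted from $G$ by finiteness coincides with Definition \ref{def-covering}'s notion (no single removable index), so that it is genuinely one of the $J_k$; the equivalence holds because supersets of coverings are coverings, but it is the step that deserves explicit justification. Everything else is a routine assembly of Propositions \ref{prop-TNF}, \ref{prop-Stt}, and \ref{prop-subsumption}.
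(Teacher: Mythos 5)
Your proof is correct, and it reaches the contradiction by a genuinely different route than the paper's. After the same contraposition, the paper invokes Proposition \ref{prop-covering}: item (\ref{iii}) to obtain a minimal $\MX$-covering $J_k$, and item (\ref{ii}) to produce, for each $j\in J_k$, a valuation in $\Val_{\pi_j}(\MX\cup\MY)$, hence a length-one trace with $v+v'\wdash\pi_j$; Proposition \ref{prop-Stt}, together with the principle that $\wdash$-satisfiable sets are consistent, then gives a consistent $\Delta\in\Stt(\pi_j)$ for \emph{every} $j\in J_k$, contradicting the hypothesis. You bypass Proposition \ref{prop-covering} entirely: you show that bad moves are never $\wdash$-satisfied, use the assumption plus Proposition \ref{prop-TNF}($b$) to conclude that the good moves alone already form an $\MX$-covering, and then extract a minimal covering inside the good moves by finiteness. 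The trade-off is worth noting. In your argument the satisfiability of each move of the final covering comes from the assumed traces, which satisfy the whole of $\Phi\And\psi$; this makes the use of the contraposition assumption essential and visible. In the paper's argument the assumption enters only through item (\ref{iii}) --- which is in fact redundant, since the statement already posits $m\geq 1$ --- while the satisfiability of the moves of $J_k$ comes from item (\ref{ii}), which speaks only about the literals $\Lit(\pi_j)$; the paper therefore also relies, tacitly, on length-one traces vacuously $\wdash$-satisfying strict-future formulas, a step your proof never needs. Your explicit check that an inclusion-minimal subcovering coincides with minimality in the sense of Definition \ref{def-covering} (because coverings are preserved under adding disjuncts) is exactly the bookkeeping the paper leaves inside the ``easy consequence'' proof of Proposition \ref{prop-covering}. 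Finally, both proofs rest on the same load-bearing principle that inconsistent sets are not $\wdash$-satisfiable; you attribute it correctly to Definition \ref{def-inconsistent} and Proposition \ref{prop-subsumption}, whereas the paper folds it, somewhat inaccurately, into its citation of Proposition \ref{prop-Stt} alone.
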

\begin{proof}
Suppose that for all $v\in\Val(\MX)$  there is some $v'\in\Val(\MY)$ such that  $v + v'\wdash\Phi\And\psi$. Then, by Proposition \ref{prop-covering}(\ref{iii}), there exists some $1\leq k\leq m$ such that the minimal $\MX$-covering $J_k\subseteq I$. Therefore, by Proposition \ref{prop-covering}(\ref{ii}), for all $v\in\Val(\MX)$  there is some $j\in J_k$ and some $v'\in\Val(\MY)$ such that  $v + v'\wdash\pi_j$. According to Proposition \ref{prop-Stt}, for each $j\in J_k$ there exists some $\Delta\in\Stt(\pi_j)$ such that $\Cons(\Delta)$, which contradicts the hypothesis.\qed
\end{proof}

Next, we introduce the {\it next-state rule} that (roughly speaking) allows us to jump from one state to the next one. For that, we introduce some preliminary concepts and notation.

\begin{definition}
\label{def-elementary-set}
A set of formulas $\Phi$ is {\it elementary} if it consists of a set of literals and one elementary strict-future formula.
\end{definition}
\begin{definition}
\label{def-downarrow}
For any set $\Phi$ of next-formulas, $\Phi^\downarrow = \{\beta \mid \Next\beta\in\Phi\}$. Given an elementary strict-future formula $\eta=\Ddot{\bigvee}_{i=1}^n\bigwedge_{j=1}^m\Next\beta_{i,j}$, the formula $\eta^{\downarrow}$ is defined to be $\Ddot{\bigvee}_{i=1}^n\bigwedge_{j=1}^m\beta_{i,j}$. 
\end{definition}
\begin{example}
Consider the strict-future formula $\eta = \Eventually_{[1,2]}a\;\Ddot{\Or}\;\Always_{[1,3]}b$. Then,
$
\eta^E = \Next a  \;\Ddot{\Or}\;\Next\Eventually_{[1,1]}a \;\Ddot{\Or}\; (\Next b\And\Next\Always_{[1,2]}b)
$
and
$
\eta^\downarrow = a  \Ddot{\Or}\;\Eventually_{[1,1]}a \Ddot{\Or} ( b \And \Always_{[1,2]}b)
$.
\end{example}

\vspace{-5mm}

\begin{figure}[h!]
\[
\begin{array}{lll}
(\Next) & \displaystyle{\frac{\Phi, \eta, \Next\Always\psi
}{\eta^\downarrow, \Always\psi}} 
 &  \mbox{ if }  \Phi\cup\{\eta\} \mbox{ is elementary and } \eta \mbox{ is strict-future.}
\end{array}
\]
\vspace{-5mm}
\caption{Next-state Rule}
\label{fig-next-state-rule}
\end{figure}

The {\it Next-state Rule} (in Figure \ref{fig-next-state-rule}) is applied whenever the target set of formulas is elementary. Note that, if there is not an strict-future formula $\eta$, the children of the above rule $(\Next)$ is just $\Always\psi$.

\begin{proposition}
\label{prop-downarrow}
Let $\Phi\cup\{\eta\}$ be any consistent and elementary set of formulas with strict-future formula $\eta$. Then
\begin{enumerate}[(a)]
\item 
\label{1}
For any trace $\sigma$, if $\sigma\models\Phi,\eta,\Next\Always\psi$ then $\sigma^1\models \eta^\downarrow,\Always\psi$.
\item 
\label{2}
For any (non-empty) finite trace  $\lambda = \lambda_0,\dots\lambda_{k-1}$ that is a pre-witness of $\alpha\And\Always\psi$ such that $\lambda_{k-1}\wdash\Phi$ and any $\lambda_k\in\Val(\MX\cup\MY)$, the finite trace $\lambda\cdot \lambda_k$ is a pre-witness of $\alpha\And\Always\psi$ if and only if $\lambda_k\wdash\eta^\downarrow\And\psi$.
\end{enumerate}
\end{proposition}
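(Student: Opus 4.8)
The plan is to prove the two items by unfolding the relevant definitions and carefully tracking the operation $(\cdot)^\downarrow$, which shifts formulas one step back in time. The crucial observation linking both parts is that $(\cdot)^\downarrow$ is designed to be the syntactic inverse of the semantic shift $\sigma \mapsto \sigma^1$ (or $\lambda \mapsto \lambda^{1..d}$) on next-formulas: if $\eta$ is an elementary strict-future formula, then by construction $\eta = \Ddot{\bigvee}_i \bigwedge_j \Next\beta_{i,j}$, and $\eta^\downarrow = \Ddot{\bigvee}_i \bigwedge_j \beta_{i,j}$ simply strips the leading $\Next$ off each conjunct. The key lemma I would isolate first is therefore: for any trace $\sigma$, $\sigma \models \eta$ iff $\sigma^1 \models \eta^\downarrow$, and for any finite trace $\lambda$ with $|\lambda| = d$, $\lambda \wdash \eta$ iff (the tail) $\lambda^{1..d} \wdash \eta^\downarrow$ when $d > 1$. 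This follows directly from the semantic clause $\sigma \models \Next\phi$ iff $\sigma^1 \models \phi$ (and its finite analogue $\lambda \wdash \Next\eta$ iff $\lambda^{1..d} \wdash \eta$ when $d>1$), propagated through the $\Ddot\Or$ and $\And$ connectives, which are semantically just $\Or$ and $\And$.

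For item (\ref{1}), I would argue as follows. Assume $\sigma \models \Phi, \eta, \Next\Always\psi$. From $\sigma \models \Next\Always\psi$ and the next-clause we get $\sigma^1 \models \Always\psi$. From $\sigma \models \eta$ and the key lemma above we get $\sigma^1 \models \eta^\downarrow$. Conjoining, $\sigma^1 \models \eta^\downarrow, \Always\psi$, which is exactly the conclusion. Note that the literals in $\Phi$ and the elementariness of $\Phi \cup \{\eta\}$ are not actually needed for this direction beyond guaranteeing that $\eta^\downarrow$ is well defined as in Definition~\ref{def-downarrow}; the literals constrain only $\sigma_0$, which plays no role in the shifted trace $\sigma^1$.

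Item (\ref{2}) is the more delicate part and I expect it to be the main obstacle, because it is a biconditional about pre-witnesses and involves the finite semantics $\wdash$ together with the subtle off-by-one behaviour of the finite-trace clauses (e.g. the clause $\lambda \wdash \Next\eta$ is vacuously true when $d = 1$). I would unfold the definition of pre-witness from Definition~\ref{def-witness}: $\lambda \cdot \lambda_k$ (of length $k+1$) is a pre-witness of $\alpha \And \Always\psi$ iff $(\lambda\cdot\lambda_k)_0 \wdash \alpha \And \psi$ and $(\lambda\cdot\lambda_k)^{<i} \wdash \psi$ for all $1 \le i \le k$. Since $\lambda$ is already a pre-witness, the initial condition and all constraints for $i \le k-1$ are inherited unchanged (prefixes of $\lambda\cdot\lambda_k$ of length $\le k-1$ coincide with prefixes of $\lambda$), so the only genuinely new obligation is the one at $i = k$, namely $(\lambda \cdot \lambda_k)^{<k} \wdash \psi$, i.e. $\lambda \wdash \psi$ must be extended to account for $\lambda_k$. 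This is where I would invoke the hypothesis $\lambda_{k-1} \wdash \Phi$ together with elementariness of $\Phi \cup \{\eta\}$: the set $\Phi$ of next-formulas and the strict-future formula $\eta$ together encode, evaluated at position $k-1$ of $\lambda$, exactly the constraint that $\psi$ imposes on the extension, so that the new obligation becomes equivalent to $\lambda_k \wdash \eta^\downarrow \And \psi$.

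The technical heart, and the step I would flag as hardest, is establishing this equivalence precisely: I would show that $\lambda \cdot \lambda_k \wdash \psi$ (reading $\psi$ at position $k-1$, whose valuation is $\lambda_{k-1}$, looking one step into $\lambda_k$) decomposes, via Proposition~\ref{prop-TNF} applied to $\TNF(\Phi \And \psi)$ and the finite-trace key lemma, into the part already guaranteed by $\lambda_{k-1} \wdash \Phi$ and the residual future part captured by $\eta$, whose one-step-shifted form $\eta^\downarrow$ must hold at $\lambda_k$; and simultaneously $\lambda_k$ itself must satisfy $\psi$ afresh as the new initial state of the remaining tail. The forward and backward directions then both reduce to the finite-trace instance of the key lemma ($\lambda' \wdash \eta$ iff the shifted tail $\wdash \eta^\downarrow$), combined with the pre-witness bookkeeping above. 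Throughout I would rely on consistency of $\Phi \cup \{\eta\}$ only to ensure that $\Val_{\Phi}$ is nonempty and that no literal obligation at position $k-1$ is already violated, so that the decomposition is genuinely an equivalence rather than collapsing trivially.
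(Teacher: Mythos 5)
Your item (a), together with the ``key lemma'' you isolate ($\sigma\models\eta$ iff $\sigma^1\models\eta^\downarrow$, and its finite-trace twin), is correct and is exactly what the paper's own (one-line) proof means by a routine application of the semantics of $\Next$ to Definitions \ref{def-elementary-strict-future} and \ref{def-downarrow}. The genuine gap is in item (b), and it occurs \emph{before} the step you flag as the technical heart.

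Concretely, your reduction of pre-witness-hood of $\lambda\cdot\lambda_k$ is broken. You claim that all conditions for $i\leq k-1$ are inherited unchanged and that the only new obligation is $(\lambda\cdot\lambda_k)^{<k}\wdash\psi$. But $(\lambda\cdot\lambda_k)^{<k}$ is literally $\lambda$: on your prefix reading of Definition \ref{def-witness}, \emph{every} condition in the definition concerns a prefix of $\lambda$, so the left-hand side of the biconditional does not depend on $\lambda_k$ at all, whereas the right-hand side $\lambda_k\wdash\eta^\downarrow\And\psi$ does; no later argument can close that gap, and your phrase ``must be extended to account for $\lambda_k$'' is precisely where the proof silently fails. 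If instead one reads the conditions the way the rest of the paper uses them --- as the pending obligations carried at each position, which is what makes the proposition meaningful --- then your ``inherited unchanged'' claim is false for the opposite reason: appending $\lambda_k$ \emph{strengthens} every pending obligation by one step, because each conjunct $\Next\beta$ recorded in $\eta$, which was vacuously satisfied by the one-state trace $\lambda_{k-1}$ (the $d=1$ case of the $\Next$ clause that you yourself point out), now demands $\beta$ at $\lambda_k$. Those strengthened obligations are exactly what $\lambda_k\wdash\eta^\downarrow$ expresses --- the finite-trace instance of your key lemma, $\lambda_{k-1}\lambda_k\wdash\eta$ iff $\lambda_k\wdash\eta^\downarrow$ --- and the fresh instance of $\psi$ at the new position contributes the conjunct $\lambda_k\wdash\psi$. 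This one-step-shift mechanism is the intended content of (b); a prefix bookkeeping argument cannot produce it. Finally, your plan to obtain the equivalence from Proposition \ref{prop-TNF} applied to $\TNF(\Phi\And\psi)$ cannot be executed as stated: $\Phi$ is a set of literals, and nothing in the hypotheses (``$\Phi\cup\{\eta\}$ consistent and elementary, $\lambda_{k-1}\wdash\Phi$'') ties the given $\eta$ to $\psi$ or to the history $\lambda$, so the strict-future parts of the moves of $\TNF(\Phi\And\psi)$ bear no stated relation to $\eta$. That link --- that $\Phi\cup\{\eta\}$ is the saturated set produced by the tableau along the branch that generated $\lambda$ --- must be assumed or supplied from the tableau construction; the paper uses it implicitly, but your write-up neither assumes it nor proves it, so the deferred equivalence is not recoverable from what you set up.
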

\begin{proof}
Both items follows from Definitions \ref{def-elementary-strict-future} and \ref{def-downarrow}, by routine application of semantic definition of $\Next$ operator.\qed
\end{proof}

According to the previously introduced set of tableau rules, we can define the set of all formulas that could appear in the construction of a tableau for a given safety specification, which is known as the closure of the given specification. 

\begin{definition} 
\label{def-closure}
We denote by $\SubForm(\beta)$ the set of all subformulas of any given
formula $\beta$. In particular,
$\SubForm(\Next^i\beta) = \{\Next^j\beta\mid 0\leq j\leq i\}
\cup\SubForm(\beta)$.  For a given safety formula $\psi$, we define
$\Variants(\psi)$ to be the union of the following four sets that
collects all the variants of subformulas $\Eventually_I$ and
$\Always_I$ that the tableau rules could introduce.
\begin{itemize}
\item[]    $
\{\Eventually_{[n,m']}\beta, \Next\Eventually_{[n,m']}\beta \mid     \Eventually_{[n,m]}\beta \in \SubForm(\psi), n\leq m'< m\}
$
\item[]
$
\{\Always_{[n,m']}\beta, \Next\Always_{[n,m']}\beta \mid \Always_{[n,m]}\beta \in \SubForm(\psi), 
n\leq m'< m\}
$
\item[]
$
\{ \SubForm(\Next^i\beta) \mid  \Eventually_{[n,m]}\beta\in \SubForm(\psi), 0\leq i\leq n \}
$
\item[]
$
\{\SubForm(\Next^i\beta) \mid  \Always_{[n,m]}\beta\in \SubForm(\psi), 0\leq i\leq n \}
$
\end{itemize}
The set $\Ordnf(\psi)$ consists of all formulas of the form 
$\Ddot{\bigvee}_{i=1}^n\bigwedge_{j = 1}^m \beta_{i,j}$ where each $\beta_{i,j}$ is in $\Variants(\psi)$.
Then, the closure of a safety specification $\varphi =\alpha\And\Always\psi$ is the finite set $\Clo(\varphi) = \Preclo(\varphi) \cup\{\Always\psi,\Next\Always\psi\}$ where
$
\Preclo(\varphi) = \SubForm(\alpha\And\psi) \cup \Variants(\psi)\cup\Ordnf(\psi).
$
\end{definition}

\subsection{A Tableau Algorithm for Realizability}

In this section we present a non-deterministic algorithm (see
Alg.~\ref{fig-algorithm}) for deciding whether a given safety
specification is realizable.
Alg.~\ref{fig-algorithm} constructs a completed tableau that analyzes
the minimal $\MX$-coverings produced by the moves (and allowed by the
input safety specification) at the successive states of the game.
\begin{algorithm}[t!]
    \caption{$\Tableau(\Phi\cup\{\chi\})$ returns $\IsOpen$: Boolean}
    \label{fig-algorithm}
\SetAlgoLined
\uIf{$\Phi$ is inconsistent}{
$\IsOpen := \False$   
}
\uElseIf{$\chi = \Always\psi$}
{
\uIf {$\Phi_0\lessdot\Phi$ for some $\Phi_0$ in the branch of $\Phi$ }{$\IsOpen := \True$}
\uElseIf{$\TNF(\Phi\And\psi)$ is not an $\MX$-covering}
{$is\_open := \Tableau(\{\cF,\Always\psi\})$;} 
\uElseIf{$\TNF(\Phi\And\psi)$ is a non-minimal $\MX$-covering}
{Let $J_1,\dots,J_m$ be all the minimal $\MX$-coverings of $\TNF(\Phi\And\psi)$;\\
$i := 0$;    
$\IsOpen:= \False$ ;\\
\While{$\Not\IsOpen \And i < m$ }{
$i := i+1$ ;\\
$\IsOpen := \Tableau(J_i\cup\{\Always\psi\})$;
}} 
\Else(\tcp*[f]{\scriptsize $\TNF(\Phi\And\psi) = \bigvee_{i=1}^n\pi_i$ is a minimal $\MX$-covering})
{
$i := 0$;    
$\IsOpen:= \True$ ;\\
\While{$\IsOpen \And i < n$ }{
$i := i+1$ ;\\
$\IsOpen :=\Tableau(\{\pi_i,\Next\Always\psi\})$;
}
}
}
\uElseIf{$\Phi = \Lambda\cup\{\eta\}$ is elementary ($\eta$ is strict-future)}{ 
$\IsOpen := \Tableau(\{\eta^\downarrow, \Always\psi\})$;
}
\Else{
$\rho := \SelectApplicableRuleTo(\Phi)$;\\
Let $1 \leq k\leq 2$ and $\Phi_1,\dots,\Phi_k$ the set of all $\rho$-children;\\
$\IsOpen := \Tableau(\Phi_1\cup\{\Next\Always\psi\})$;\\
\lIf{$k = 2\;\And\;\Not\IsOpen$}{$\IsOpen := \Tableau(\Phi_2\cup\{\Next\Always\psi\})$} 
}
\end{algorithm}
Alg.~\ref{fig-algorithm} uses recursion to explore in-depth the
branches of the tree.  The formal parameter of Alg.~\ref{fig-algorithm} is given as the union of a set of formulas $\Phi$
and a formula $\chi$ that ranges in
$\{\Always\psi,\Next\Always\psi\}$.
For deciding realizability of a safety specification
$\varphi = \alpha\And\Always\psi$, the initial call
$\Tableau(\varphi)$ is really
$\Tableau(\{\alpha\}\cup\{\Always\psi\})$.
Intuitively, Eve moves when $\chi = \Always\psi$ (at the start),
whereas Sally moves when $\chi = \Next\Always\psi$.

\begin{definition}
A branch $b$ of a tableau is any finite sequence of nodes $n_0,\dots,n_k$ such that $n_0$ is the root and $(n_i,n_{i+1})\in R$ for all $0\leq i < k-1$. If $n_k$ is a successful leaf, we say that $b$ is a successful branch. If $n_k$ is a failure leaf, we say that $b$ is a failure branch.
\end{definition}
For the sake of clarity, we omit in Alg.~\ref{fig-algorithm} the
details for loading the current branch $B$ and for performing
subsumption in nodes.  The result of Alg.~\ref{fig-algorithm} is
returned in the boolean variable $\IsOpen$.
Lines 1-5 deal with the two types of terminal nodes to which no rule
is applied,no recursive call is produced, and returns success $(\IsOpen := \True)$ or failure
$(\IsOpen := \False)$.
Line 7 produces a recursive call that immediately returns failure.
The value returned in $\IsOpen$ corresponds to whether the completed
tableau for the call parameter $\Phi\cup\{\chi\}$ is open or not (i.e
is closed).
Recursive calls in Alg.~\ref{fig-algorithm} and the notions of open and closed tableau, are related with AND-nodes, for which we introduce the following definition.
For that we next introduce the notion of bunch. 

\begin{definition}
\label{def-bunch}
Given a set of branches $H$ of a completed tableau, we say that $H$ is a  {\em bunch} if and only if for every $b\in H$ and every AND-node $n\in b$, and every $n'$ that is an $(\Always\&)$-successor of $n$, there is $b'\in H$ such that $n'\in b'$.
A completed tableau is {\em open} if and only if it contains at least one {\em bunch} such that all its branches are successful. Otherwise, when all possible bunches of a completed tableau contains a failure branch, the tableau is {\em closed}.
\end{definition}

\begin{definition}
A tableau is {\em completed} when all its branches contains a terminal node, i.e. all its branches are failure or successful.
\end{definition}

Alg.~\ref{fig-algorithm} looks for bunches of successful branches as
follows.  Lines 8-14 of Alg.~\ref{fig-algorithm} invoke a recursive
call for each minimal $\MX$-covering, according to rule
$(\Always\|)$.
When some of these calls return $\IsOpen$ for a minimal $\MX$-covering $J_i$, which is an OR-node, the iteration is finished with this result for the previous call.
The construction of the tableau for each $J_k$, by rule
$(\Always\&)$ and according to lines 15-20, produces a call for each
move $\pi_i$ in $J_k$. Moves are AND-children, hence all the calls should give $\IsOpen$ to obtain truth for $J_k$.
Finally, lines 22-23 perform the application of $(\Next)$, and lines 25-28
apply the saturation rules, when the applied rule split in two
children the second is expanded only if the first one returns that
$\IsOpen$ is false.

\begin{proposition}
Alg.~\ref{fig-algorithm} terminates and
  $\Tableau(\varphi)$ is a completed tableau.
\end{proposition}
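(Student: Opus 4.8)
The plan is to prove two separate things: termination (the recursion always bottoms out) and that the recursive procedure does in fact build a completed tableau in the sense of Definition~\ref{def-tableau} and the definition of "completed". I would structure the proof around a well-founded measure on the argument $\Phi\cup\{\chi\}$ of each recursive call, combined with a finite bound on the reachable set of labels.

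First I would establish termination. The key observation is that every label produced during the construction is a subset of $\Clo(\varphi)$, which is a \emph{finite} set by Definition~\ref{def-closure}. Hence there are only finitely many distinct node labels. To rule out infinite recursion despite this finite state space, I would combine two sources of progress. Within a single "macro-step" between two consecutive applications of $(\Next)$, every other rule (the saturation rules of Figure~\ref{fig-saturation-rules} and the always rules $(\Always\cF)$, $(\Always\|)$, $(\Always\&)$) strictly decreases a syntactic complexity measure: each saturation rule replaces a compound formula by strictly simpler constituents or unfolds a bounded operator $\Always_{[n,m]}$ or $\Eventually_{[n,m]}$ into a form with strictly smaller interval length $m-n$ plus $\Next$-guarded remainders, and the always rules either terminate immediately $(\Always\cF)$ or move to a fixed collection of minimal $\MX$-coverings. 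So between two $\Next$-jumps only finitely many calls occur. Across $(\Next)$-applications, the termination is guaranteed by the success-leaf test: the line checking $\Phi_0\lessdot\Phi$ for some earlier $\Phi_0$ in the branch. Since labels live in the finite set $\Clo(\varphi)$ and $\lessdot$ is reflexive (indeed $\Phi\lessdot\Phi$ via $\beta\sqsubseteq\beta$), any branch that revisits a $\Always\psi$-label subsumed by an earlier one is cut off as a success leaf; because only finitely many labels exist, no branch can apply $(\Next)$ infinitely often without triggering this test. I expect this interplay between the intra-state saturation measure and the inter-state finite-closure-plus-subsumption argument to be the main obstacle, since one must verify that every rule application indeed strictly decreases \emph{some} component of a lexicographic measure $(\text{number of }\Next\text{-jumps before forced subsumption},\ \text{saturation complexity})$ and that no rule can be applied to a terminal (failure or success) node.

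Next I would verify that the object produced is a tableau per Definition~\ref{def-tableau}. This is a matter of matching each branch of the algorithm's case analysis to the structural requirements. The root label is $\{\alpha,\Always\psi\}$, matching the initial call $\Tableau(\{\alpha\}\cup\{\Always\psi\})$. For the internal requirement that each $\tau(n')$ arise by applying some rule to $\tau(n)$, I would walk through the \textbf{uElseIf} branches: the $\MX$-covering cases correspond exactly to rules $(\Always\cF)$, $(\Always\|)$ and $(\Always\&)$ of Figure~\ref{fig-always-rules}; the elementary-set case applies $(\Next)$ of Figure~\ref{fig-next-state-rule}; and the final \textbf{Else} branch applies a saturation rule selected by $\SelectApplicableRuleTo$. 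The two terminal cases---inconsistency giving $\IsOpen:=\False$ and the subsumption test $\Phi_0\lessdot\Phi$ giving $\IsOpen:=\True$---match precisely the failure-leaf and success-leaf clauses of Definition~\ref{def-tableau}, using $\Incons$ (Definition~\ref{def-inconsistent}) and $\lessdot$ (Definition~\ref{def-lessdot}) respectively.

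Finally I would argue the tableau is \emph{completed}, i.e. every branch ends in a terminal node. This follows from termination together with an exhaustiveness check: I would confirm that the conditional cascade in Alg.~\ref{fig-algorithm} is total, i.e. for any label $\Phi\cup\{\chi\}$ that is neither a failure nor a success leaf, at least one of the remaining guards fires. When $\chi=\Always\psi$, the $\TNF(\Phi\And\psi)$ is analyzed and falls into exactly one of "not an $\MX$-covering", "non-minimal $\MX$-covering", or "minimal $\MX$-covering", covering all cases. When $\chi=\Next\Always\psi$, the label is either already elementary, triggering $(\Next)$, or some saturation rule applies via the \textbf{Else} branch; here I would rely on Proposition~\ref{prop-elementary-strict-future} and the saturation rules being applicable to any non-elementary formula, so a non-elementary label always admits a rule. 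Since by termination no branch descends forever, and by exhaustiveness no branch halts at a non-terminal node, every branch reaches a failure or success leaf, which is precisely the definition of a completed tableau. This last step is routine once termination is secured, so the crux remains the well-foundedness argument.
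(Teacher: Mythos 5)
Your proposal is correct and takes essentially the same route as the paper: both rest on the finiteness of $\Clo(\varphi)$ combined with the $\lessdot$-based success test (and reflexivity of $\lessdot$), which forbids repeated labels along a branch, so every branch is finite, branching is finite, and the algorithm's case analysis (lines 1--5 and the exhaustive guards) yields a completed tableau. Your lexicographic refinement---a strictly decreasing saturation measure between consecutive $(\Next)$ applications, with the subsumption test only needed at nodes containing $\Always\psi$---is a more careful rendering of a step the paper compresses into the blanket claim that all labels on an infinite branch would be pairwise distinct.
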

\begin{proof}
It is easy to see that any node in $\Tableau(\varphi)$ is labelled by a finite subset of $\Clo(\varphi)$. 
Therefore, supposing that there exists an infinite branch, we could get the infinite sequence of its node labels, namely $\Phi_0,\Phi_1,\dots$. Then, every $\Phi_i\in 2^{\Clo(\varphi)}$ and for every $0\leq i < j$ it does not hold that $\Phi_i \lessdot \Phi_j$. In particular, all them must be pairwise different. Hence, we have infinitely many subsets of $\Clo(\varphi)$ that are pairwise different, which contradicts the finiteness of $\Clo(\phi)$. Therefore, any branch of $\Tableau(\varphi)$ is finite. Since the number of branches of $\Tableau(\varphi)$ is finite (by construction), termination is ensured. Lines 1-5 ensures that $\Tableau(\varphi)$ is a completed tableau.\qed
\end{proof}

Note that our algorithm assumes a procedure that given a formula calculates its $\TNF$. For the implementation, we plan to use BICA (\cite{PM15}), a Boolean simplifier for non-clausal formulas that computes a minimal (size) prime cover of the input formula.  



%

\section{Examples}
\label{sec:examples}

In this section, we present some representative examples that
illustrate how our tableau method works.
\begin{figure}[H]
\hspace{-3mm}
\tikzstyle{level 1}=[level distance=0.8cm, sibling distance=3cm]
\tikzstyle{level 2}=[level distance=0.8cm,sibling distance=3cm]
\tikzstyle{level 3}=[level distance=1.3cm,sibling distance=5.7cm]
\tikzstyle{level 4}=[level distance=0.8cm,sibling distance=3cm]
\tikzstyle{level 5}=[level distance=0.8cm,sibling distance=3cm]
\tikzstyle{level 6}=[level distance=0.8cm,sibling distance=3cm]
{\scriptsize
\begin{tikzpicture}
\node  (n1)  {$n_1:\;  \Always(\Next e \Iff \Next s) \qquad $}
	child {node (n10) {$n_2:\;   (\Next e \And \Next s) \Ddot{\Or} (\Next\Not  e \And \Next\Not s), \Next\Always\psi  \quad $}
      	    child {node (n1021)  {$n_3:\; (e \And  s) \Ddot{\Or} (\Not  e \And \Not s), \Always\psi \quad$}
      	            child {node (nc1)  {$n_4: \; e\And s\And (\Next e \And \Next s) \Ddot{\Or} (\Next\Not  e \And \Next\Not s), \Next\Always\psi \quad$}
      	               child {node (nc10)  {$n_6: \; e,  s,  (\Next e \And \Next s) \Ddot{\Or} (\Next\Not  e \And \Next\Not s), \Next\Always\psi \quad$}
      	                 child {node (nc11)  {$n_8: \; (e \And s) \Ddot{\Or} (\Not  e \And \Not s), \Always\psi \quad$}
      	            	}
      	            	 }
      	            }
      	            child {node (nc2)  { $\qquad n_5: \;\Not e\And\Not s\And (\Next e \And \Next s) \Ddot{\Or} (\Next\Not  e \And \Next\Not s), \Next\Always\psi \qquad$}
      	               child {node (nc20)  {$n_7: \; \Not e,  \Not s,  (\Next e \And \Next s) \Ddot{\Or} (\Next\Not  e \And \Next\Not s), \Next\Always\psi \quad$}
      	                 child {node (nc21)  {$n_9: \quad (e \And  s) \Ddot{\Or} (\Not  e \And 
      	                 \Not s), \Always\psi \quad$}
      	            	 }
      	            	}
      	            }
      		 }
    };
\draw  (n1) edge node[right]  {\tiny $(\Always\|) + (\Always\&)$} (n10); 
\path (n1021)  -- coordinate[midway] (S11) (nc1);
\path (n1021) -- coordinate[midway] (S12) (nc2);
\draw (S11) to[bend right=20]   (S12);  
\draw  (n1021) edge node {\tiny \hspace{3cm}$(\Always\|) + (\Always\&)$} (nc1);  
\draw  (n10) edge node[right]  {\tiny $(\Next)$} (n1021);
\draw  (nc10) edge node[right]  {\tiny $(\Next)$} (nc11); 
\draw  (nc20) edge node[right]  {\tiny $(\Next)$} (nc21); 
\draw  (nc2) edge node[right]  {\tiny $(\And)$} (nc20);  
\draw  (nc1) edge node[right]  {\tiny $(\And)$} (nc10);  
\draw[draw = gray]  [->] (nc11.south) to [out=190,in=-230] (n1021.north);
\draw[draw = gray]  [->] (nc21.south) to [out=-10,in=50] (n1021.north); 
\end{tikzpicture}
}
\vspace{-5mm}
\caption{A tableau that proves the realizability of $\Always(\Next e\Iff\Next s)$.}
\label{fig-tableau1}
\end{figure}
\begin{example}
\label{ex:intro-example}
  We revisit the specification $\Always(\Next e\Iff\Next s)$ discussed
  in Section~\ref{sec:intro}, for which
  $\TNF(\Next e\Iff\Next s) = (\Next p_e\And \Next s) \Or (\Next\Not
  p_e\And \Next\Not s)$, which is a minimal $\MX$-covering with just
  one move without literals.
 Hence, the only child of the root in Figure \ref{fig-tableau1} is obtained by rule $(\Always\|)$ and then $(\Always\&)$. According to Alg.~\ref{fig-algorithm}, in the node $n_3$, $\TNF((c\And s)\Or(\Not c\And\Not s))\And\psi)$ yields a minimal $\MX$-covering with two moves, hence the rule $(\Always\|)$ is applied and, after it, the rule $(\Always\&)$ yields two AND-nodes, one for each move. The first move contains $e\And s$ and the second one contains $\Not e \And\Not s$. In both branches, after saturation and application of $(\Next)$, a node already in the branch is obtained. Therefore the completed tableau has an open bunch that shows that the input is realizable.\\
Let us remark that, in practice, to decide realizability, it suffices to construct just one of the two branches in the tableau of Figure \ref{fig-tableau1}. Since the set of strict-future formulas in the label of nodes $n_4$ and $n_5$ are exactly the same, their expansion after the application of the next-state rule, in the example nodes $n_8$ and $n_9$, are also identical. 
\end{example}

\begin{example}
Let $\psi$ be the second safety formula $\eta$ of Example~\ref{ex-TNF1} where $\MX=\{p_e\}$.  In Figure~\ref{fig-tableau2} we depict the tableau for $\Always\psi$.
The $\TNF(\psi)$ (see  $\TNF(\eta)$ in Example~\ref{ex-TNF1}) is a minimal $\MX$-covering hence,  by applying rule $(\Always\|)$ and then $(\Always\&)$, nodes $n_2$ and $n_3$ are generated. According to Alg.~\ref{fig-algorithm}, they are AND-nodes, one for each move in $\TNF(\psi)$. The label of node $n_4$ is an elementary formula and the $(\Next)$-rule yields node $n_5$. Since it is not true that $\tau(n_5) \lessdot \tau(n_1)$, the tableau goes on applying rules ($\Always\|$) and ($\Always\&$) to node $n_5$.  Then,
$$
\TNF(\Always_{[0,8]} c \And \psi) = ( p_e \And c  \And \Next \Always_{[0,8]} c) \Or ( \Not p_e \And c  \And \Next \Always_{[0,8]} c)
$$
is a minimal $\MX$-covering with  two moves corresponding to nodes $n_6$ and $n_7$. As both nodes contain an elementary formula, by $(\Next)$-rule, nodes $n_8$ and $n_9$ are obtained. Finally, relations $\tau(n_8) \lessdot \tau(n_5)$ and  $\tau(n_9) \lessdot \tau(n_5)$ hold and both branches are successful.
{
\begin{figure}[H]
\hspace{1mm}

\tikzstyle{level 1}=[level distance=1cm, sibling distance=4.8cm]
\tikzstyle{level 2}=[level distance=0.8cm,sibling distance=3cm]
\tikzstyle{level 3}=[level distance=0.8cm,sibling distance=3cm]
\tikzstyle{level 4}=[level distance=1cm,sibling distance=4cm]
\tikzstyle{level 5}=[level distance=0.8cm,sibling distance=3cm]
{\scriptsize
\begin{tikzpicture}
\node  (n1)  {$n_1: \;\Always \psi = \Always(c \And (\Not p_e \Impl \Always_{[0,9]}  c) \And (\Always_{[0,9]} c \Or \Eventually_{[0,2]}\Not c)) \qquad$}
	child {node (n10) {$n_2: \; p_e \And c \And (\Next\Eventually_{[0,1]}\Not c) \Ddot{\Or} (\Next\Always_{[0,8]} c) , \Next \Always\psi \qquad$}
    }
    child {node (n11) {$n_3: \; \Not p_e \And c \And \Next\Always_{[0,8]} c, \Next \Always\psi \quad$}
      child {node (n111) {$n_4: \; \Not p_e , c , \Next\Always_{[0,8]} c, \Next \Always\psi \quad$}
      child {node (n12) {$n_5: \;\Always_{[0,8]}c, \Always\psi \quad$}
        child {node (n1223) {$n_6: \; p_e \wedge c \wedge \Next \Always_{[0,8]}  , \Next\Always\psi\quad $}
          child {node (n12231) {$n_8: \; \Always_{[0,8]} c  , \Always\psi \quad$}
            }
          }
          child {node (n1221) {$n_7: \;\Not p_e \wedge c \wedge \Next \Always_{[0,8]} c, 		\Next\Always\psi \quad$}
                child {node (n122111) {$n_9: \;\Always_{[0,8]} c, \Always\psi\quad $}
        }
      }
      }
      }
    };
    
\draw  (n11) edge node[right]  {\tiny $(\And)$} (n111);
\draw  (n111) edge node[right]  {\tiny $(\Next)$} (n12);
               
\draw  (n1221) edge node[right]  {\tiny $(\And) + (\Next)$} (n122111);
\draw  (n1223) edge node[right]  {\tiny $(\And) + (\Next)$} (n12231);

\path (n1)  -- coordinate[midway] (O11) (n10);
\path (n1) -- coordinate[midway] (O12) (n11);
\draw (O11) to[bend right=20]   (O12);  

\draw  (n1) edge node {\tiny \hspace{2.4cm}$(\Always\|) + (\Always\&)$} (n10);

\path (n12)  -- coordinate[midway] (S11) (n1223);
\path (n12) -- coordinate[midway] (S12) (n1221);
\draw (S11) to[bend right=20]   (S12);  

\draw  (n12) edge node {\tiny \hspace{2.1cm}$(\Always\|) + (\Always\&)$} (n1223); 

\draw[draw = gray]  [->] (n12231.south) to [out=190,in=-230] (n12.north); 
\draw[draw = gray]  [->] (n122111.south) to [out=-10,in=50] (n12.north);

\end{tikzpicture}
}
\vspace{-0.2cm}
\caption{Tableau for $\Always\psi$ where $\psi=c \And (\Not p_e \Impl \Always_{[0,9]}  c) \And (\Always_{[0,9]} c \Or \Eventually_{[0,2]}\Not c)$.}
\label{fig-tableau2}
\end{figure}
}

\vspace{-0.4cm} 
 
In previous Example~\ref{ex:intro-example} we noticed that one of the two branches would have not been expanded to decide that the input is realizable because the identical strict-future formulas in the label of two AND-siblings. In Figure \ref{fig-tableau2}
we do not depict the expansion of node $n_2$, because its AND-sibling $n_3$ contains a stronger strict-future formula. Note that $\Next\Always_{[0,8]} c$ is stronger than $ (\Next\Eventually_{[0,1]} \Not c \ddot{\Or} \Next\Always_{[0,8]} c)$. Then, if the expansion of node $n_3$ gets an open tableau, that should be also the case for node $n_2$. Moreover, when the strongest AND-child of a node, $n_3$ in this case, has a closed tableau, the tableau for the parent node is closed independently of the tableau for the remaining AND-children. It can be also observed that in nodes $n_6$ and $n_7$ we have the same situation as in Example \ref{ex:intro-example} both have identical strict-future formulas.
\end{example}

\begin{example}
\label{ex:four-minimal-covering}
Consider the safety specification 
$$
\Always\psi = \Always(a \Impl c , \Next p_e \Impl \Eventually_{[1,2]} a , \Next \Not p_e \Impl \Eventually_{[1,10]} \Not c ).
$$
\vspace{-0.6cm}

{
\begin{figure}[H]
\hspace{-2cm}
\tikzstyle{level 1}=[level distance=1cm, sibling distance=4.8cm]
\tikzstyle{level 2}=[level distance=1cm,sibling distance=3cm]
\tikzstyle{level 3}=[level distance=0.8cm,sibling distance=3cm]
\tikzstyle{level 4}=[level distance=0.8cm,sibling distance=3cm]
\tikzstyle{level 5}=[level distance=0.8cm,sibling distance=1.5cm]
\tikzstyle{level 6}=[level distance=1cm,sibling distance=5.5cm]
\tikzstyle{level 7}=[level distance=1cm,sibling distance=3cm]
\tikzstyle{level 8}=[level distance=0.8cm,sibling distance=3cm]
\tikzstyle{level 9}=[level distance=1cm,sibling distance=5cm]
\tikzstyle{level 10}=[level distance=0.8cm,sibling distance=3cm]
\tikzstyle{level 11}=[level distance=0.8cm,sibling distance=3cm]
{\scriptsize
\begin{tikzpicture}
\node  (n1)  {$n_1: \; \Always \psi  $}
	  child {node (n12) {$B_1: \; c \And a \And \overbrace{(\Next \Not p_e \And \Eventually_{[1,10]} \Not c ) \Ddot{\Or} 
	  ( \Next p_e \And \Eventually_{[1,2]} a )}^{\delta_1} , \Next \Always\psi\qquad$}
	      child {node (n120) {$n_2: \; c , a, ( \Next \Not p_e \And \Next \Not c ) \Ddot{\Or} (\Next \Not p_e \And \Next \Eventually_{[1,9]} \Not c ) \Ddot{\Or} ( \Next p_e \And \Next a ) \Ddot{\Or} 
	      (\Next p_e \And \Next\Eventually_{[1,1]} a ) , \Next \Always\psi\qquad $}
	       child {node (n1201) {$n_3: \; (\Not p_e  \And \Not c) \Ddot{\Or} (\Not p_e \And \Eventually_{[1,9]} \Not c ) \Ddot{\Or} ( p_e  \And  a) \Ddot{\Or} 
	      (p_e \And\Eventually_{[1,1]} a) , \Always\psi \qquad$}
	       child {node (n12011) {$n_4: \; (\Not p_e  \And \Not c) \Ddot{\Or} (\Not p_e \And \Eventually_{[1,9]} \Not c )\Ddot{\Or} ( p_e \And  a) \Ddot{\Or} 
	        (p_e\And  \Next a) , \Always\psi \qquad$}
	      child {node (nor1) {$C1,  \Always\psi \qquad$}
	      child {node (na11) {\qquad\qquad $n_5: \;  p_e \And  c \And  a \And \delta_1,  \Next \Always\psi\qquad$}
    }
	  child {node (na12) {$n_6: \;  \Not p_e \And \Not c \And \Not a \And \overbrace{(\Next \Not p_e \And \Eventually_{[1,10]} \Not c ) \Ddot{\Or} (\Next p_e \And \Eventually_{[1,2]} a )}^{\delta_1} , \Next \Always\psi\qquad$}
	      child {node (na120) {$n_7: \;  \Not p_e , \Not c , \Not a ,  (\Next \Not p_e \And \Next \Not c ) \Ddot{\Or} (\Next \Not p_e \And \Next \Eventually_{[1,9]} \Not c ) \Ddot{\Or} ( \Next p_e
	      \And \Next a ) \Ddot{\Or} ( \Next p_e
	      \And \Next\Eventually_{[1,1]} a )  , \Next \Always\psi\qquad\qquad\qquad$}
	       child {node (na1201) {$n_8: \;  (\Not p_e  \And \Not c) \Ddot{\Or} (\Not p_e
	       \And \Eventually_{[1,9]} \Not c ) \Ddot{\Or} ( p_e  \And  a) 
	       \Ddot{\Or} ( p_e  \And  \Eventually_{[1,1]} a) , \Always\psi\qquad$}
       }
       }
       }
        }
        child {node (nor2) {$C2, \Always\psi$}
        }
        child {node (nor3) {$C3, \Always\psi$}
        }
        child {node (nor4) {$C4, \Always\psi$}
        }
       }
       }%
       }
       }
    child {node (n11) {$B_2: \; \Not a \And \delta_1 , \Next \Always\psi\qquad $}
    };
\draw  (n12) edge node[right]  {\tiny $(\And) + (\Ddot\Or\Eventually\!<)$} (n120);
\draw  (n120) edge node[right]  {\tiny $(\Next)$} (n1201);
\draw  (n1201) edge node[right]  {\tiny $(\Ddot\Or\Eventually\!=)$} (n12011);

\path (nor1)  -- coordinate[midway] (P11) (na11);
\path (nor1) -- coordinate[midway] (P12) (na12);
\draw (P11) to[bend right=20]   (P12);
\draw  (n1) edge node  {\tiny \hspace{2cm}$(\Always\|)$} (n12);
\draw  (n12011) edge node  {\tiny\hspace{0.7cm}  $(\Always\|)$} (nor2);
\draw  (nor1) edge node  {\tiny \hspace{2.2cm}$(\Always\&)$} (na11);
\draw  (na12) edge node[right]  {\tiny $(\And) + (\Ddot\Or\Eventually\!<)$} (na120);
\draw  (na120) edge node[right]  {\tiny $(\Next)$} (na1201);
\draw[draw = gray]  [->] (na1201.south) to [out=190,in=-230] (n1201.north);
\end{tikzpicture}
}
\vspace{-0.5cm}
\caption{Tableau for $\Always(a \Impl c , \Next p_e \Impl \Eventually_{[1,2]} a , \Next \Not p_e \Impl \Eventually_{[1,10]} \Not c )$}
\label{fig-tableau4}
\end{figure}
}
 
 \vspace{-0.5cm}
 
The tableau in Figure~\ref{fig-tableau4} starts with 
$
\TNF(\psi)  =   (c \And a \And \delta_1) \Or  
( \Not a \And \delta_1)
$
where $\delta_1 = (\Next \Not p_e \And \Eventually_{[1,10]} \Not c ) \ddot{\Or}   (  \Next p_e \And \Eventually_{[1,2]} a )$. There are two minimal $\MX$-covering $B_1$ and $B_2$ with one move each. The tableau chooses to expand the covering $B_1$. The application of various saturation rules from $B_1$ on, in particular $(\Ddot\Or\Eventually\!<)$ and $(\Ddot\Or\Eventually\!=)$, leads to node $n_4$. Then, $(\Always\|)$ is applied to $n_4$ resulting in four OR-siblings corresponding to the four minimal $\MX$-coverings in
$\TNF(((\Not p_e  \And \Not c) \Or (\Not p_e \And \Eventually_{[1,9]} \Not c) \Or ( p_e \And  a)  \Or (p_e \And \Next a))  \And \psi)$
which are:
\begin{eqnarray*}
& \; & C1 = (p_e \And c \And a \And \delta_1) \Or (\Not p_e \And \Not c \And \Not a \And \delta_1)\\
& & C2 = (p_e \And c \And a \And \delta_1) \Or (\Not p_e \And  c \And \delta_3)\\
&  & C3 = (p_e \And \Not a \And \delta_2) \Or (\Not p_e \And \Not c \And \Not a \And \delta_1) \\
&  & C4 = (p_e \And \Not a \And \delta_2) \Or (\Not p_e \And  c  \And \delta_3)
\end{eqnarray*}
where
\begin{eqnarray*}
\delta_1 & = & (\Next \Not p_e \And \Eventually_{[1,10]} \Not c ) \Or  (\Next  p_e \And \Eventually_{[1,2]} a )\\ 
\delta_2 & = & (\Next \Not p_e \And \Eventually_{[1,10]} \Not c \And \Next a ) \ddot{\Or}  (\Next  p_e \And \Next a  )\\
\delta_3 & = &  (\Next \Not p_e \And \Eventually_{[1,9]} \Not c ) \ddot{\Or}  (\Next  p_e \And \Eventually_{[1,2]} a  \And  \Eventually_{[1,9]} \Not c )
\end{eqnarray*}
In this case, $\delta_1$ is weaker than both $\delta_2$ and $\delta_3$. We expand only  the minimal $\MX$-covering $C_1$ that contains $\delta_1$ in its two moves.
This creates two successful branches. Note that the branch starting at node $n_5$ (that we do not depicted) would be also successful. Indeed it would be finished by a node identical to $n_{8}$. Hence, the completed tableau is open.

In node $n_4$, we choose to expand the node of the minimal $\MX$-covering $C1$ with the weakest strict-future formulas. This is really convenient because if the tableau is closed for this option, we immediately know that the tableau will also be closed for any of the other minimal $\MX$-covering. Moreover, if the expansion of $C_1$ leads to an open tableau, then $n_4$ (whose children are OR-siblings) has an open tableau and there is no need to expand the remaining OR-siblings corresponding to $\MX$-coverings $C_2$, $C_3$ and $C_4$. 

It is important to highlight that it is possible to synthesise a winning strategy from any open tableau. For instance,  Figure~\ref{fig:another-good-example} represents a winning strategy by means of a finite state machine extracted from the tableau in Figure~\ref{fig-tableau4}.
\begin{figure}[H]
\centering
\begin{tikzpicture}[->,>=stealth',shorten >=2pt,auto,node distance=2.5cm,initial text=$\;$,
                    semithick]
  \tikzstyle{every state}=[fill=gray!20]
  
  \node[initial below, state] (A)       {$n_1$};
  \node[state]         (B) [right of=A] {$n_3$};

  \path 
        (A) edge[bend right, above]    node {$\small{
       a   c}$}  (B)
        
        (A) edge[bend left, above]    node {$\small{
       \Not a }$}  (B)
       
       (B) edge[loop above]   node {$\small{p_e / a c}$}  (B)
    
        (B) edge[loop below]   node {$\small{\Not p_e / \Not a \Not c}$}  (B);
    
\end{tikzpicture}
\caption{Finite state machine that represents the strategy synthesised from the open tableau in Figure~\ref{fig-tableau4}}
\label{fig:another-good-example}
\end{figure}
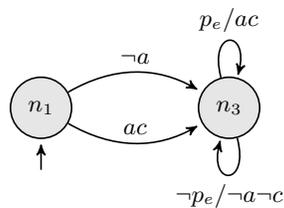
\end{example}

\begin{example}
\label{example:good}
Let $a \And \Always\psi$ be a safety specification where 
$$
\psi =  (a \Impl c) \And(p_e \Impl \Eventually_{[0,100]} \Not c) \And
(\Not p_e \Impl \Eventually_{[0,100]} a).
$$
Figure~\ref{fig:tableau5} shows the open completed tableau for $a \And \Always\psi$.
Nodes $n_2$ and $n_3$ comes from $$\TNF(a \And \psi) =  (p_e \And a \And c \And \Next\Eventually_{[0,99]}\Not  c ) \Or 
 (\Not p_e \And a  \And c )$$ 
which is a minimal $\MX$-covering.
%
In node $n_4$,
the $\TNF(\Eventually_{[0,99]} \neg c \wedge \psi)$ is
\begin{eqnarray*}
& &   (p_e \wedge c \wedge  \Next\Eventually_{[0,98]}\Not  c) \vee   (p_e \wedge \neg a  \; \wedge \; \Not  c) \vee  (p_e \wedge \neg a  \wedge c 
\wedge  \Next\Eventually_{[0,98]}\Not  c) \vee\\
& &   (\neg p_e \wedge c \wedge  a \wedge  \Next\Eventually_{[0,98]}\Not  c) \vee  (\neg p_e \wedge c \wedge \neg a \wedge \Next\Eventually_{[0,99]} a \wedge  \Next\Eventually_{[0,98]}\Not  c) \vee \\ 
& &   (\neg p_e \wedge \neg a   \wedge \neg c \wedge \Next\Eventually_{[0,99]} a) 
\end{eqnarray*}
{
\begin{figure}[H]
\hspace{-14mm}
\tikzstyle{level 1}=[level distance=1cm, sibling distance=4cm]
\tikzstyle{level 2}=[level distance=0.8cm,sibling distance=3cm]
\tikzstyle{level 3}=[level distance=1cm,sibling distance=4cm]
\tikzstyle{level 4}=[level distance=1cm,sibling distance=4cm]
\tikzstyle{level 5}=[level distance=0.8cm,sibling distance=3cm]
\tikzstyle{level 6}=[level distance=0.8cm,sibling distance=4.5cm ]
\tikzstyle{level 7}=[level distance=0.8cm,sibling distance=3cm ]
\tikzstyle{level 8}=[level distance=0.8cm,sibling distance=3cm ]
\tikzstyle{level 9}=[level distance=0.8cm,sibling distance=3cm ]
\tikzstyle{level 10}=[level distance=1.2cm]
\begin{tikzpicture}
\scriptsize
\node (n1)  {$n_1: \; a \semic \Always\psi \qquad$}
    child {node (n11) {$n_2: \; a\semic p_e \semic c \semic \Next\Eventually_{[0,99]}\Not  c \semic \Next\Always \psi \qquad$}
      				              child { node (n621) {$n_4: \;\Eventually_{[0,99]}\Not c 
      				     \semic \Always \psi \quad$}
      				     child { node (n72) {$n_5: \; C1 , \Always \psi\qquad$}
      				      child { node (n6nuevo) {$n_7: \; p_e, \neg a, \neg c, \Next \Always \psi\qquad$}
      				    child { node (n6nuevo11) {$n_9: \; \Always \psi \qquad$}
      				     }
      				              }
      				              child {node (n6nuevo2) {$n_8: \; \neg p_e, \neg a , \neg c, \Next \Eventually_{[0,99]} a
       \semic \Next\Always \psi \qquad $}
        child { node (n6nuevo22) {$n_{10}: \; \Eventually_{[0,99]} a
      				     \semic \Always \psi \qquad$}
      				     }
       }
       				}
       				child { node (n71) {$n_6: \; C_2, \Always \psi \qquad\qquad$}
      				      }
      				              }
    }
    child {node (n12) {$n_3: \; a\semic \Not p_e \semic c 
       \semic \Next\Always \psi \qquad$}
            child {node (m222) {$\Always \psi$}
           }
    };

\draw  (n1) edge  node {{\hspace{2.1cm}\tiny $(\Always\|) + (\Always\&)$}} (n11); 
\path (n1)  -- coordinate[midway] (P11) (n11);
\path (n1) -- coordinate[midway] (P12) (n12);
\draw (P11) to[bend right=30]   (P12);    
\draw (n6nuevo) edge node[right] {{\tiny $(\Next)$}} (n6nuevo11);
\draw (n6nuevo2) edge node[right] {{\tiny $(\Next)$}} (n6nuevo22);
\draw (n11) edge node[right] {{\tiny $(\Next)$}} (n621);
\draw (n621) edge node {{\hspace{2.1cm}\tiny $(\Always\|)$}} (n72);

\draw (n72) edge node {{\hspace{2.1cm}\tiny $(\Always\&)$}} (n6nuevo);
\path (n72) -- coordinate[midway] (O21) (n6nuevo);
\path (n72) -- coordinate[midway] (O22) (n6nuevo2);
\draw (O21) to[bend right=30]   (O22);

\draw[draw = gray]  [->] (m222.south) to [out=8,in=50] (n1.north); 
\draw[draw = gray]  [->] (n6nuevo22.south) to [out=-10,in=50] (n1.north); 
\draw[draw = gray]  [->] (n6nuevo11.south) to [out=190,in=-230] (n1.north); 
\end{tikzpicture}
\vspace{-0.6cm}
\caption{Tableau for 
$a \And \Always((a \Impl c) \And
 (p_e \Impl \Eventually_{[0,100]} \Not c) \And
 (\Not p_e \Impl \Eventually_{[0,100]} a)).$ }
\label{fig:tableau5}
\end{figure}
}
\vspace{-0.5cm}

The weakest option for the environment variable $p_e$ is $ (p_e \wedge \neg a  \; \wedge \; \Not  c)$ because its strict-future formula is $True$. For the environment variable $\neg p_e$ the weakest options are two:
$(\neg p_e \wedge \neg a   \wedge \neg c \wedge \Next\Eventually_{[0,99]} a)$ or $(\neg p_e \wedge c \wedge  a \wedge  \Next\Eventually_{[0,98]}\Not  c)$. So, from node $n_4$ the tableau goes on with two possible minimal $\MX$-coverings.   
\begin{eqnarray*}
& & C1 = (p_e \wedge \neg a  \; \wedge \; \Not  c) \vee (\neg p_e \wedge \neg a   \wedge \neg c \wedge \Next\Eventually_{[0,99]} a)\\
&  & C2 = (p_e \wedge \neg a  \; \wedge \; \Not  c) \vee (\neg p_e \wedge c \wedge  a \wedge  \Next\Eventually_{[0,98]}\Not  c)
\end{eqnarray*}
As explained in Example \ref{ex:four-minimal-covering}, for deciding realizability, it  suffices to expand this two $\MX$-coverings. However, since
$\MX$-covering $C_1$ gives an open tableau, we do not have to expand $C_2$. 

It is worthy to note that formulas $\Eventually_{[0,100]}\Not  c$ and $\Eventually_{[0,100]} a$ are fulfilled in at most two steps in Figure \ref{fig:tableau5}. In general, our method does not force the complete unfolding of eventualities, nor formulas of the form $\Always_{I}\eta$, unless its fulfillment check requires it. 

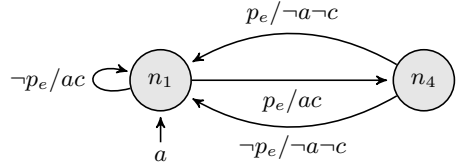
\begin{figure}[H]
\centering
\begin{tikzpicture}[->,>=stealth',shorten >=2pt,auto,node distance=3.5cm,initial text=$a$,
                    semithick]
  \tikzstyle{every state}=[fill=gray!20]
  
  \node[initial below, state] (A)       {$n_1$};
  \node[state]         (B) [right of=A] {$n_4$};

  \path (A) edge[loop left]   node {$\small{\Not p_e / a c}$}  (A)
        
        (A) edge[below]  node {$\small{p_e / a c
        }$} (B)
            
        (B) edge[bend right, above]    node {$\small{
      p_e / \Not a  \Not c}$}  (A)
        
        (B) edge[bend left, below]    node {$\small{
      \Not p_e / \Not a  \Not c}$}  (A);
    
\end{tikzpicture}
\caption{Finite state machine that represents the strategy synthesised from the open tableau in Figure~\ref{fig:tableau5}}
\label{fig:good-example}
\end{figure}

From the open tableau, we can easily synthesise a winning strategy, which is the one represented by the finite state machine of Figure~\ref{fig:good-example}.

\end{example}

\begin{example}
Let 
 $a \And \Always\psi$ be a safety specification where
$\psi =  (a \Impl c) \And
 (p_e \Impl \Next a) \And (\neg p_e \Impl \Always_{[2,10]} \Not c)$.
Figure~\ref{fig:tableau-closed} is a closed tableau that proves that $a \And \Always\psi$ is unrealizable.
To start the tableau construction, we have that
$
\TNF( a \And \psi) = (p_e \And a \And c \And \Next a ) \Or 
(\Not p_e \And a  \And c \And \Always_{[2,10]} \Not c).
$

\vspace{-1cm}
{
\begin{figure}[H]
\begin{tikzpicture}
\tikzstyle{level 1}=[level distance=1cm, sibling distance=4.8cm]
\tikzstyle{level 2}=[level distance=0.8cm,sibling distance=3.3cm]
\tikzstyle{level 4}=[level distance=1.2cm, sibling distance=5.5cm]
\tikzstyle{level 5}=[level distance=0.8cm]
\scriptsize
\node (n1)  {$n_1: \; a , \Always\psi \qquad\quad$}
    child {node (n11) {$n_2: \; p_e \And a \And c \And \Next a , \Next\Always \psi \qquad\quad$}
       child { node (m33) {$n_4: \; a,  \Always \psi \qquad\quad$}         
       }
    }
    child {node (n12) {$n_3: \; \Not p_e \And a \And c \And \Always_{[2,10]}\Not  c  , \Next\Always \psi \qquad\quad $}
     child {node (n121) {$n_5: \; \Not p_e , a ,c , \Next^2 \neg c,  \Next\Always_{[2,9]}\Not  c  , \Next\Always \psi \qquad\quad $}
            child {node (mf)  {$n_6: \; \Next \Not c , \Always_{[2,9]} \Not c, \Always \psi \qquad\quad$}
               child {node (mf1)  {$n_7: \; p_e \And c \And \Next \Not c \And \Next a \And  \Always_{[2,9]} \Not c , \Next\Always \psi \qquad\quad$}
               child {node (mfa)  {$n_9: \; p_e , c , \Next \Not c , \Next a , \Next^2 \Not c ,  \Next\Always_{[2,8]} \Not c , \Next\Always \psi \qquad\quad$}
               child {node (mf11)  {$n_{10}: \;  \Not c , a , \Next \Not c ,  \Always_{[2,8]} \Not c , \Always \psi \qquad\quad$}
               child {node (mf111)  {$n_{11}: \;  \# \qquad\quad$}
           }
           }
           }
           }
               child {node (mf2)  {$n_8: \; \Not p_e \And c \And \Next \Not c \And  \Always_{[2,10]} \Not c , \Next\Always \psi \qquad\quad$}
           }
           }
           }
    };
\draw  (n1) edge  node {{\hspace{2.5cm}\tiny $(\Always\|) + (\Always\&)$}} (n11); 
\path (n1)  -- coordinate[midway] (P11) (n11);
\path (n1) -- coordinate[midway] (P12) (n12);
\draw (P11) to[bend right=30]   (P12);   
\draw  (mf) edge  node {{\hspace{3cm}\tiny $(\Always\|) + (\Always\&)$}} (mf1); 
\path (mf)  -- coordinate[midway] (O11) (mf1);
\path (mf) -- coordinate[midway] (O12) (mf2);
\draw (O11) to[bend right=30]   (O12);    
\draw  (n11) edge node[right] {{\tiny $(\And)$}} (m33);
\draw  (n12) edge node[right] {{\tiny $(\And) + (\Always\!<)$}} (n121);
\draw  (n121) edge node[right] {{\tiny $(\Next)$}} (mf);
\draw  (mf1) edge node[right] {{\tiny $(\And) + (\Always\!<)$}} (mfa);
\draw  (mfa) edge node[right] {{\tiny $(\Next)$}} (mf11);
\draw[draw = gray]  [->] (m33.south) to [out=190,in=-230] (n1.north); 
\end{tikzpicture}
\vspace{-0.3cm}
\caption{Closed tableau for $a \And \Always((a \Impl c) \And
 (p_e \Impl \Next a) \And (\neg p_e \Impl \Always_{[2,10]} \Not c))
$.  }
\label{fig:tableau-closed}
\end{figure}
}
\vspace{-0.5cm}
The realizability result depends on the success of the AND-nodes $n_2$ and $n_3$. Once the success of node $n_2$ is ensured, the tableau goes on with the expansion of node $n_3$.  
At node $n_6$, we have that $\TNF(\Next \Not c \And \Always_{[2,9]} \Not c \And \Always \psi ) =$
\begin{eqnarray*}
& &   (p_e \And c \And \Next \Not c \And \Next a \And \Always_{[2,9]} \Not c ) \Or (p_e \And \Not a \And \Next \Not c \And \Next a \And \Always_{[2,9]} \Not c ) \Or\\ 
& &  (\Not p_e \And c \And \Next \Not c \And \Always_{[2,10]} \Not c) \Or (\Not p_e \And \Not a \And \Next \Not c \And  \Always_{[2,10]} \Not c)
\end{eqnarray*}
There are 4 possible minimal $\MX$-coverings, but it is enough to choose any of them to decide that the tableau is closed. The reason is that they have the same strict-future formula. Therefore, the tableau goes on with the AND-nodes $n_7$ and $n_8$, which correspond to the following minimal $\MX$-covering.
\begin{eqnarray*}
& & 
(p_e \And c \And \Next \Not c \And \Next a \And  \Next^2 \Not c \And \Next \Always_{[2,8]} \Not c ) \Or\\
& & 
(\Not p_e \And c \And \Next \Not c \And  \Next^2 \Not c \And \Next \Always_{[2,9]} \Not c) 
\end{eqnarray*}
As the $\TNF$ at node $n_{10}$ is $False$, this node is a failure leaf. 
This fact completes the tableau, since $n_7$ and $n_8$ are AND-siblings.   
\end{example}


%

\section{Correctness}
\label{sec:correctness}

In this section, we connect properties of the completed tableau
$\Tableau(\varphi)$ with the existence of winning strategies for Sally
or Eve in the safety formula game $\Tb(\varphi)$.
We will show that a closed tableau $\Tableau(\varphi)$ represents a
wining strategy for Eve.
However, an open tableau $\Tableau(\varphi)$ represents a wining
strategy for Sally which ensures (or proves) that the safety
specification in the root is realizable.
Indeed, the tableau construction could construct that winning strategy
and returns it as output for the user.
To carry out this proof, we first introduce tableau games and connect
them with safety games (and therefore with realizability).
Then we will connect complete tableau with the winning strategy of the
corresponding player in the tableau game.

\subsection{Safety Tableau Games}
\label{sec-games-tableaux}
In this subsection we define a class of games  $\Tb(\varphi)$ where positions are labeled by sets of formulas from the closure $\Clo(\varphi)$ of the given specification $\varphi = \alpha\And\Always\psi$.
First, we define the components of the game  $\Tb(\varphi)$. 
The set $P=P_E\cup P_S$ of positions is formed by subsets of the closure of the specification $\varphi$ where
\begin{itemize}
\item[] $P_E = \{\Phi\cup\{\Always\psi\}  \mid  \Phi\in\Preclo(\varphi) \text{ and } \Cons(\Phi)\}$
\item[] $P_S = \{\Phi\cup\{\Next\Always\psi\}  \mid  \Phi\in\Preclo(\varphi) \text{ and } \Cons(\Phi)\}$.
\end{itemize}
The set of bad positions $B\subseteq P_E$ is defined as follows: $\Phi\cup\{\Always\psi\}\in P_E$ is in $B$ whenever one of the following conditions holds
\begin{itemize}
\item[(i)]  $\TNF(\Phi\And\psi)$ is not an $\MX$-covering.
\item[(ii)] Let $J_1,\dots, J_m$ the collection of all minimal $\MX$-coverings in $\TNF(\Phi\And\psi)$. For every $1\leq k\leq m$ there exists some $i\in J_k$ such that $\Incons(\Delta)$ for all $\Delta\in\Stt(\pi_i)$.
\end{itemize}

Next, we define the moves of the arena.
In every position $p=\Phi\cup\{\Always\psi\}\in P_E$ (in particular, in the initial one where $\Phi=\alpha$) Eve, if $p\not\in B$, chooses one of the minimal $\MX$-coverings $J\subseteq I$ in $\TNF(\Phi\And\psi)=\bigvee_{i\in I}\pi_i$ and then choose some $j\in J$. The resulting position $\{\pi_j,\Next\Always\psi\}$ is in $P_S$, therefore it is the Sally's turn.
Then, Sally, in each of her turns, annotates the state with an
exhaustive application of choices for the system, where each choice is
a saturated set.
If all possible choices for Sally (according to the specification) are
inconsistent, then the position is marked as bad, and otherwise, the
position is good.
Then, Sally must get the next-state formulas from the previous
position of Sally to pass the turn to Eve.
Formally,
\begin{itemize}
\item $(\Phi\cup\{\Always\psi\},\{\pi,\Next\Always\psi\})\in T_E$  if $\pi$ is a move in $\TNF(\Phi\And\psi)$. 
\item $(\{\pi,\Next\Always\psi\}, \Delta^\downarrow\cup\{\Always\psi\})\in T_S$ if $\Delta\in\Stt(\pi)$ 
\end{itemize}

Note that the description above defines a finite state safety game,
which we call the tableau game $\Tb(\varphi)$ for specification
$\varphi$.
We will then prove the following fact.

\begin{lemma}
 $\G(\varphi)$ is winning for Sally
if and only if $\Tb(\varphi)$ is winning for Sally.
\end{lemma}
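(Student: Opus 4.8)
The plan is to relate the two finite-state safety games $\G(\varphi)$ and $\Tb(\varphi)$ by a correspondence between finite plays and tableau positions, and then transfer winning strategies across it. Since both are safety games, they are memory-less determined, so it suffices to relate Sally's winning strategies. The bridge is the pre-witness relation of Definition~\ref{def-witness}: I read an Eve position $\Phi\cup\{\Always\psi\}$ reached after $k$ rounds of $\Tb(\varphi)$ as the obligation $\Phi\And\psi$ that the $k$-th state of the play must fulfil, and I maintain as invariant that the finite trace $\lambda_0\cdots\lambda_{k-1}$ committed so far in $\G(\varphi)$ is a pre-witness of $\alpha\And\Always\psi$, and that appending a valuation $\lambda_k$ preserves the pre-witness property exactly when $\lambda_k\wdash\Phi\And\psi$. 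Proposition~\ref{prop-downarrow}(\ref{2}) is precisely what propagates this invariant across a round: once Sally plays a consistent saturated set $\Delta\in\Stt(\pi)$ and the turn passes through the next-state operation to $\Delta^{\downarrow}\cup\{\Always\psi\}$, the fresh obligation $\Delta^{\downarrow}\And\psi$ governs the next state, while $\Delta$ itself pins down the current valuation through $\Val_\Delta$.

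The first key step is to show that the two games have matching bad positions along corresponding plays, i.e.\ that $\Phi\cup\{\Always\psi\}\in B$ if and only if, at the corresponding position of $\G(\varphi)$, the environment can choose some $v\in\Val(\MX)$ for which every system answer $v'\in\Val(\MY)$ violates the specification ($v+v'\not\wdash\Phi\And\psi$). For the forward direction I would split on the two clauses defining $B$: clause~(i) is handled by Proposition~\ref{prop-non-x-covering} and clause~(ii) by Proposition~\ref{prop-all-minimal-x-covering}, each of which delivers exactly such a $v$. For the converse I would invoke Proposition~\ref{prop-covering}(\ref{iii}): if every $v$ admits a good answer then $\TNF(\Phi\And\psi)$ contains a minimal $\MX$-covering, and Propositions~\ref{prop-Stt} and~\ref{prop-always-rule} then yield a covering all of whose moves have a consistent saturation, so neither clause of $B$ holds. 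Combined with Corollary~\ref{cor:witness}, this identifies ``reaching a bad position'' in either game with ``the environment forcing a finite witness of violation''.

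With the bad-position correspondence in hand, I would transfer strategies in both directions, maintaining the invariant above. For $\G(\varphi)$ winning implies $\Tb(\varphi)$ winning, given a winning $\rho_S$ for Sally in $\G(\varphi)$, at a tableau Sally position $\{\pi,\Next\Always\psi\}$ I would select $\Delta\in\Stt(\pi)$ encoding the answers that $\rho_S$ prescribes for the environment valuations ranged over by $\Lit(\pi)$; Proposition~\ref{prop-covering}(\ref{ii}) guarantees that for every $v\in\Val_\pi(\MX)$ a compatible system answer exists, and Proposition~\ref{prop-always-rule}(b)(ii) guarantees the chosen $\Delta$ is consistent, so the simulated play never enters $B$. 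For the converse I would run the simulation the other way: when the real environment plays $v\in\Val(\MX)$ in $\G(\varphi)$, the covering property routes $v$ to a move $\pi$ with $v\in\Val_\pi(\MX)$, Sally's tableau strategy answers with a consistent $\Delta$, and the system valuation $w$ is read off from $\Val_\Delta(\MY)$; Proposition~\ref{prop-downarrow} then certifies that $v+w$ keeps the trace a pre-witness.

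The main obstacle I expect is the mismatch in granularity between the two games: a tableau move $\pi$ bundles an environment commitment with a system commitment and typically leaves several variables unfixed, whereas a round of $\G(\varphi)$ is a pair of concrete valuations. Making the simulation precise therefore requires carefully routing each concrete $v\in\Val(\MX)$ to a move of a minimal covering with $v\in\Val_\pi(\MX)$ and showing that Sally's saturated-set choice $\Delta$ can be specialized to, or abstracted from, a concrete valuation without introducing conflicts; this is exactly where the identity $\Val_{\pi}=\Val_{\Lit(\pi)}$ and the minimal-covering guarantees of Proposition~\ref{prop-covering} do the work. The same point is where one must reconcile the choice structure of $\Tb(\varphi)$---the selection among minimal $\MX$-coverings (an existential, system-like choice) refined by the selection of a particular move (a universal, environment-like challenge)---with the strict alternation of $\G(\varphi)$, and in particular verify that the move choice never lets Eve reach a violation that Sally could otherwise avoid in $\G(\varphi)$.
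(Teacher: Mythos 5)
Your overall route is the same as the paper's: both proofs attach to each tableau position the set of finite plays of $\G(\varphi)$ that it abstracts (the paper's $\Positions$ map, your pre-witness invariant propagated by Proposition~\ref{prop-downarrow}), both reduce the lemma to a correspondence of bad positions along corresponding plays, and both feed that correspondence with the same ingredients (Propositions~\ref{prop-non-x-covering}, \ref{prop-all-minimal-x-covering}, \ref{prop-covering} and \ref{prop-Stt}) before transferring strategies in the two directions. So the method is not different; the issue is whether your bad-position correspondence is actually complete.

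As stated, it is not. Write $B^\Tb$ and $B^\G$ for the bad sets of the two games. Your ``forward direction'' proves: $\Phi\in B^\Tb$ implies that some $v\in\Val(\MX)$ defeats every $v'\in\Val(\MY)$ (bad in $\G$). Your ``converse'' proves: if every $v$ admits a good answer (not bad in $\G$), then $\Phi\notin B^\Tb$. The second statement is precisely the contrapositive of the first, so your two paragraphs establish the same implication twice, and the direction $B^\G\Rightarrow B^\Tb$ --- equivalently, $\Phi\notin B^\Tb$ implies that \emph{every} $v$ has a good answer $v'$ with $v+v'\wdash\Phi\And\psi$ --- is never proven where you claim the ``iff''. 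That missing direction is not a formality: it is exactly what the transfer of Sally's winning strategy from $\Tb(\varphi)$ to $\G(\varphi)$ rests on, and it is the paper's entire second half of the argument (from $\Phi\notin B^\Tb$, pick a minimal $\MX$-covering all of whose moves admit a consistent saturation, and use Proposition~\ref{prop-Stt} together with the covering property to produce, for each $v$, some $v'$ with $v+v'\wdash\Phi\And\psi$). To your credit, your Direction-2 simulation paragraph implicitly re-derives this content (routing $v$ to a move of the good covering, taking a consistent $\Delta$, reading $v'$ off $\Val_\Delta(\MY)$, and invoking Proposition~\ref{prop-downarrow}), so the proof is repairable by promoting that inline argument into the bad-position lemma; but as organized, the announced biconditional is unproven and the reader has to reconstruct the half that Direction~2 depends on. A further small inaccuracy: Proposition~\ref{prop-always-rule}(b)(ii) does not ``guarantee the chosen $\Delta$ is consistent''; it \emph{assumes} $\Cons(\Delta)$ and concludes $\lambda_0\wdash\Phi\And\psi$. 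Existence of a consistent $\Delta\in\Stt(\pi)$ must instead be extracted from Proposition~\ref{prop-Stt} applied to a finite trace that satisfies the move.
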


\begin{proof}
We use $\Tb$ and $\G$ as superscripts to identify the components of the two games. We first associate to each position $\Phi\in P^\Tb$ a set of positions $\Positions(\Phi)\subseteq P^\G$.
For simplicity, in the argument of $\Positions$, we will omit the formula $\Always\psi$ in every position in $P^\Tb_E$ and also $\Next\Always\psi$ in every position in $P_S^\Tb$, we just refer the set of the remaining formulas in each position. 

The set of initial states $I^\Tb = \{\{\alpha\}\}$ and $\Positions(\alpha)=\{(\epsilon,\epsilon)\}$.

For any given $\Phi\in P_E^\Tb$ we define the set of positions of its $T_S$ and $T_E$-related positions as follows.
For any move $\pi$ in $\TNF(\Phi\And\psi)$ (note that $\{\pi,\Next\Always\psi\}\in P_S^\Tb$ and $(\Phi\cup\{\Always\psi\},\{\pi,\Next\Always\psi\})\in T_E$), we define
\begin{align*}
\Positions(\pi)  = 
\{(\overline{x}\cdot v,\overline{y})\in P_S^\G\mid  (\overline{x},\overline{y})\in \Positions(\Phi) \mbox{ and } v\in\Val_\pi(\MX)\} 
\end{align*}
and for any $\Delta\in\Stt(\pi)$ such that $\Cons(\Delta)$ (note that  $\Delta^\downarrow\cup\{\Always\psi\}\in P_E^\Tb$ and $(\{\pi,\Next\Always\psi\}, \Delta^\downarrow\cup\{\Always\psi\})\in T_S$), we define
\begin{align*}
\Positions(\Delta^\downarrow)  = \{ (\overline{x}\cdot v,\overline{y}\cdot v')\in P_E^\G
\;\mid\; & (\overline{x},\overline{y})\in \Positions(\Phi), v\in\Val_\Delta(\MX)  
\mbox{ and } v'\in\Val_\Delta(\MY)\}.
\end{align*}

For any $\Phi\in P_E^\Tb$, we say that $(\overline{x},\overline{y})\in \Positions(\Phi)$ is a {\it play}, whenever $|\overline{x}| = |\overline{y}| \geq 1$ and $(x_1,y_1)\in\Positions(\{\alpha\})$. Moreover, by Proposition \ref{prop-downarrow}(\ref{b}), for any play $(\overline{x},\overline{y})\in \Positions(\Phi)$ of length $k$:
\begin{equation}
\label{fact1}
\overline{x}+\overline{y}\wdash\alpha\And\Always\psi
\text{ if and only if } x_k+y_k\wdash\Phi\And\psi
\end{equation}

Next, we prove that for all $\Phi\in P_E^\Tb$ and all play $(\overline{x},\overline{y})\in \Positions(\Phi)$: $\Phi\in B^\Tb$ if and only if $(\overline{x},\overline{y})\in B^\G$.

First, consider any $\Phi\in B^\Tb$ and any play $(\overline{x},\overline{y})\in \Positions(\Phi)$. 
 If (i) $\TNF(\Phi\And\psi)$ is not an $\MX$-covering then, by Proposition \ref{prop-non-x-covering}, there exists some $v\in\Val(\MX)$ such that for all $v'\in\Val(\MY)$,
$v + v'\not\wdash\Phi\And\psi$. Otherwise, if (ii) for every  minimal $\MX$-covering $J$ in $\TNF(\Phi\And\psi)$, there exists some $i\in J$ such that $\Incons(\Delta)$ for all $\Delta\in\Stt(\pi_i)$. Then, by Proposition \ref{prop-all-minimal-x-covering}, there exists some $v\in\Val(\MX)$ such that for all $v'\in\Val(\MY)$,
$v + v'\not\wdash\Phi\And\psi$. Therefore, by (\ref{fact1}), 
$\overline{x}\cdot v+\overline{y}\cdot v'\not\wdash \alpha\And\Always\psi$.

Second, consider any $\Phi\not\in B^\Tb$ and any play $(\overline{x},\overline{y})\in \Positions(\Phi)$. Then $\TNF(\Phi\And\psi)$ is an $\MX$-covering and there exists at least one minimal $\MX$-covering $\bigvee_{j\in J}\pi_j$ in $\TNF(\Phi\And\psi)$ such that for all $j\in J$ there exists some $\Delta_j\in\Stt(\pi_j)$ such that $\Cons(\Delta_j)$.
According to Proposition \ref{prop-Stt}, for each $j\in J$ there exists $\lambda_j\in\Val_{\Delta_j}(\MX\cup\MY)$ such that $\lambda_j\wdash\pi_j$ (hence $\lambda_j\wdash\Phi\And\psi$).
Since $J$ is an $\MX$-covering, for all  $v\in\Val(\MX)$ there exists $v'\in\Val(\MY)$ such that $ v+ v'\wdash \Phi\And\psi$. Therefore, since 
$(\overline{x},\overline{y})$ is a play, by (\ref{fact1}), we have that 
$\overline{x}\cdot v+\overline{y}\cdot v'\wdash \alpha\And\psi$
Therefore, $(\overline{x},\overline{y})\not\in B^\G$.\\
Consequently, from any winning strategy for Sally in $\G(\varphi)$ we can construct a winning strategy for Sally in $\Tb(\varphi)$ and vice versa.\qed
\end{proof}

\noindent Since $\G(\varphi)$ is winning for Sally if and only if $\varphi$ is
realizable, the following holds.

\begin{corollary}
  A safety specification $\varphi$ is realizable if and only if
  $\Tb(\varphi)$ is winning for Sally.
\end{corollary}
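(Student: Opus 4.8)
The plan is to obtain the statement as an immediate composition of two equivalences that have already been established, so no new game-theoretic argument is required. All of the genuine work relating the two games has been carried out in the preceding lemma; what remains is only to chain that lemma with the characterization of realizability in terms of the safety specification game $\G(\varphi)$.

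First I would invoke the Definition of Realizability together with Lemma~\ref{lem-safetygames}, which jointly state that $\varphi$ is realizable if and only if $\G(\varphi)$ is winning for Sally. This is the bridge from the semantic notion of realizability (Sally wins the game $(\A(\MX,\MY),W_\varphi)$ on the infinite arena) to the finite-state safety game $\G(\varphi)$ defined via the bad-position set $B$; Lemma~\ref{lem-safetygames} already justifies this reduction using Corollary~\ref{cor:witness}, so I would simply cite it rather than reprove it.

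Second I would apply the lemma immediately preceding the corollary, which asserts that $\G(\varphi)$ is winning for Sally if and only if $\Tb(\varphi)$ is winning for Sally. Composing the two equivalences transitively then yields that $\varphi$ is realizable if and only if $\Tb(\varphi)$ is winning for Sally, which is exactly the claim. The corollary is thus a one-line transitivity argument.

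There is essentially no obstacle here beyond bookkeeping. The only point that needs a moment's care is that ``winning for Sally'' must refer to the same player consistently across $\G(\varphi)$ and $\Tb(\varphi)$; this holds because the $\Positions$ correspondence built in the proof of the preceding lemma maps Sally's strategies in one game to Sally's strategies in the other (and likewise for Eve), so the two occurrences of ``Sally'' in the chained equivalence genuinely denote the same role. Consequently the statement is a direct corollary, with all the real difficulty already absorbed into the game-equivalence lemma and into Lemma~\ref{lem-safetygames}.
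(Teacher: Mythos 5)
Your proposal is correct and matches the paper's own argument exactly: the paper derives this corollary by chaining Lemma~\ref{lem-safetygames} (realizability of $\varphi$ iff $\G(\varphi)$ is winning for Sally) with the immediately preceding lemma ($\G(\varphi)$ winning for Sally iff $\Tb(\varphi)$ winning for Sally). Your added remark that ``Sally'' denotes the same role in both games is consistent with how the paper's $\Positions$ correspondence transfers strategies, so nothing is missing.
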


\subsection{Soundness and Completeness}
\label{subsec:correcntess}

Let us first recall that for every AND-node $n$ that occurs in a
branch of a bunch $H$ and every children $n'$ of $n$, there is a
branch in $H$ that includes $n'$.
On the other hand for OR-nodes there is a possible bunch for each
children.
Consequently, when a tableau $\Tableau(\Phi\cup\{\Always\psi\})$ is
closed (i.e. any possible bunch contains a failure branch) at least
one child of every AND-node produces a closed tableau.

\begin{proposition}
\label{prop-closed-tableau}
Let $\Phi$ be any consistent set of safety formulas such that \break $\TNF(\Phi\And\psi)$ is an $\MX$-covering.
If $\Tableau(\Phi\cup\{\Always\psi\})$ is closed, then
for any AND-node $\{\bigvee_{i\in I}\pi_i,\Always \psi\}$ (where $I$ is a minimal $\MX$-covering) there exists at least one child $i\in I$ such that 
$\Tableau(\{\pi_{i}, \Next\Always\psi\})$ is also closed.
\end{proposition}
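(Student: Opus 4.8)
The plan is to isolate a single structural fact about how \emph{open} and \emph{closed} (Definition~\ref{def-bunch}) propagate through OR-nodes and $(\Always\&)$-nodes, and then simply read the claim off the shape of $\Tableau(\Phi\cup\{\Always\psi\})$ near its root. Concretely, I would first record a propagation lemma: for a node $n$ with children $n_1,\dots,n_k$ and with $T_j$ the subtableau rooted at $n_j$, if the $n_j$ are OR-siblings then the subtableau rooted at $n$ is open iff \emph{some} $T_j$ is open, whereas if they are $(\Always\&)$-siblings then it is open iff \emph{every} $T_j$ is open. Dualising (closed means not open) yields exactly what I need: an OR-node is closed iff all its children are closed, and an $(\Always\&)$-node is closed iff at least one child is closed.

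To prove the propagation lemma I would argue directly from the definition of bunch. For the OR-case, given an all-successful bunch $H$ witnessing openness of the subtree at $n$, its restriction to the branches passing through a fixed child $n_j$ is again a bunch of $T_j$: the $(\Always\&)$-closure condition is inherited because every AND-node below $n_j$ and all of its $(\Always\&)$-successors are themselves descendants of $n_j$, so the sibling branches required by the closure condition already lie on the $n_j$-side of $H$. Since $H$ is nonempty, at least one such restriction is a nonempty all-successful bunch, so some $T_j$ is open; conversely, any all-successful bunch of a single $T_j$ prepended with $n$ is an all-successful bunch of the subtree at $n$, as $n$ being an OR-node imposes no new closure constraint. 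The AND-case differs only in that $n$, being an $(\Always\&)$-node, lies on every branch through it, so any bunch of the subtree at $n$ must contain branches through \emph{every} child $n_j$; hence an all-successful bunch restricts to a nonempty all-successful bunch of each $T_j$ (all open), and conversely $\bigcup_j \{\, n\cdot b \mid b\in H_j \,\}$ for all-successful bunches $H_j$ of the $T_j$ satisfies the closure condition at $n$ and below, so it is an all-successful bunch.

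With the lemma available the proposition is immediate. Since $\Tableau(\Phi\cup\{\Always\psi\})$ is closed it is not a success leaf (a single successful branch would be an all-successful bunch, making it open), and since $\Phi$ is consistent it is not a failure leaf; because $\TNF(\Phi\And\psi)$ is an $\MX$-covering the rule $(\Always\cF)$ does not apply either. Hence the root is expanded by the Always Rules: $(\Always\|)$ produces one OR-child per minimal $\MX$-covering $J_1,\dots,J_m$ of $\TNF(\Phi\And\psi)$, and each such OR-child $\{\bigvee_{i\in J_k}\pi_i,\Always\psi\}$ is the $(\Always\&)$-node whose AND-children are exactly the $\{\pi_i,\Next\Always\psi\}$ for $i\in J_k$ (when $\TNF(\Phi\And\psi)$ is already minimal there is a single such AND-node, obtained from the root by rewriting $\Phi\And\psi$ to its $\TNF$). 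By the OR-part of the lemma, closedness of the root forces \emph{every} OR-child, hence every minimal-$\MX$-covering node $\{\bigvee_{i\in I}\pi_i,\Always\psi\}$, to be closed; and by the AND-part, a closed $(\Always\&)$-node has at least one closed child $\Tableau(\{\pi_i,\Next\Always\psi\})$, which is precisely the assertion.

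The step I expect to be the main obstacle is the bookkeeping in the propagation lemma, specifically verifying that restricting an all-successful bunch of the whole tree to the branches through a chosen child yields a \emph{legitimate} bunch of the subtableau rooted at that child, i.e. that the $(\Always\&)$-closure condition survives restriction. This relies on the tree structure (every $(\Always\&)$-successor of an AND-node below $n_j$ is again a descendant of $n_j$, so no witnessing sibling branch is lost) together with the convention that the success-leaf test of Definition~\ref{def-tableau} is evaluated relative to the ancestors carried along the branch, so that a node's status as success or failure leaf is the same whether it is viewed inside the full tableau or inside the rooted subtableau $\Tableau(\{\pi_i,\Next\Always\psi\})$.
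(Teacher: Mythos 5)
Your proposal is correct and rests on the same core fact as the paper's own proof: all-successful bunches of the AND-children of a minimal $\MX$-covering node combine into an all-successful bunch of the whole tableau, so a closed tableau forces at least one closed child at every such AND-node. The paper compresses exactly this into a two-line contradiction argument (``if every $\Tableau(\{\pi_i,\Next\Always\psi\})$ were open, their bunches of successful branches would yield one for $\Tableau(\Phi\cup\{\Always\psi\})$''), whereas you package it, together with its dual restriction direction and the OR-level propagation through the root, into an explicit propagation lemma; the bookkeeping point you flag --- that success leaves are evaluated relative to the full branch, so subtree and whole-tableau leaf status coincide --- is likewise implicitly assumed in the paper's argument.
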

\begin{proof}
Suppose that there exists an AND-node $\{\bigvee_{i\in I}\pi_i,\Always \psi\}$ such that every $\Tableau(\{\pi_{i}, \Next\Always\psi\})$ is open for all $i\in I$. Then, each $\Tableau(\{\pi_{i}, \Next\Always\psi\})$ has a bunch of successful branches, hence  $\Tableau(\Phi\cup\{\Always\psi\})$ has also a bunch of successful branches, which contradicts our hypothesis.\qed
\end{proof}

\begin{theorem}
For any given safety specification $\varphi$, the completed tableau  $\Tableau(\varphi)$ is closed if and only if the game $\Tb(\varphi)$ is winning for Eve.
\end{theorem}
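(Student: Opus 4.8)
The plan is to prove the two one-directional implications ``$\Tableau(\varphi)$ closed $\Rightarrow$ $\Tb(\varphi)$ is winning for Eve'' and ``$\Tableau(\varphi)$ open $\Rightarrow$ $\Tb(\varphi)$ is winning for Sally'': since every completed tableau is either open or closed, and safety games are determined (Eve wins exactly when Sally does not), these two implications together yield the stated equivalence. I would read the required strategy off the tableau, using the fact that the recursion tree of Alg.~\ref{fig-algorithm} mirrors the AND-OR shape of $\Tableau(\varphi)$. After collapsing the intermediate saturation and $(\Next)$ steps, a path from one $\{\Phi,\Always\psi\}$-node to the next $\{\Phi',\Always\psi\}$-node realizes exactly one game round: the $(\Always\|)$-choice of a minimal $\MX$-covering is an \emph{existential} branching (the tableau is open when \emph{some} covering is), while the $(\Always\&)$-branching over the moves of that covering is a \emph{universal} branching (open when \emph{all} of them are), and the saturation into a consistent $\Delta\in\Stt(\pi)$ followed by $(\Next)$ produces the successor position $\Delta^{\downarrow}\cup\{\Always\psi\}$. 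This existential/universal pattern is precisely what the bunch condition of Definition~\ref{def-bunch} records.

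For the first implication I would build a winning strategy for Eve. By Proposition~\ref{prop-closed-tableau}, at every AND-node of a closed tableau at least one $(\Always\&)$-child $\{\pi_i,\Next\Always\psi\}$ has a closed sub-tableau, and since closure of the whole tableau forces every $(\Always\|)$-covering to be closed, Eve always has a move that keeps the play inside a closed sub-tableau. The two ways a branch can close are exactly the two bad-position conditions: a $(\Always\cF)$-leaf witnesses that $\TNF(\Phi\And\psi)$ is not an $\MX$-covering (Proposition~\ref{prop-non-x-covering}, condition (i)), and a covering all of whose minimal sub-coverings contain a move with only inconsistent saturations witnesses condition (ii) (Proposition~\ref{prop-all-minimal-x-covering}). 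I would then argue that any play consistent with this strategy stays in the closed part of the tableau, which is finite and carries no $\lessdot$-subsuming ancestor on its branches, so the play reaches a failure leaf after boundedly many rounds; translating that leaf back through the correspondence gives a position in $B$, so Eve wins the safety game.

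For the second implication I would use an open bunch $H$ of successful branches as Sally's strategy. Whatever environment move $\pi$ is played corresponds to an $(\Always\&)$-child that, by the defining property of a bunch, lies on some branch of $H$; that branch prescribes the consistent $\Delta\in\Stt(\pi)$ that Sally answers with, and Proposition~\ref{prop-downarrow}(b) guarantees that this keeps the accumulated finite trace a pre-witness of $\alpha\And\Always\psi$. The success leaves are the delicate ingredient: such a branch ends at a node $n$ with an ancestor $n_0$ satisfying $\tau(n_0)\lessdot\tau(n)$, and I would invoke Proposition~\ref{prop-lessdot} (together with the witness characterisation of Corollary~\ref{cor:witness}) to loop Sally's play back to $n_0$, folding each finite successful branch into an infinite play that never enters $B$. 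Since no branch of $H$ ends in a failure leaf, Sally following $H$ avoids the bad positions forever and therefore wins.

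The step I expect to be the main obstacle is reconciling the finite, loop-detecting tableau with the genuinely infinite plays of the safety game. On the Sally side I must verify that the order $\lessdot$ is strong enough that folding a successful branch into a loop produces a play \emph{all} of whose visited positions are non-bad, not merely one whose finite prefix is a pre-witness; on the Eve side I must verify, symmetrically, that the absence of a subsuming ancestor along closed branches really forces a bad position after finitely many rounds rather than allowing an infinite safe play to escape. Getting the quantifier alternation at the $(\Always\|)/(\Always\&)$ interface exactly right---so that ``some covering is open'' matches ``the realizing player has a safe choice'' and ``every move of the covering is open'' matches ``safety against every environment alternative''---is the other place where care is needed, and it is also where the appeal to determinacy must be checked to be sound.
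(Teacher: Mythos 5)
Your proposal is correct and follows essentially the same route as the paper's own proof: the same two implications combined via the open/closed dichotomy, Eve's strategy built from Proposition~\ref{prop-closed-tableau} with failure leaves matched to the bad-position conditions via Propositions~\ref{prop-non-x-covering} and~\ref{prop-all-minimal-x-covering}, and Sally's strategy read off an open bunch $H$ with the $\lessdot$-based loop-back at success leaves. The only cosmetic difference is that the paper keeps this argument entirely inside the tableau game (it just notes that $H$ and $B^{\Tb}$ are disjoint and that Eve's plays terminate in $B^{\Tb}$), whereas you additionally invoke the trace-level facts (Proposition~\ref{prop-downarrow}(b), Corollary~\ref{cor:witness}), which the paper reserves for the separate lemma relating $\Tb(\varphi)$ to $\G(\varphi)$.
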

\begin{proof}

To prove the left-to-right implication of the theorem, suppose that $\Tableau(\varphi)$ is closed. Therefore, any bunch of $\Tableau(\varphi)$  contains at least one failure branch.

We define a winning strategy for Eve $\rho: P_E \rightarrow P_S$ in the game $\Tb(\varphi)$. For each position $\Phi\cup\{\Always\psi\}\in P_E$, if $\TNF(\Phi\And\psi)$ is not and $\MX$-covering, then $\Phi\cup\{\Always\psi\}\in B^\Tb$ and then Eve is winning. Otherwise, by Proposition \ref{prop-closed-tableau}, for any minimal $\MX$-covering there must exists some move $\pi$ in $\TNF(\Phi\And\psi)$ such that $\Tableau(\{\pi, \Next\Always\psi\})$ is also closed.
Hence, we select an arbitrary minimal $\MX$-covering and one such move $\pi$ and define
$\rho_E(\Phi\cup\{\Always\psi\}) = \{\pi,\Next\Always\psi\}$.
Then, for every $\Delta\in\Stt(\pi)$, either $\Delta^\downarrow\cup\{\Always\psi\}\in B^\Tb$ or $\Delta^\downarrow\cup\{\Always\psi\}\in P_E$ and every
$\Tableau(\Delta^\downarrow\cup\{\Always\psi\})$ is closed. Therefore, any bunch of all these sub-tableaux has also a failure branch.\\
In order to prove that $\rho_E$ is winning for Eve, it is enough to show that for every initial play $\delta$ played according to $\rho_E$ is winning for Eve. It is easy to see that every initial play $\delta=\delta(0),\delta(1),\dots$ such that $\delta(i+1) = \rho_E(\delta(i))$ for all $\delta(i)\in P_E$ corresponds to a failure branch. Indeed, $\delta(0) =\{\alpha,\Always\psi\}$ is in $B^\Tb$ or in $P_E$, and $\Tableau(\delta(i))$ is closed for all $i\geq 0$. Then, every initial play $\delta$ played according to $\rho_E$ is finite and $\delta=\delta(0),\delta(1),\dots,\delta(k)$ for some $k\geq 0$ such that $\delta(k)\in B^\Tb$.

Conversely, suppose that $\Tableau(\varphi)$ is open, then there exists at least one bunch $H$ such that all the leaves in $H$ are successful terminal nodes in $\Tableau({\{\alpha,\Always\psi\}})$.
Note that, for every AND-node $\{\bigvee_{i\in I}\pi_i,\Always \psi\}\in H$ and every $i\in I$,
the node $\{\pi_{i}, \Next\Always\psi\}\in H$ and
$\Tableau(\{\pi_{i}, \Next\Always\psi\})$ is open.
Moreover, every branch of $H$ is ended by a node labelled by a $\Sigma$ such that $\Sigma_0\lessdot \Sigma$ for some $\Sigma_0\in H$.
Therefore, for each $\{\pi_{i}, \Next\Always\psi\}\in H$ there exists at least one $\Delta\in\Stt(\pi_i)$ such that $\Cons(\Delta)$ and one of the following two cases holds:
\begin{itemize}
    \item $\Delta^\downarrow\cup\{\Always\psi\}\in H$, or
    \item $\Phi_0\lessdot\Phi$ for some $\Phi_0\in H$ and some $\Phi\subseteq\Delta$. In this case, there also exists some $\Delta_0\in\Stt(\Phi_0)$ such that $\Delta_0^\downarrow\cup\{\Always\psi\}\in H$.
\end{itemize}
Then, according to $H$, we can define a winning strategy for Sally $\rho_S: P_S \rightarrow P_E$ in the game $\Tb(\varphi)$ as follows. 
For each position $\{\pi,\Next\Always\psi\}\in P_S$, we define
$\rho_S(\{\pi,\Next\Always\psi\}) = \Delta^\downarrow\cup\{\Always\psi\}$ for some chosen $\Delta$ such that $\Delta^\downarrow\cup\{\Always\psi\}\in H$.\\
Since $H$ and $B^\Tb$ are trivially disjoint, it is obvious that $\delta(i)\in P_S\setminus B^\Tb$ for every initial play $\delta=\delta(0),\delta(1),\dots$ played according to $\rho_S$ and all $i\in\nat$. Therefore, $\rho_S$ is winning for Sally.\qed
\end{proof}

The following follows immediately

\begin{corollary}
\label{cor-open-tableau}
A safety specification $\varphi$ is realizable if and only if
the completed tableau for $\Tableau(\varphi)$ is open. Moreover, any bunch in $\Tableau(\varphi)$ such that all its leaves are successful represents a winning strategy for Sally. 
\end{corollary}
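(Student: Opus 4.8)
The plan is to chain together the results already established in this section with the memory-less determinacy of safety games, and then to read the strategy directly off the proof of the preceding theorem; the statement should follow almost immediately. First I would record the basic dichotomy on the tableau side: by Definition~\ref{def-bunch} a completed tableau is \emph{open} precisely when it is \emph{not closed}, since it is open iff it contains a bunch all of whose branches are successful, and closed iff every bunch contains a failure branch. This makes ``open'' the exact negation of ``closed'' for a completed tableau.

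Next I would assemble the chain of equivalences. By Lemma~\ref{lem-safetygames}, $\varphi$ is realizable iff $\G(\varphi)$ is winning for Sally, and by the lemma relating the two games (equivalently, the corollary stating that $\varphi$ is realizable iff $\Tb(\varphi)$ is winning for Sally) this holds iff $\Tb(\varphi)$ is winning for Sally. Since $\Tb(\varphi)$ is a finite-state safety game it is determined, so Sally wins iff Eve does not. The previous theorem states that $\Tableau(\varphi)$ is closed iff $\Tb(\varphi)$ is winning for Eve; negating both sides and applying determinacy yields that $\Tableau(\varphi)$ is open iff $\Tb(\varphi)$ is winning for Sally iff $\varphi$ is realizable. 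This settles the first claim.

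For the \emph{Moreover} clause I would revisit the converse direction of the preceding theorem, where an open $\Tableau(\varphi)$ is used to produce a bunch $H$ of successful branches and, from it, a winning strategy $\rho_S$ for Sally in $\Tb(\varphi)$: for each Sally position $\{\pi,\Next\Always\psi\}$ met in $H$ one selects a consistent saturated $\Delta\in\Stt(\pi)$ with $\Delta^\downarrow\cup\{\Always\psi\}\in H$. Composing $\rho_S$ with the position map $\Positions$ from the lemma relating $\Tb(\varphi)$ and $\G(\varphi)$ transports it into a winning strategy for Sally in $\G(\varphi)$, which by Lemma~\ref{lem-safetygames} is exactly a system that realizes $\varphi$; hence the bunch $H$ itself encodes the realizability witness.

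The equivalence is genuinely immediate, so the only step requiring any care is this \emph{Moreover} clause, and even there the work has essentially been done in the two preceding proofs. What I would merely verify is that $\rho_S$ is total on every reachable Sally position (so no play consistent with it gets stuck) and that transporting it through $\Positions$ respects the alternation of moves and the environment's freedom to pick any valuation in $\Val(\MX)$; once these objects are checked to line up, the claim that an all-successful bunch represents a winning strategy for Sally follows.
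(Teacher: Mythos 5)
Your proposal is correct and follows essentially the same route as the paper: the paper derives this corollary ``immediately'' from the preceding theorem ($\Tableau(\varphi)$ closed iff $\Tb(\varphi)$ winning for Eve), the corollary that $\varphi$ is realizable iff $\Tb(\varphi)$ is winning for Sally, determinacy of safety games, and the fact that open is by definition the negation of closed for a completed tableau. Your treatment of the \emph{Moreover} clause---reading the strategy $\rho_S$ off the bunch $H$ as in the theorem's converse direction and transporting it to $\G(\varphi)$ via the $\Positions$ map---is exactly the construction the paper relies on.
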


In the following example we illustrate Corollary \ref{cor-open-tableau} providing a tableau from which the winning strategy can be extracted.

\begin{example}
\label{ex-synthesis}
We consider a variant of a synthesis problem about a simple arbiter presented in~\cite{FBS13}. The arbiter receives requests from two clients, represented by
two environment variables $\MX = \{r_1, r_2\}$, and responds by assigning
grants, represented by two system variables $\MY  = \{g_1, g_2\}$.
We specify that each request should eventually be followed
by a grant in at most three second, and that the two grants should never be assigned
simultaneously.

\begin{sidewaysfigure}
\hspace{0.5cm}
\tikzstyle{level 1}=[level distance=0.6cm, sibling distance=2.5cm]
\tikzstyle{level 2}=[level distance=1.5cm, sibling distance=4.8cm]
\tikzstyle{level 3}=[level distance=0.8cm]
\tikzstyle{level 4}=[level distance=0.6cm, sibling distance=2.5cm]
\tikzstyle{level 5}=[level distance=1.3cm,sibling distance=4.3cm]
\tikzstyle{level 6}=[level distance=0.8cm]
\tikzstyle{level 7}=[level distance=0.6cm, sibling distance=2.5cm]
\tikzstyle{level 8}=[level distance=1.3cm,sibling distance=4.3cm]
\tikzstyle{level 9}=[level distance=0.6cm, sibling distance=2.5cm]
\tikzstyle{level 11}=[level distance=1.3cm,sibling distance=4.3cm]
\tikzstyle{level 12}=[level distance=0.8cm]
\hspace{-2cm}
\begin{tikzpicture}
\scriptsize
\node (n1)  {$n_1: \; \Always\psi \quad$}
	child {node (nz) {$ C_1, \Always\psi $}
    child {node (n10) {$n_2:\;  r_1 \And \Not r_2 \And g_1 \And \Not g_2,  \Next\Always\psi $}
       child {node (n101) {$n_6: \Always\psi $}
    	}
    }
    child {node (n11) {$n_3: \; \Not r_1 \And r_2 \And \Not g_1 \And g_2,  \Next\Always\psi \qquad\qquad$}
    	child {node (n111) {$n_7: \Always\psi $}
    	}
    }
    child {node (n12) {$\qquad\qquad\qquad  n_4: \; r_1 \And r_2 \And \Not g_1 \And g_2 \And \Next\Eventually_{[0,2]}\,g_1,  \Next\Always\psi \qquad\qquad\qquad\qquad $}
           child {node (n121)  {$n_8: \; \Eventually_{[0,2]}\,g_1, \Always\psi \qquad\qquad$}
           child {node (nz2) {$ C_2, \Always\psi  $}
              child {node (n1211)  {$ m_1: \; r_1 \And \Not r_2 \And  g_1 \And \Not g_2 , 
                \Next\Always\psi \qquad\qquad\qquad\qquad$} 
                child {node (n12111) {$n_9: \Always\psi $}
    			}
            }
            child {node (n1212)  {$ m_2, m_3: \; r_2 \And g_1 \And \Not g_2 \And \Next\Eventually_{[0,2]}\,g_2, \Next\Always\psi \qquad\quad$}
              child {node (n12121)  {$n_{10}: \; \Eventually_{[0,2]}\,g_2 ,  \Always\psi \qquad\qquad$}
              child {node (nz4) {$ C_3, \Always\psi  $}
              	child {node (mf1)  {$m'_1:\; \Not r_1 \And  r_2 \And \Not g_1 \And g_2 , \Next \Always \psi \qquad\qquad\qquad\qquad$}
              		child {node (mf11) {$n_{11}: \Always\psi $}
    				}
           }
               child {node (mf2)  {$m'_2, m'_3: \; r_1  \And \Not g_1 \And g_2 \And \Next \Eventually_{[0,2]}\,g_1 , \Next \Always \psi \qquad\quad$}
           child {node (mf21)  {$ n_{12}:\;  \Eventually_{[0,2]}\,g_1 , \Always \psi $}
           }
           }
           		child {node (mf3)  {\hspace{1.5cm} \colorbox{gray!30}{$m'_4: $} $\; \Not r_1 \And \Not r_2 \And \Not g_1 \And g_2 \And \Next \Not g_2 , \Next \Always \psi $}
              		child {node (mf31)  {$ n_{13}:\; \Not g_2 ,  \Always \psi $}
              		   child {node (nz6) {$ C_4, \Always\psi  $}
              			child {node (mf311)  {$ m''_1:\; r_1 \And  \Not r_2 \And g_1 \And \Not g_2 , \Next\Always \psi  \qquad\qquad\qquad\qquad$}
              				child {node (mf3111)  {$ n_{14}:\; \Always \psi $}
           	   				}
           	   			}
           	   			child {node (mf312)  {$ m''_2, m''_3:\;  r_2 \And g_1 \And \Not g_2 \And  \Next \Eventually_{[0,2]}\,g_2,  \Next\Always \psi \qquad\quad$}
           	   				child {node (mf3121)  {$ n_{15}:\; \Eventually_{[0,2]}\,g_2, \Always \psi $}
           	   				}
           	   			}
           	   			child {node (mf313)  {$\hspace{1cm} m''_4: \; \Not r_1 \And \Not r_2 \And \Not g_2 \And \Next \Not g_2 ,  \Next\Always \psi $}
           	   				child {node (mf3131)  {$ n_{16}:\; \Not g_2, \Always \psi $}
           	   				}
           	   			}
           	   			}
           	   			child {node (nz7) {$ \;$}
           				}
           	   		}
           		}
           		}
           		child {node (nz5) {$ \;$}
           		}
         	}
           }
           child {node (n1213)  {\hspace{1.5cm}  \colorbox{gray!30}{$m_4: $} $ \; \Not r_1 \And \Not r_2 \And  g_1 \And \Not g_2 \And \Next \Not g_2,  \Next\Always\psi $} 
            }
            }
            child {node (nz3)  {$\;$}
    		}
           }
    }
    child {node (n14)  {\colorbox{gray!30}{$n_5: $} $ \; \Not r_1 \And \Not r_2 \And \Not g_1 \And \Next \Not g_2 ,   \Next\Always\psi$}
    }
   }
   child {node (nz1) {$\; $}
   };nz1

\draw  (n1) edge[right]  node {\hspace{0.3cm}\tiny $(\Always\|)$} (nz);     
\draw  (nz) edge[above]  node {\hspace{2.2cm}
\tiny $(\Always\&)$} (n11); 
\path (nz)  -- coordinate[midway] (P11) (n10);
\path (nz) -- coordinate[midway] (P12) (n14);
\draw (P11) to[bend right=10]   (P12);

\draw  (n10) edge[right]  node {\tiny $(\And) + (\Next)$} (n101); 
\draw  (n11) edge[right]  node {\tiny $(\And) + (\Next)$} (n111); 
\draw  (n12) edge[right]  node {\tiny $(\And) + (\Next)$} (n121);

\draw  (n1211) edge[right]  node {\tiny $(\And) + (\Next)$} (n12111);
\draw  (n1212) edge[right]  node {\tiny $(\And) + (\Next)$} (n12121);

\draw  (mf1) edge[right]  node {\tiny $(\And) + (\Next)$} (mf11);
\draw  (mf2) edge[right]  node {\tiny $(\And) + (\Next)$} (mf21);

\draw  (mf311) edge[right]  node {\tiny $(\And) + (\Next)$} (mf3111);
\draw  (mf312) edge[right]  node {\tiny $(\And) + (\Next)$} (mf3121);
\draw  (mf313) edge[right]  node {\tiny $(\And) + (\Next)$} (mf3131);

\draw  (n121) edge[right]  node {\hspace{0.3cm}
\tiny $(\Always\|)$} (nz2); 
\draw  (nz2) edge[above]  node {\hspace{4cm}
\tiny $(\Always\&)$} (n1211); 
\path (nz2)  -- coordinate[midway] (Q11) (n1211);
\path (nz2) -- coordinate[midway] (Q12) (n1213);
\draw (Q11) to[bend right=10]   (Q12);

\draw  (n12121) edge[right]  node {\hspace{0.3cm}
\tiny $(\Always\|)$} (nz4); 
\draw  (nz4) edge[above]  node {\hspace{3.8cm}
\tiny $(\Always\&)$} (mf1); 
\path (nz4)  -- coordinate[midway] (R11) (mf1);
\path (nz4) -- coordinate[midway] (R12) (mf3);
\draw (R11) to[bend right=10]   (R12);
      
\draw  (mf31) edge[right]  node {\hspace{0.3cm}
\tiny $(\Always\|)$} (nz6);
\draw  (nz6) edge[above]  node {\hspace{3.7cm}
\tiny $(\Always\&)$} (mf311); 
\path (nz6)  -- coordinate[midway] (S11) (mf311);
\path (nz6) -- coordinate[midway] (S12) (mf313);
\draw (S11) to[bend right=10]   (S12);
            
\draw[draw = gray]  [->] (mf11.south) to [out=190,in=-230] (n12121.north);  
\draw[draw = gray]  [->] (mf3131.south) to [out=-10,in=50] (mf31.north);  
\draw[draw = gray]  [->] (mf3121.south) to [out=-10,in=50] (n12121.north);  
\draw[draw = gray]  [->] (mf21.south) to [out=-10,in=50] (n121.north); 
\draw[draw = gray]  [->] (n101.south) to [out=190,in=-230] (n1.north); 
\draw[draw = gray]  [->] (n111.south) to [out=190,in=-230] (n1.north); 
\draw[draw = gray]  [->] (n12111.south) to [out=190,in=-230] (n121.north);  
\draw[draw = gray]  [->] (mf3111.south) to [out=190,in=-230] (mf31.north);  
\end{tikzpicture}
\caption{Open tableau for $\Always((r_1 \Impl  \Eventually_{[0,3]}\, g_1) \wedge (r_2  \Impl \Eventually_{[0,3]}\, g_2) \And $ 
$ \Not(g_1 \And g_2) \And  ((\Not r_1 \And \Not r_2) \Impl \Next \Not g_2)).$}  
\label{fig:tableau-synthesis1}
\vspace{2cm}
\end{sidewaysfigure}

We assume that initially there are not requests and impose an additional requirement to hinder the winning strategy.
The safety specification is as follows.
$$\Always\psi =  \Always((r_1 \Impl \Eventually_{[0,3]}  g_1) \wedge (r_2  \Impl \Eventually_{[0,3]}\, g_2) \And 
\Not(g_1 \And g_2) \And ((\Not r_1 \And \Not r_2) \Impl \Next\Not g_2))$$
In Figure~\ref{fig:tableau-synthesis1} we show an open  tableau for $\Always\psi$, whose construction starts with $C_1$,
the weakest minimal $\MX$-covering in $\TNF(\psi)$, composed by the labels of nodes $n_2$, $n_3$, $n_4$ and $n_5$.
At node  $n_8$, the 
weakest 
minimal $\MX$-covering in $\TNF( \Eventually_{[0,2]}\,g_1 \And  \psi )$ is $C_2$, which has the following four moves:
\begin{eqnarray*}
 & & (r_1 \And \Not r_2 \And g_1 \And \Not g_2) \Or (\Not r_1 \And r_2 \And g_1 \And \Not g_2 \And \Next\Eventually_{[0,2]}\,g_2) \Or  \\
 & & (r_1 \And r_2 \And g_1 \And \Not g_2 \And \Next\Eventually_{[0,2]}\,g_2 ) \Or (\Not r_1 \And \Not r_2 \And  g_1 \And \Not g_2 \And \Next \Not g_2)
\end{eqnarray*} 
that we name by $m_1, m_2, m_3, m_4$ in the given order. They are represented in the nodes of the tableau with the same name (in abuse of notation). 
Note that, for simplicity, we group $m_2$ and $m_3$ in the same node that omits the value of $r_1$, which is the only difference between both moves.
At node $n_{10}$, $C_3$, is  the weakest minimal $\MX$-covering in $\TNF( \Eventually_{[0,2]}\,g_2 \And  \psi )$. It  has four moves $m'_1, m'_2, m'_3, m'_4$ but $m'_2$ and $m'_3$ has been grouped.
%
And similarly, at node  $n_{13}$, where $\TNF( \neg g_2 \And  \psi )$ provides $C_4$, with the moves $m''_1, m''_2, m''_3, m''_4$.
%
Note that nodes $m'_4$, $m_4$ and $n_5$ share the same strict-future formula.
Hence, to save space, we do not depict the expansion of nodes  $m_4$ and $n_5$ since it repeats the tableau behind node $m'_4$. All in all, the completed tableau for the input specification is open.
Moreover, the winning strategy represented by the finite state machine in Figure~\ref{fig:MealyMachine} can be synthesised from this tableau.

\begin{figure}[H]
\centering
\begin{tikzpicture}[->,>=stealth',shorten >=2pt,auto,node distance=3cm,initial text=$ $,
                    semithick]
  \tikzstyle{every state}=[fill=gray!20]

  \node[initial below, state] (A)       {$n_1$};
  \node[state]         (B) [above of=A] {$n_8$};
  \node[state]         (C) [right of=B] {$n_{10}$};
  \node[state]         (D) [right of=A] {$m'_4$};  
 
  \path (A) edge[loop left]   node {$\small{\begin{aligned}
      r_1 \overline{r_2}/ g_1 \overline{g_2}\\[-0.7ex]
      \overline{r_1} r_2/ \overline{g_1} g_2 
    \end{aligned}}$}  (A)
        
        (A) edge [bend left, left] node {$\small{r_1 r_2/ \overline{g_1} g_2 
        }$} (B)
            
        (A) edge[bend right,  below]  node {$\small{\overline{r_1}\overline{r_2}/\overline{g_1}}$} (D)
        
        (B) edge[loop left]   node {$\small{
      r_1 \overline{r_2}/ g_1 \overline{g_2}}$}  (B)
        
        (B) edge[bend left,  above]  node {$ \small{r_2/ g_1  \overline{g_2} 
        }$} (C)
        
        (C) edge[bend left, below ]  node {$\small{r_1/ \overline{g_1} g_2 
        } $} (B)		
		
        (C) edge[loop right]   node {$\small{
      \overline{r_1} r_2/ \overline{g_1} g_2}$}  (C)
      
       (C) edge[bend left,  right]  node {$\small{\overline{r_1} \overline{r_2}/ \overline{g_1}  g_2 
       }$} (D)
       
        (B) edge[bend right, left]  node {$\small{\overline{r_1}\overline{r_2}/\overline{g_1}g_2}$} (D)
        
        (D) edge[bend left, left]  node {$\small{r_2/ g_1 \overline{g_2} 
        }$} (C)
        
        (D) edge[loop right]   node {$\begin{aligned}
      r_1 \overline{r_2}/ g_1 \overline{g_2}\\[-0.7ex]
      \overline{r_1} \overline{r_2}/ \overline{g_2} 
      \end{aligned}$}  (D);

\end{tikzpicture}
\caption{Finite state machine that represents the strategy synthesised from the open tableau in Figure~\ref{fig:tableau-synthesis1}}
\label{fig:MealyMachine}
\end{figure}
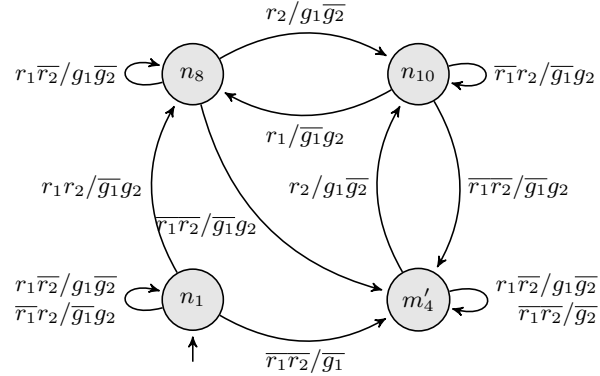
\end{example}

In Figure~\ref{fig:MealyMachine}, the machine nodes has been intentionally named as some of the tableau nodes, to make easier to see the correspondence between the transition edges labels of the form Eve/Sally and the moves in the tableau.
The forward slash separates environment variables from system variables. The negation operator is represented with a top line and the $\And$-operator has been omitted.


\section{Conclusions}
\label{sec:conclusions}

We have introduced the first tableau method to decide realizability of
temporal formulas.
Our tableau method allows to synthesize a system when the
specification is realizable.
Our realizability tableau method is based on the novel notion terse
normal form (\TNF) of formulas that is crucial in the tableau
formulation.
Our realizability tableau rules make use of the terse normal form to
precisely capture the information that each player (environment and
system) has to reveal at each step.
We have proved soundness and completeness of the proposed
method.
Our proofs imply a method to synthesize a correct system for a
realizable specification, as illustrated that in Example
\ref{ex-synthesis}.

Future work includes the implementation of the method presented in
this paper and to experiment with the resulting prototype in a
collection of benchmarks.
We also plan to extend the method to more expressive languages,
including the handling of richer propositional languages (like numeric
variables) by combining realizability tableau rules with tableau
reasoning capabilities for these domains.
This has been illustrated by the handling of enumerated types in this
paper.
Another interesting extension is a deeper analysis, including new
rules, to handle upper and lower bounds of intervals in temporal
operators, for example to accelerate a branch to reach the lower bound
$a$ of an $\Always_{[a,b]}$ operator.
We would like to ultimately extend our tableau method to richer
fragments of LTL.


\bibliographystyle{plain}
\bibliography{references}
\end{document}